\documentclass[11pt]{article}

\newif\ifjrss
\jrsstrue

\setlength{\textwidth}{6.5in}
\setlength{\textheight}{8.75in}
\setlength{\oddsidemargin}{-0.1in}
\setlength{\topmargin}{-0.1in}

\usepackage{amssymb,amsbsy,amsfonts,amsmath,amsthm,xspace}
\usepackage{lscape}
\usepackage[round]{natbib}
\usepackage{graphicx}
\usepackage{subcaption}

\usepackage{color}
\usepackage[hidelinks]{hyperref}
\usepackage{bm}
\usepackage{comment}
\usepackage[ruled,linesnumbered]{algorithm2e}
\usepackage[doublespacing]{setspace}
\usepackage{booktabs}
\usepackage{authblk}






\newcommand{\bc}{\bm{c}}

\newcommand{\bu}{\bm{u}}



\newcommand{\bbeta}{\bm{\beta}}



\newcommand{\bbE}{\mathbb{E}}

\newcommand{\bbP}{\mathbb{P}}

\newcommand{\bbR}{\mathbb{R}}


\newcommand{\cC}{\mathcal{C}}

\newcommand{\cN}{\mathcal{N}}




\theoremstyle{definition}
\newtheorem{theorem}{Theorem}
\newtheorem{definition}{Definition}
\newtheorem{proposition}{Proposition}
\newtheorem{corollary}{Corollary}
\newtheorem{lemma}{Lemma}
\newtheorem{remark}{Remark}

\newtheorem{assumption}{Assumption}



\newcommand{\convd}{\stackrel{d}{\longrightarrow}}
\newcommand{\convp}{\stackrel{P}{\longrightarrow}}

\newcommand{\sfvar}{\mathsf{Var}}

\newcommand{\cov}{\mathsf{Cov}}




\newcommand{\inprob}{\overset{P}{\rightarrow}}
\newcommand{\tabby}{\hspace{10pt}}

\newcommand{\tp}{\intercal}

\newcommand{\prob}{\mathbb{P}}

\bibliographystyle{abbrvnat}
\pagestyle{plain}


\newcommand{\hui}[1]{\textcolor{red}{#1}}


\begin{document}


\title{Minority representation and fairness in network ranking: An application to school contact diary data}

\author[1]{Hui Shen}
\author[2]{Peter W. MacDonald}
\author[3]{Eric D. Kolaczyk}
\affil[1]{McGill University, \url{hui.shen2@mail.mcgill.ca}}
\affil[2]{University of Waterloo, \url{pwmacdonald@uwaterloo.ca}}
\affil[3]{McGill University, \url{eric.kolaczyk@mcgill.ca}}
\date{}

\maketitle

\begin{abstract}
    Considerations of bias, fairness and representation are a prerequisite of responsible modern statistics.
    In statistical network analysis, observed networks are often incomplete or systematically biased, which can lead to systematic underrepresentation of protected groups, and affect any downstream ranking or decision based on the observed network. 
    In this paper, we study a high school contact network constructed from self-reported contact diaries and introduce a formal measure of minority representation, defined as the proportion of minority nodes among the top-ranked individuals.  
    We model systematic bias through group-dependent missing edge mechanisms and develop statistical methods to estimate and test for such bias. 
    When bias is detected, we propose a re-ranking procedure based on an asymptotic approximation that improves group representation.  
    Applying the framework to the high school contact network reveals systematic underreporting of cross-group contacts consistent with recall bias.  
    These findings highlight the importance of modeling and correcting systematic bias in social networks with heterogeneous groups.

    \vspace{0.1in}
    \noindent\textbf{Keywords:} Contact networks; Graphon model; Noisy networks; Stochastic block model; Systematic bias. 
\end{abstract}


\section{Introduction} \label{sec:intro}

\subsection{Background and Motivation}
Statistical decision making is more prevalent than ever in our modern world.
When decisions impact the health and livelihood of individuals, issues of bias, fairness and representation must be considered to ensure they are made responsibly.
This is particularly true in statistical network analysis; observed networks are often incomplete or systematically biased due to self-reporting or partial observation.

Networks are a powerful tool for analyzing complex data, where nodes represent entities and edges capture their interactions. Network analysis plays a crucial role in modern data science, enabling insights into both global network structures and individual node properties. One fundamental problem in network analysis is node ranking, a topic which has been studied extensively. A well-known node ranking method is the PageRank algorithm \citep{bianchini2005inside}, originally developed at Google in the 1990s to score nodes according to their importance in information diffusion. These scores can be used to algorithmically rank nodes—originally to select top hits for search queries. 
More generally, network-based ranking algorithms take an input network of $n$ nodes and produce a ranked ordering of these nodes.  
Such rankings typically rely on node centrality metrics \citep{liao2017ranking}, such as degree, closeness, betweenness, or eigenvector centrality \citep{rodrigues2019network}, which capture different aspects of influence. 

In practical applications, algorithmic ranking can have major implications for opportunity and selection of nodes, for instance hiring of job applicants, or acceptance of students into university.
In network analyses, there can also be downstream effects on tasks like influence maximization: node rankings may be used as a simple proxy to select seeds for information spread, thus increasing their access to knowledge \citep{li2018influence,mesner2023fair}.

Network data often includes node attributes, such as gender or age in social networks, and cell types in biological networks. 
Systematic bias occurs when the distributions of node centrality measures depend on protected group membership, resulting in node rankings which are not a fair reflection of the true network structure \citep{zehlike2022fairness}.

To illustrate the importance of such rankings, consider contact networks, which capture patterns of person-to-person interactions and play a central role in understanding social behavior as well as modeling the spread of infectious diseases, information, and behaviors. In such networks, individuals
are represented as nodes and reported contacts form the edges. Identifying the most connected individuals with the highest degree can highlight key spreaders or control points, with direct implications for epidemic control and targeted interventions \citep{stehle2011high, fournet2014contact, mastrandrea2015contact, kucharski2018structure}.

\subsection{High School Contact Networks}

We illustrate this phenomenon using contact diary data reported by French high school students, alongside wearable sensor data recorded on the same day.
Wearable sensors are embedded in badges and tuned to detect face-to-face contacts when two individual are within approximately 1 to 1.5 metres for 20 or more seconds \citep{mastrandrea2015contact}.
Figure~\ref{fig:intro_bias_overview}, panel (A) summarizes the overall interaction pattern, showing an asymmetry in reported cross-gender contacts through the estimated connection probability matrix.
Panel (B) compares the resulting degree-based rankings from the two networks. The diary-based network, based on students’ self-reported perceptions of their social contacts, exhibits a striking gender imbalance: although females comprise about 45\% of the population, nine of the top ten nodes are male (female proportion = 0.10). In contrast, the sensor-based network provides a closer approximation to the true contact structure, with a more balanced gender composition among the top ten (female proportion = 0.30).

\begin{figure}[tbp]
\centering
\begin{subfigure}[t]{0.35\linewidth}
    \centering
    \includegraphics[width=\linewidth]{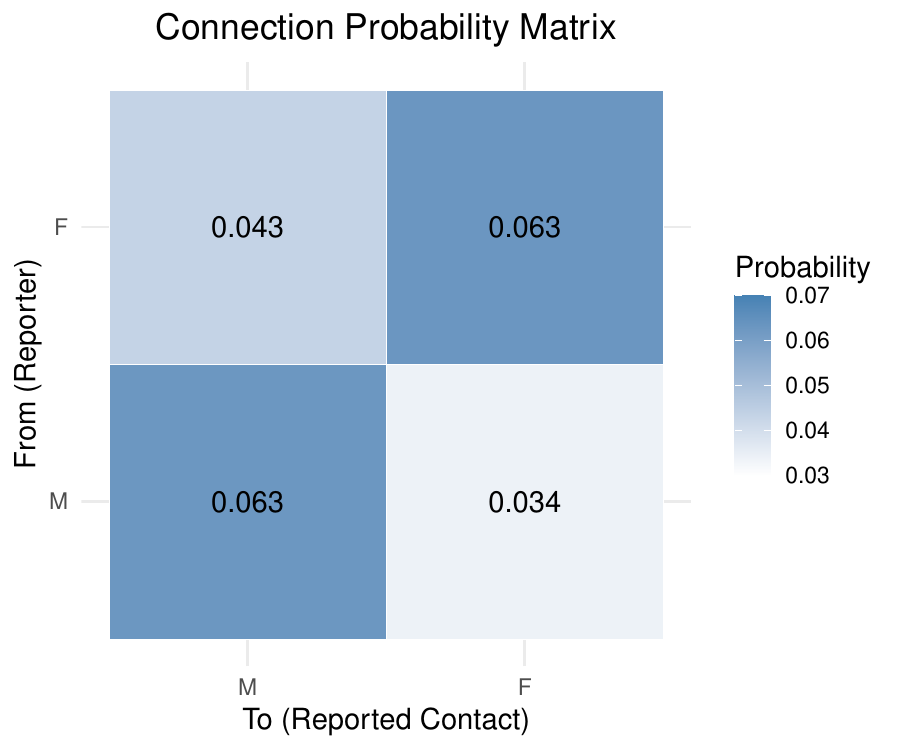}
    \caption{Connection probability matrix, with nodes subsetted by gender.}
    \label{fig:intro_conn_matrix}
\end{subfigure}
\hfill
\begin{subfigure}[t]{0.63\linewidth}
    \centering
    \includegraphics[width=\linewidth]{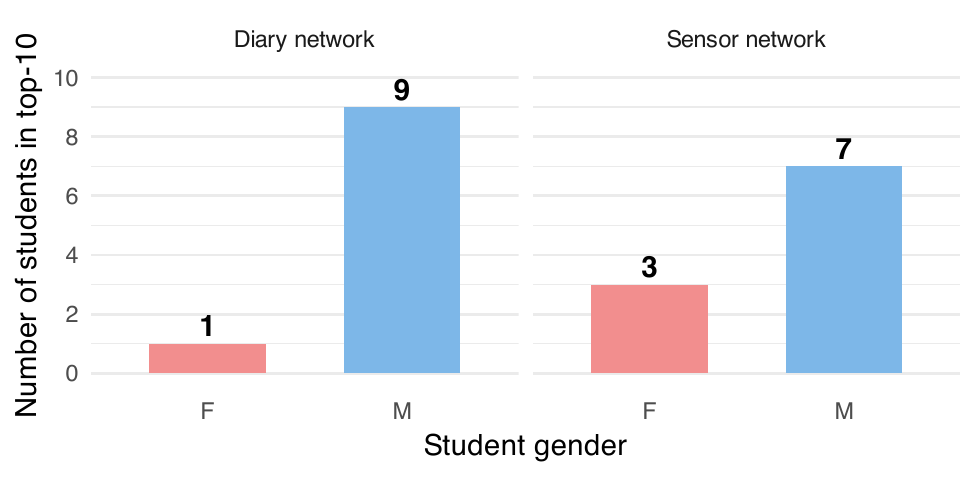}
    \caption{Gender composition of the top-10 students under naive in-degree ranking.}
    \label{fig:intro_top10_bar}
\end{subfigure}
\caption{Bias patterns in the high-school contact network data. (A) Estimated connection probabilities between genders in the diary-based network. (B) Gender composition among the top-10 ranked students in the diary-based (left) and sensor-based (right) networks.}
\label{fig:intro_bias_overview}
\end{figure}

This imbalance raises an important question: does it reflect genuine differences in connectivity, or might it be partly driven by systematic reporting errors such as recall bias? Diary-based networks are known to contain both error and bias \citep{smieszek2012collecting}, and there is evidence that such bias may depend on traits such as gender. For instance, studies suggest that females tend to recall social connections more accurately than males, who are more likely to underreport certain ties \citep{quintane2025gender}.  

Contact diary data nonetheless remains an important tool, particularly in settings where wearable sensors are infeasible, or as a complementary perspective on social network structure. 
However, if recall bias systematically varies across groups, it can distort centrality measures, propagate into rankings, and ultimately undermine both scientific conclusions and fairness in decisions based on these networks. 

\subsection{Fairness Framework and Contributions}

Our contact diary data illustrates a broader problem: systematic bias in network data can misrepresent centrality and distort rankings, with direct consequences for fairness. To study this problem more generally, we now introduce a framework for bias and fairness in network-based rankings.

Systematic bias and mathematical fairness have been defined in various ways across statistics and machine learning. We adopt the framework of \citet{friedler2021possibility} to formalize these concepts for network rankings. Each node $i$ in the network has an {\em oracle score $S_i^*$}, representing true importance. The set of oracle scores resides in a {\em construct space}, where a fair ranking orders nodes strictly by $S_i^*$. In practice, $S_i^*$ is unobserved; instead, we rely on an {\em observed score} $S_i$, which may be biased or noisy, existing in the {\em observation space}.

{\em Systematic bias} arises when the mapping from oracle scores to observed scores differs across groups, a phenomenon known as group skew. For a  protected attribute $Z_i$ (e.g., gender), if the conditional distributions of $S_i$ given $S^*_i$ vary across groups, i.e.,
\begin{equation}\label{model:systematic_bias}
       p(S_i \mid S^*_i, Z_i = 0) \neq p(S_i \mid S^*_i, Z_i = 1),
\end{equation}
then group-level disparities can emerge. In our contact diary data, for example, if fewer contacts are recorded with female students ($Z_i = 1$) than male students ($Z_i = 0$), with similar $S^*_i$, then $S_i$ underrepresents their true importance. Rankings based only on $S_i$ can therefore misrepresent the centrality of female students in the social network.

\citet{friedler2021possibility} highlight the issue of \textit{non-identifiability} in the mappings between construct and observation spaces. This non-identifiability leads to different assumptions being made in practice, contributing to various interpretations and disagreements about what constitutes a fair ranking. They identify two primary worldviews or assumptions: “what you see is what you get” (WYSIWYG) and “we’re all equal” (WAE).

In fair algorithmic ranking, WYSIWYG assumes no group skew: observed scores accurately reflect true node importance. Under this assumption, directly ranking nodes based on their observed scores is considered fair because the observed scores are viewed as unbiased representations of the oracle scores. In contrast, WAE assumes that fairness in ranking is related to {\em proportional representation} among protected groups. Specifically, it requires that the proportion of nodes from each group in the top-$K$ ranked positions is approximately equal to their proportion in the overall population of nodes. If the ranking fails to meet this condition, it is considered unfair, and correction is required.

Under WYSIWYG, a non-proportional ranking may still be fair if it reflects true importance. Under WAE, any systematic deviation from proportional representation constitutes unfairness. An intermediate position, which does not fully adhere WYSIWYG or WAE, makes the question of fairness less clear-cut. In such cases, proportionality alone does not imply fairness, and estimation of the imbalance between oracle and observed scores can be formally unidentifiable.
\ifjrss
\else
Related concepts of non-identifiability have been explored in the context of noisily observed networks \citep{chang2022estimation}, and we build on this work to show that, under certain bias models, unfairness in degree-based rankings can be quantified and corrected—provided we have access to additional information, such as independent network replicates or knowledge of the bias process.
\fi

The question of fairness in ranking has received growing attention across statistical and algorithmic settings. We briefly review several recent studies addressing fairness and proportional representation in network-based rankings. \citet{karimi2018homophily} focused on the effects of homophily on proportional representation, under a group-structured preferential attachment model. 
They note that homophily or heterophily in the preferential attachment mechanism can lead to over- or underrepresentation of the majority group in the top degree-ranked nodes. Rather than addressing fairness directly, their empirical investigation focuses on the proportional representation of protected groups within node rankings.
\citet{zhang2021chasm} empirically studied a ``chasm effect'' in minority and majority representation in degree-based node rankings; they observed underrepresentation of minority group nodes among both the top and bottom ranked sets.
\citet{neuhauser2021simulating} empirically studied how group-skewed edge errors affect networks with two unbalanced groups. They found that missing within-group edges  disadvantage majority group nodes, while missing between-group edges disadvantage minority group nodes. While not addressing fairness directly, their analysis lies between the WYSIWYG and WAE worldviews: they simulate networks lacking proportional rankings (violating WAE), then add group-dependent edge observation errors (violating WYSIWYG).
Due to this intrinsic non-identifiability of their model, the analysis in this paper is limited to empirical investigation.
\ifjrss
\else
As the authors note, ``unless specific assumptions are made about the generative process creating the observed network, \ldots it is generally impossible to extract (and correct) errors in network data based on a single network observation.''
\fi
This is consistent with formal statistical treatment of noisily observed networks in \citep{chang2022estimation}. Relatedly, \citet{antunes2024representation} analyzed how sampling schemes and network homophily shape the representation and ranking of minority nodes in attributed networks.

A separate line of research in fairness-aware graph learning adopts a different notion of fairness. Here, fairness assumes nodes from different groups have similar within- and between-group connectivity. To estimate the graph, a fairness penalty term is incorporated into the graph estimation procedure. Some examples include \citet{navarro2024mitigating, zhou2024fairness, navarro2024fair} under various network models, including the Gaussian graphical model and Gaussian covariance graph model.

\ifjrss
In this manuscript, in addition to our analysis of high school contact network data, we make several broadly applicable theoretical and methodological contributions. 
We provide a formal statistical framework for assessing systematic bias in network rankings, including precise definitions of the statistics used to quantify minority group representation. 
For degree-based rankings under some canonical network models, we derive the asymptotic behavior of minority representation statistics.
Finally, extending the empirical insights of \citet{neuhauser2021simulating}, we develop methods to formally detect and correct systematic bias in network models with observation error,  producing node rankings which are fair on average.
\else
In addition to a detailed, fairness-aware analysis of this high school contact network data, we make the following broadly applicable theoretical and methodological contributions:
\begin{enumerate}
\item [1.] We provide a formal statistical framework for assessing systematic bias in network rankings, including precise definitions of the statistics used to quantify minority group representation. 
\item [2.] For degree-based rankings, we derive asymptotic behavior of minority representation under a broad class of labeled graphon network models. 
\item [3.] 
We develop methods to formally detect and correct systematic bias in network models observed with error, with access to independent network replicates,  producing node rankings which are fair on average.
\end{enumerate}
\fi
To our knowledge, this is the first fairness-aware theoretical investigation of such node ranking statistics, and the first methodology for detecting and correcting systematic bias and mathematical unfairness in degree-based rankings.
Our principled approach to fairness and systematic bias leads to a novel and detailed analysis of the contact diary data; we apply our bias detection and re-ranking methods to verifiably improve group fairness in degree-based rankings, when validated against the corresponding wearable sensor data.
\ifjrss
\else

The rest of the article is organized as follows. In Section \ref{sec:model}, we establish the mathematical definitions and introduce the network models. Section \ref{sec:theory} presents asymptotic results on the representation of minority node groups under different network models. Section \ref{sec:correction} introduces various methods which leverage our asymptotic results to test for systematic bias and apply ranking corrections based on estimated model parameters. Section~\ref{sec:realdata} describes our main analysis of the high school contact network data, and Section~\ref{sec:conclusion} concludes with a discussion of key insights and directions for future research.
\fi

\section{Minority representation profiles and models} \label{sec:model}


Throughout the paper, we primarily consider a network represented as $G = (V, E)$, where $V$ is the set of $n$ nodes and $E$ is the set of edges, represented by a binary adjacency matrix $A^{(n)} \in \{0,1\}^{n \times n}$ and group labels $\bm{c}^{(n)} \in \{1,2\}^n$, jointly drawn from a labelled network model 
\ifjrss
denoted by $\mathcal{P}_n$.
\else
$$
(\bm{c}^{(n)}, A^{(n)}) \sim \mathcal{P}_n.
$$
\fi
Each node is independently assigned to the minority group (group 1) with probability $\kappa < 1/2$ and to the majority group (group 2) with probability $1-\kappa$. For simplicity, we omit the superscript $n$ in the notation for $\bm{c}^{(n)}$, $A^{(n)}$ throughout, and all limits should be interpreted as $n \to \infty$. 

Nodes are ranked using a general {\em ranking criterion}, which maps from the adjacency matrix to a vector of node scores. We focus on \textit{degree-based ranking} for its tractability and widespread use. Let $ d_i = \sum_{j=1}^n A_{ij}$ denote the degree of node $i$ in an undirected network, and more specifically the in-degree in a directed network. The top-$K$ set, denoted $D_K(A)$, consists of the $K$ nodes with the highest degrees, with ties broken at random.

\subsection{Minority representation profiles}
\label{subsec:minority_represent}

Our primary interest is in the sequence of statistics known as the {\em minority representation profile}, which tracks the proportion of minority (group 1) nodes in the top-$K$ according to the degree ranking. 
Specifically, define the random variable
$$
    R_K(\bc,A) = \frac{1}{K} \left\lvert \left\{ i \in D_K(A) : c_i = 1 \right\} \right\rvert \in [0,1], 
$$
for $K \in \{1,\ldots,n\}$.
To describe minority representation of a given network model, we define an expected analog of $R_K(\bm{c},A)$, as well as a limiting profile as $K$ scales with $n$.
\begin{definition}
    For a labelled network model $\mathcal{P}_n$, define the {\em expected minority representation profile} as 
    $
        \rho_K = \mathbb{E}_{\mathcal{P}_n} \left\{ R_{K}(\bm{c},A) \right\},
    $
    for $K \in \{1,\ldots,n\}$.
\end{definition}
\begin{definition}
    For a sequence of labelled network models $\{\mathcal{P}_n\}_{n \geq 1}$, define the {\em asymptotic minority representation profile} as 
    $
        \rho^*(z) = \operatorname{plim}_{n \rightarrow \infty} \left\{ R_{\lfloor nz \rfloor}(\bm{c},A) \right\}, 
    $
    for $z \in (0,1]$ if the limit exists, where $\operatorname{plim}$ denotes the limit in probability of a sequence of random variables.
\end{definition}
\ifjrss
\else
To facilitate analysis, we propose several classes of parametric network models under which the asymptotic minority representation profile can be computed analytically. 
For the sequence of labelled network models $\{\mathcal{P}_n\}_{n \geq 1}$, $\rho^*(z)$ describes the limiting behavior of minority representation in degree-based rankings.
\fi
As we have noted in Section~\ref{sec:intro}, empirical assessments often rely on some form of proportionality. In both finite and limiting regimes, we formalize this notion as follows. 
\begin{definition}\label{def:proportionality}
A labelled network model $\mathcal{P}_n$ has {\em proportional representation} with respect to degree ranking if
$
    \rho_K = \kappa,
$
for all $K \in \{1,\ldots,n\}$.
\end{definition}
\begin{definition}\label{def:asy_proportionality}
A sequence of labelled network models  $\mathcal{P}_n$ has {\em asymptotic proportional representation} with respect to degree ranking if
$
    \rho^*(z) = \kappa,
$
for all $z\in (0,1]$. 
\end{definition}

\subsection{Canonical network models}\label{subsec:network_model}

In this section, we introduce two canonical generative models for labelled, undirected  networks: the general labelled graphon model and, as a special case, the stochastic block model. We use these frameworks to analyze minority representation in degree-based rankings.

\subsubsection{Labelled graphon} \label{subsubsec:graphon_model}

A general graphon model on $n$ nodes is parameterized by a \textit{latent position vector} $ \bu = (u_1, u_2, \dots, u_n) $, where $ u_i \in [0,1] $ represents the latent feature of node $ i $, and a \textit{graphon function} $ \omega: [0,1] \times [0,1] \to [0,1] $, where $ \omega(u_i, u_j) $ gives the probability of an edge between nodes $ i $ and $ j $. The adjacency matrix $ A $, representing the connections in the network, is generated as follows:
\begin{equation} \label{eqn:graphon}
\prob(A_{ij} = 1 \mid u_i, u_j) = \omega(u_i, u_j), \quad \forall i < j.
\end{equation}
To assess minority group representation, we introduce a labelled variant of the graphon. 
Assume nodes are independently assigned to the minority group with probability $\kappa < 1/2$, and the majority group otherwise. Conditional on group labels, we sample latent positions:
\begin{equation*}
    U_i \vert c_i=1 \sim \operatorname{Uniform}(0,\kappa), \quad 
    U_i \vert c_i=2 \sim \operatorname{Uniform}(\kappa,1), 
\end{equation*}
independently. Then generate the adjacency matrix $A$ as in \eqref{eqn:graphon}.
This construction ensures the marginal distribution of $U_i \sim \operatorname{Uniform}[0,1]$, so the resulting network remains a valid graphon model.
\ifjrss
This is similar to the so-called {\em stochastic block smooth graphon model} of \cite{sischka2024stochastic} (treating community labels as observed), and the graphon model with node features proposed in \citet{su2020network}.
\else
This is similar to the so-called {\em stochastic block smooth graphon model} of \cite{sischka2024stochastic}, although here we treat the community labels as observed.

An equivalent formulation arises when labels are defined in terms of the latent positions: 
$c_i = \mathbb{I}(U_i \in M)$
for a measurable subset $M \subset [0,1]$ with Lebesgue measure $\kappa$, and $\mathbb{I}(\cdot)$ denotes the indicator function. 
This is similar to the graphon model with node features proposed in \citet{su2020network}. Due to the invariance property of graphon functions to measure-preserving transformations, this leads to the same distribution over adjacency matrices.
\fi 

In the graphon model, node degree is determined by the graphon function $\omega$ and the latent features of other nodes. 
\ifjrss
Assortativity in $\omega$ can lead to systematic underrepresentation of minority nodes; however this underrepresentation can arise in a broader range of network structures, and is not restricted to block-exchangeable patterns.
\else
A node with latent feature $u_0$ will tend to have higher degree if $\omega(u_0,v)$ is greater for $v$ varying between 0 and 1, reflecting greater influence. When $\omega$ is denser within the majority group (i.e.~for $u_0 \in [\kappa,1]$) than within the minority group ($u_0 \in [0,\kappa]$), or if it is denser for within-group edges than between-group connections, minority nodes tend to be underrepresented in degree-based rankings. 
\fi

\subsubsection{Stochastic block model} \label{subsubsec:sbm}

The stochastic block model (SBM) is a widely used generative network model that can be viewed as a special case of the labelled graphon framework. Originally proposed to capture community structure in social networks \citep{holland1983stochastic, abbe2018community}, the SBM partitions nodes into distinct communities, with edges formed independently according to the community memberships of their endpoints.

We study a 2-group SBM, where each node is independently assigned to the minority group with probability $\kappa < 1/2$, and to the majority group otherwise. Given the group label vector 
\ifjrss
$\bm{c}$,
\else
$\bc = (c_1, c_2, \dots, c_n)\in \{1,2\}^n$, 
\fi
edges are formed independently with probabilities specified by a connectivity matrix $B \in [0,1]^{2 \times 2}$, according to
$
\prob(A_{ij} = 1 \mid c_i, c_j) = B_{c_ic_j}
$ for all $i < j$.
In an unbalanced SBM with $\kappa < 1/2$, minority node degrees are influenced more by between-group connections due to group size disparity. When the model is assortative (within-group connections are more likely), this can lead to systematic underrepresentation of minority nodes in top-$K$ degree rankings. Numerically, we will have $\rho_K < \kappa$ for small values of $K$.

\subsection{Network models with systematic bias} \label{subsec:sys_bias}

In network analyses, particularly those focused on minority representation in degree-based rankings, systematic inaccuracies can significantly distort conclusions \citep{neuhauser2021simulating}. These distortions often arise from biases embedded in the attributed structure of social networks, unfairly impacting minority node rankings. 
To model this bias, we suppose the true underlying network $A$ is unobserved, and instead we observe a biased version $Y$, where the bias depends on the group labels of the nodes. Formally, we assume 
\ifjrss
$\mathbb{P}(Y_{ij} = 0 \mid A_{ij} = 1) = \beta_{c_ic_j}$,
\else
\begin{equation}\label{eqn:systematic_bias}
\mathbb{P}(Y_{ij} = 0 \mid A_{ij} = 1) = \beta_{c_ic_j}, 
\end{equation}
\fi
\ifjrss
independently over all $i < j$, for type II error rates $\beta_{11}$, $\beta_{22}$, and $\beta_{21}=\beta_{12}$ (by symmetry).
\else
where the matrix of type II error rates is defined as 
\begin{equation}\label{eqn:SBM_beta}
    \bbeta = \begin{pmatrix}
        \beta_{11} & \beta_{12} \\ 
        \beta_{21} & \beta_{22}
    \end{pmatrix} \in [0,1]^{2 \times 2}.
\end{equation}
\fi
Following the terminology in \citet{friedler2021possibility}, we refer to the generative model for the underlying network $A$ as the {\em construct model}, and the generative model for $Y$ as the {\em observation model}.

These observation errors can systematically reduce the visibility of minority nodes in degree-based rankings. If the construct network $A$ has proportionally representation (for instance if it follows an Erdős–Rényi model), but the observation errors favors within-group edges (e.g., $\beta_{12} < \beta_{11} = \beta_{22}$),  minority nodes will be underrepresented in top-$K$.
If the majority group nodes are more likely to form within-group connections in the construct network $A$, additional observation bias with $\beta_{12} \leq \beta_{11} = \beta_{22} $ will further exacerbate the underrepresentation of minority nodes in top-$K$ rankings.


The addition of this observation error mechanism does not increase the class of models under study; it is easy to see that if the construct model comes from either the class of SBMs defined in Section~\ref{subsubsec:sbm}, or the class of the labelled graphons defined in Section~\ref{subsubsec:graphon_model}, the observation model will remain in the same class. For this reason, it is sufficient to study the probability limits of $\rho_K$ under the SBM and labelled graphon models in Section~\ref{sec:theory}, without explicitly incorporating the observation bias. 
We return to the systematic bias model in Section~\ref{sec:correction}, where we use it to develop a correction procedure when replicated network observations are available.

\section{Quantifying minority representation}\label{sec:theory}

In this section, we analyze the asymptotic behavior of the minority representation profile under the canonical network models introduced in Section \ref{subsec:network_model}.

\subsection{Labelled graphon}\label{sec:model_graphon}

In this section, we derive asymptotic limits for minority representation in degree-based rankings under labelled graphon models. For each $n$, define a graphon function $\omega_n: [0,1] \times [0,1] \rightarrow [0,1]$. 
To ensure nontrivial asymptotic behavior in degree-based rankings, we will make the following two assumptions on the sequence $\{\omega_n\}_{n \geq 1}$.

\begin{assumption} \label{assump:graphon_concentration}
There exists a constant $\epsilon > 0$, and sequence $\{ p_n \}_{n \geq 1} \in (\epsilon, 1 - \epsilon)$ such that
\begin{equation}\label{eqn:graphon_concentration}
    \sup_{u,v \in [0,1]} ~ \lvert \omega_n(u,v) - p_n \rvert = O\left(\frac{1}{\sqrt{n}}\right). 
\end{equation}
\end{assumption}
Without loss of generality, we may take $p_n$ to be the population edge density of 
\ifjrss
$\omega_n$.
\else
each graphon,
$$
   p_n = \int_0^1 \int_0^1 \omega_n(u,v)dudv.
$$
\fi
\begin{assumption} \label{assump:graphon_mean}
For $p_n$ in \eqref{eqn:graphon_concentration}, assume that for all $u \in [0,1]$ and some limiting function $\mu(u)$,
\begin{equation}\label{eqn:graphon_mean}
    \sqrt{\frac{n}{p_n(1-p_n)}}\int_0^1 \left\{ \omega_n(u,v) - p_n \right\}dv \rightarrow \mu(u). 
\end{equation}
\end{assumption}
Here, $\mu(u)$ describes the normalized expected degree of a node with latent feature $u$.
For more discussion of the choice of scaling in Assumption~\ref{assump:graphon_concentration}, see the discussion around Proposition~\ref{prop:SBM}.
\ifjrss
\else
The following result quantifies the (asymptotic) underrepresentation of the minority group in degree-based rankings based on a sequence of labelled graphon models.
\fi

\begin{theorem}\label{thm:graphon_limits}
    Let $A$ be a network generated from a labelled graphon $w_n$  satisfying Assumptions~\ref{assump:graphon_concentration}-\ref{assump:graphon_mean} and let $K/n \to z$ as $n \to \infty$. Then, 
    $$
    R_K(\bm{c},A)  \convp \frac{\kappa}{z}\left\{ 1 - F_1(c^*) \right\} ,
    $$
    where $c^*$ is the $(1-z)$-th quantile of the following mixture distribution: 
    $$
    \kappa F_1(\cdot) + (1-\kappa)F_2(\cdot) = \kappa \left\{ \mathcal{N}\left(0, 1\right) + \mu(U^{(1)})\right\} + (1-\kappa) \left\{ \mathcal{N}\left(0, 1\right) + \mu(U^{(2)})\right\}, 
    $$
    with $U^{(1)}\sim \operatorname{Uniform}(0,\kappa)$, $U^{(2)}\sim \operatorname{Uniform}(\kappa,1)$, and $\mu(\cdot)$ is defined in Assumption~\ref{assump:graphon_mean}.
\end{theorem}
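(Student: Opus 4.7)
The plan is to reduce the problem to the asymptotic distribution of normalized degrees, then to the convergence of an empirical CDF and its empirical quantile, mirroring the argument outlined for the SBM in Theorem~\ref{thm:SBM_limits}. Define the normalized degrees
$$\tilde{d}_i = \frac{d_i - (n-1) p_n}{\sqrt{n\, p_n(1-p_n)}}, \qquad i = 1,\ldots,n,$$
so that ranking by $d_i$ is equivalent to ranking by $\tilde{d}_i$. Conditional on $U_i = u$, the edges $\{A_{ij}\}_{j \neq i}$ are i.i.d.\ Bernoulli with parameter $\bar{\omega}_n(u) := \int_0^1 \omega_n(u,v)\,dv$, since the $U_j$ for $j\neq i$ are i.i.d.\ and independent of $U_i$. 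The Lindeberg CLT applies because $p_n(1-p_n)$ is bounded away from $0$ by Assumption~\ref{assump:graphon_concentration}, and the identity $\sqrt{n/[p_n(1-p_n)]}\,(n-1)n^{-1}\int_0^1[\omega_n(u,v)-p_n]\,dv \to \mu(u)$, which follows from Assumption~\ref{assump:graphon_mean}, supplies the asymptotic mean. Together these yield $\tilde{d}_i \mid U_i = u \convd \mathcal{N}(\mu(u),1)$, and integrating over the group-conditional distributions of $U_i$ reproduces the mixture $\kappa F_1 + (1-\kappa) F_2$.

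Next, I would show that the empirical CDF $\hat{F}_n(c) := n^{-1}\sum_{i=1}^n \mathbb{I}(\tilde{d}_i \leq c)$ converges in probability, uniformly in $c$, to $F(c) = \kappa F_1(c) + (1-\kappa) F_2(c)$. Pointwise convergence $E[\hat{F}_n(c)] \to F(c)$ follows from the previous step. For concentration, I would exploit the weak dependence among the degrees: $\tilde{d}_i$ and $\tilde{d}_j$ share only the single edge $A_{ij}$ and depend on common $U_k$ only through $\omega_n(U_i,U_k)$ and $\omega_n(U_j,U_k)$, which by Assumption~\ref{assump:graphon_concentration} each deviate from $p_n$ by $O(1/\sqrt{n})$. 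A direct calculation gives $\mathrm{Cov}(\tilde{d}_i,\tilde{d}_j) = O(1/n)$, hence $\mathrm{Var}(\hat{F}_n(c)) = O(1/n)$. Alternatively, conditional on the latent positions one may apply a bounded-differences inequality, since flipping a single edge shifts $\hat{F}_n(c)$ by at most $2/n$. Uniformity in $c$ then follows from combining pointwise convergence on a fine grid with the monotonicity of CDFs and continuity of $F$ (smoothness of $F_1,F_2$ being immediate as convolutions of a standard Gaussian with a bounded variable).

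Finally, uniform convergence of $\hat{F}_n$ implies quantile convergence: the data-driven threshold $c_n^*$ satisfying $|\{i : \tilde{d}_i > c_n^*\}| = K$ converges in probability to $c^*$, the $(1-z)$-quantile of $F$. Repeating the empirical-CDF argument restricted to the minority subgroup yields $n^{-1}|\{i : c_i=1,\ \tilde{d}_i > c_n^*\}| \convp \kappa[1 - F_1(c^*)]$, and dividing by $K/n \to z$ produces the claimed limit.

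The main obstacle will be the uniform concentration of $\hat{F}_n$ in the second step, because the $\tilde{d}_i$'s are jointly dependent through both shared edges and shared latent positions. The $O(1/n)$ pairwise covariance bound, which is precisely what Assumption~\ref{assump:graphon_concentration} affords, is essential for dampening the $n^2$ covariance terms in $\mathrm{Var}(\hat{F}_n)$; a secondary subtlety is that the underlying model $\omega_n$ varies with $n$, so passing from pointwise to uniform convergence requires a careful grid argument leveraging continuity of the limit $F$.
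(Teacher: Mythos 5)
Your overall architecture matches the paper's: a conditional CLT for normalized degrees giving the mixture limit (the paper conditions on all latent positions and uses a Lyapunov CLT plus a Chebyshev step, while you condition only on $U_i$ and use that $A_{ij}$, $j \neq i$, are i.i.d.\ Bernoulli with parameter $\int_0^1 \omega_n(U_i,v)\,dv$ --- an equivalent and slightly cleaner route), followed by uniform convergence of the empirical CDF of normalized degrees, quantile convergence of $d_{(K)}$, and a group-restricted count divided by $K/n \to z$. This is exactly the content of Lemma~\ref{lem:Glivenko-Cantelli} together with the comparison of $R_K$ with its population-threshold version in the paper.

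However, the key concentration step has a genuine gap. You assert that $\cov(\tilde d_i,\tilde d_j)=O(1/n)$ ``hence'' $\sfvar(\hat F_n(c))=O(1/n)$. What the variance of $\hat F_n(c)$ requires is the covariance of the indicator variables $\mathbb{I}(\tilde d_i\le c)$ and $\mathbb{I}(\tilde d_j\le c)$, and small covariance of the degrees themselves does not control the covariance of nonlinear (indicator) functionals of them; to bridge this you would need something like a bivariate CLT for the pair $(\tilde d_i,\tilde d_j)$ with asymptotically vanishing correlation, which you do not supply. The paper instead bounds the indicator covariance directly: it conditions on the shared edge $A_{ij}$, rewrites the joint probability in terms of the leave-one-out degrees $d_i^{(-j)}$ and $d_j^{(-i)}$, and applies Berry--Esseen conditionally on the latent positions to show the joint CDF factorizes up to $o(1)$. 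Your proposed fallback via bounded differences also fails as stated: conditional on the latent positions, $\hat F_n(c)$ depends on $\binom{n}{2}$ independent edges, each with influence at most $2/n$, so McDiarmid's variance proxy is of constant order and yields no vanishing deviation bound; it moreover ignores the fluctuation of the conditional mean induced by the latent positions themselves. The uniform-ECDF step therefore needs to be carried out along the lines of the paper's Lemma~\ref{lem:Glivenko-Cantelli} (indicator-level covariance control via conditioning and Berry--Esseen), or with an explicit joint-normality argument for pairs of degrees; the remaining steps of your plan are sound.
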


\subsection{Stochastic block model}\label{sec:model_SBM}

To better understand the general result above, we illustrate with a simple two-group SBM consisting of a minority and a majority group. 
Consider the SBM with probability matrix
\begin{equation}\label{eqn:SBM_B}
    B = \begin{pmatrix}
        p_1 & q \\ 
        q & p_2
    \end{pmatrix},
\end{equation}
where $p_g$ ($g=1,2$) denotes the within-group connection probability for group $g$, and $q$ denotes the between-group connection probability. 
\ifjrss
As above, $\kappa < 1/2$ denotes the probability that a node is assigned to the minority group.
\else
Nodes are assigned independently to the two groups: group 1 (minority) with probability $\kappa < 1/2$, and group 2 (majority) with probability $1-\kappa$. 
\fi

We denote this model as SBM($n, B, \kappa$). In \eqref{eqn:SBM_B}, we allow $p_1,p_2$ and $q$ to depend on $n$. The following proposition illustrates a phase transition in the detectability of group differences. Let $\delta_g := p_g - q$ denote the signal strength for group $g =1,2$. 
When $\delta_g$ is too small, the group differences are not detectable in a degree-based ranking, and the limiting ranking is proportionally representative. Conversely, when $\delta_g$ is too large, the ranking becomes trivial: all majority group nodes dominate the top of the ranking in the limit. This idea is formalized in the following result. 
\begin{proposition}\label{prop:SBM}
Consider the SBM defined in \eqref{eqn:SBM_B} with an assortative structure, i.e., $\delta_g > 0$ for $g=1,2$. Then:
\begin{enumerate}
    \item[(a)] If $\delta_g \ll \frac{1}{\sqrt{n}}$ for $g=1,2$ and $K/n \to z$, then 
    $
    R_K(\bm{c}, A) \convp \kappa.
    $
    \item[(b)] If $(1-\kappa)\delta_2 - \kappa\delta_1 \gtrsim \sqrt{\frac{\log n}{n}}$, $\kappa < \frac{1}{2}$, and $K/n \to z$ with $z \leq 1 - \kappa$, then
    $
    R_K(\bm{c}, A) \convp 0.
    $

    \item[(c)] If $\kappa\delta_1 - (1-\kappa)\delta_2 \gtrsim \sqrt{\frac{\log n}{n}}$, $\kappa < \frac{1}{2}$, and $K/n \to z$ with $z \leq \kappa$, then
    $
    R_K(\bm{c}, A) \convp 1.
    $
\end{enumerate}
\end{proposition}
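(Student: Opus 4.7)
The plan is to analyze each node's degree as a sum of conditionally independent Bernoulli random variables, comparing the resulting mean gap against the typical fluctuations. Conditional on the label vector, writing $n_g = |\{i : c_i = g\}|$, the degree of a node in group $g$ has conditional mean $\mu_g = (n_g - 1) p_g + n_{3-g} q$ and variance $\sigma_g^2 = (n_g - 1) p_g(1-p_g) + n_{3-g} q(1-q)$. Since $n_1/n \convp \kappa$ and $n_2/n \convp 1-\kappa$, a direct calculation yields $\mu_2 - \mu_1 = n[(1-\kappa)\delta_2 - \kappa \delta_1] + O_p(\sqrt{n})$, and the entire argument hinges on the relative size of this gap.

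For parts (b) and (c), I would apply Bernstein's (or Hoeffding's) inequality to each degree individually and union-bound to obtain $\max_i |d_i - \mu_{c_i}| \leq C\sqrt{n \log n}$ with probability $1 - o(1)$. Under the hypothesis of (b), $\mu_2 - \mu_1 \gtrsim \sqrt{n \log n}$, so on this high-probability event every group-2 degree exceeds every group-1 degree. Combined with $n_2/n \convp 1 - \kappa$ and $K/n \to z \leq 1-\kappa$ (which gives $K \leq n_2$ with high probability), the top-$K$ set lies entirely in group 2, yielding $R_K \convp 0$. Part (c) follows by the symmetric argument with the roles of the groups swapped, noting that $K \leq n_1$ holds with high probability whenever $z \leq \kappa$.

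For part (a), the gap $|\mu_2 - \mu_1|$ is $o(\sqrt{n})$ while the standard deviations are of order $\sqrt{n}$, so the two degree distributions are indistinguishable at the scale of the $K$-th order statistic. I would proceed in three steps. First, characterize the random top-$K$ cutoff $\tau_K$: by uniform concentration of the empirical degree distribution, $\tau_K$ is close to the $(1 - K/n)$-quantile of the common marginal degree law. Second, apply a Berry-Esseen bound to each degree: since the standardized shift $(\mu_1 - \mu_2)/\sigma_g = o(1)$ under the hypothesis, the tail probabilities $\mathbb{P}(d_i \geq \tau_K \mid c_i = g)$ are equal up to $o(1)$ across $g \in \{1,2\}$. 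Third, conclude: the number of minority nodes in the top-$K$ concentrates around $n_1 \cdot \mathbb{P}(d_i \geq \tau_K \mid c_i = 1) \approx \kappa K$, yielding $R_K \convp \kappa$.

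The main obstacle is the mild dependence among degrees in part (a): since $d_i$ and $d_j$ share the edge $A_{ij}$, the $d_i$'s are not mutually independent. I would handle this by viewing each $d_i$ as a function of the independent edge indicators $\{A_{ij}\}_{i<j}$; because flipping one edge changes at most two degrees by one, McDiarmid's inequality gives the required concentration at scale $\sqrt{n \log n}$, uniformly over nodes. The same device, applied to the empirical degree CDF, yields the quantile concentration of $\tau_K$ in the first step of part (a). A secondary point is making the Berry-Esseen step uniform over the random $\tau_K$; this reduces to a standard quantitative CLT bound, since $K/n \to z \in (0,1)$ keeps the relevant Gaussian tail bounded away from $0$ and $1$, so the derivative of the approximating normal CDF is bounded and the uniform approximation is controlled.
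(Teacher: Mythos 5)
Your treatment of parts (b) and (c) is essentially the paper's argument: conditional on the labels, bound $\max_i\lvert d_i-\bbE(d_i\mid \bc)\rvert$ by Bernstein/Hoeffding plus a union bound, note that the group mean gap $(n_2-1)\delta_2-(n_1-1)\delta_1$ is of order $n\{(1-\kappa)\delta_2-\kappa\delta_1\}\gtrsim\sqrt{n\log n}$, and conclude that the two degree ranges separate, so the top-$K$ set is (essentially) all majority or all minority. The overall architecture of your part (a) also matches the paper's route (its Glivenko--Cantelli-type Lemma): uniform convergence of the empirical CDF of normalized degrees, concentration of the top-$K$ cutoff at the corresponding quantile, a Berry--Esseen comparison of the two group tails, and a counting step.

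However, there is a genuine gap exactly where the dependence bites in part (a): McDiarmid's inequality does not give the concentration you need for the empirical degree CDF (and hence for the cutoff $\tau_K$). Write $F_n(x)=\frac1n\sum_i\mathbb{I}\{\phi_n(d_i)\le x\}$ as a function of the $\binom{n}{2}$ independent edge indicators. Flipping one edge changes at most two degrees by one, so the bounded difference per coordinate is $2/n$, and $\sum_k c_k^2\asymp\binom{n}{2}(2/n)^2\asymp 2$. McDiarmid then yields only $\bbP\left(\lvert F_n(x)-\bbE F_n(x)\rvert\ge\epsilon\right)\le 2e^{-\epsilon^2}$, a constant that does not tend to zero, and Efron--Stein with the same worst-case differences bounds the variance by a constant; neither delivers $F_n(x)-F^*(x)\convp 0$. (Your use of McDiarmid for individual degrees is harmless but unnecessary, since each $d_i$ is a sum of independent Bernoullis given the labels; the CDF is the real issue.) The needed $o(1)$ concentration requires exploiting that an edge flip changes an indicator $\mathbb{I}\{d_i\le\tau\}$ only when $d_i$ lies within one unit of the threshold, an event of probability $O(n^{-1/2})$ by a Berry--Esseen/anti-concentration estimate. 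This is precisely what the paper does: it conditions on the shared edge $A_{ij}$, shows $\cov\left(\mathbb{I}\{d_i\le\tau\},\mathbb{I}\{d_j\le\tau\}\right)=o(1)$, and then applies Chebyshev pointwise before upgrading to uniform convergence and quantile convergence. The same covariance control is also needed in your final counting step (the number of minority nodes above $\tau_K$), which again is a sum of dependent indicators. So your plan is repaired by replacing the McDiarmid step with this covariance/Berry--Esseen argument; as stated, the concentration claims it is meant to justify do not follow. A minor further point: at the boundary $z=1-\kappa$ (resp.\ $z=\kappa$), $K\le n_2$ (resp.\ $K\le n_1$) need not hold with high probability, but the excess is $O_p(\sqrt n)=o(K)$, so the conclusions of (b) and (c) are unaffected.
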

\ifjrss
\else 
\begin{remark}
    When $p_g - q \ll \frac{1}{\sqrt{n}} $ with balanced communities, \cite{banerjee2018contiguity} showed that the SBM is statistically contiguous to the Erdős–Rényi (ER) model. 
\end{remark}
\begin{remark}
    When $p_g - q \geq \frac{c}{\sqrt{n}} $ for a sufficiently large constant $c$, naive clustering via degree achieves perfect label recovery. This significant disparity between the groups results in an extreme ``unfairness'' that we would expected to be rarely observed in real-world networks.
\end{remark}
\fi
Therefore, we can focus on the parameter space where $\delta_g \sim \frac{1}{\sqrt{n}}$ for $g=1,2$, in order to obtain nontrivial asymptotic behavior. To this end, we define $ p_g = q + \frac{\mu_g}{\sqrt{n}} $ for $ \mu_g \in \mathbb{R}$, resulting in the following connection probability matrix:
\begin{equation}\label{eqn:SBM_B_def}
    B = \begin{pmatrix}
        q + \mu_1/\sqrt{n}  & q \\ 
        q & q + \mu_2/\sqrt{n}
    \end{pmatrix} 
\end{equation}
Note that this connection probability matrix is a piecewise constant graphon, which satisfies Assumptions~\ref{assump:graphon_concentration} and \ref{assump:graphon_mean} of Section~\ref{sec:model_graphon}.
We can characterize the behavior of limiting representation profile with the following result.
\begin{theorem}\label{thm:SBM_limits}
Consider a network $A$ generated from an SBM$(n, B, \kappa)$, where $B$ is defined in \eqref{eqn:SBM_B_def}.
Suppose $K/n \to z$ as $n \to \infty$. 
\ifjrss
Then we have 
$$
 R_K(\bm{c},A)  \convp \frac{\kappa}{z}\left\{ 1 - F_1(c^*) \right\} =: \rho^*(z;\kappa, q, \mu_1, \mu_2),
$$
\else
Then, the minority proportion among the top-$K$ nodes converges as follows:
$$
 R_K(\bm{c},A)  \convp \frac{\kappa}{z}\left[ 1 - F_1(c^*) \right] =: \rho^*(z;\kappa, q, \mu_1, \mu_2),
$$
\fi
where $c^*$ is the $(1-z)$-th quantile of the following mixture of Gaussian distributions:
$$
\kappa F_1(\cdot) + (1-\kappa)F_2(\cdot) := \kappa \mathcal{N}\left( \frac{\kappa \mu_1}{\sqrt{q(1-q)}}, 1 \right) + (1-\kappa) \mathcal{N}\left( \frac{(1-\kappa) \mu_2 }{\sqrt{q(1-q)}}, 1 \right).
$$
\end{theorem}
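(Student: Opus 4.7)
The plan is to reduce the ranking problem to a quantile-estimation problem for the empirical distribution of appropriately standardized degrees. For a node $i$ with $c_i = g$, the degree $d_i = \sum_{j \neq i} A_{ij}$ is, conditional on $\bm{c}$, a sum of independent Bernoulli variables. If $n_1 = |\{i : c_i = 1\}|$ and $n_2 = n - n_1$ denote the realized group sizes, the conditional mean of $d_i$ equals $(n_1 - \mathbb{I}\{g=1\})(q + \mu_1\mathbb{I}\{g=1\}/\sqrt{n}) + (n_2 - \mathbb{I}\{g=2\})(q + \mu_2\mathbb{I}\{g=2\}/\sqrt{n})$, and its variance is $nq(1-q) + o(n)$. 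Writing $\tilde d_i := (d_i - nq)/\sqrt{nq(1-q)}$, the Lindeberg–Feller CLT combined with $n_g/n \to \kappa_g$ yields the marginal limits $\tilde d_i \mid c_i = 1 \Rightarrow F_1$ and $\tilde d_i \mid c_i = 2 \Rightarrow F_2$ as defined in the theorem.

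Next I would study the empirical CDF $F_n(x) = n^{-1} \sum_i \mathbb{I}(\tilde d_i \leq x)$. Because $c_i \iid \mathrm{Bernoulli}(\kappa)$, the marginal CLT gives $\mathbb{E}[F_n(x)] \to F(x) := \kappa F_1(x) + (1-\kappa) F_2(x)$ pointwise. For the variance of $F_n(x)$, the crucial observation is that any two degrees $d_i, d_j$ share exactly one summand $A_{ij}$, so $\mathrm{Cov}(d_i, d_j) = O(1)$ and hence $\mathrm{Cov}(\mathbb{I}(\tilde d_i \leq x), \mathbb{I}(\tilde d_j \leq x)) = o(1)$ via a local CLT (or Chebyshev on the standardized degrees). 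This yields $\mathrm{Var}\{F_n(x)\} = o(1)$, so $F_n(x) \convp F(x)$ pointwise. Since $F$ is continuous and strictly increasing (it is a mixture of Gaussians), the classical Pólya argument upgrades pointwise to uniform convergence, $\sup_x |F_n(x) - F(x)| \convp 0$.

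The final step converts this into a statement about top-$K$ rankings. The top-$K$ set coincides with $\{i : \tilde d_i > c_n\}$ where $c_n$ is the empirical $(1 - K/n)$-quantile of $\{\tilde d_i\}$, up to ties which occur on an event of vanishing probability since $F$ is continuous. Uniform convergence and continuity of $F$ at $c^*$ imply $c_n \convp c^*$. Applying the same empirical-CDF argument to the minority subsample shows $(n_1)^{-1}\sum_{i:c_i=1}\mathbb{I}(\tilde d_i > c_n) \convp 1 - F_1(c^*)$. Combining with $n_1/n \convp \kappa$ and $n/K \to 1/z$ gives
\[
R_K(\bm{c}, A) = \frac{n_1}{K} \cdot \frac{1}{n_1} \sum_{i : c_i = 1} \mathbb{I}(\tilde d_i > c_n) \convp \frac{\kappa}{z}\{1 - F_1(c^*)\}.
\]

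The main obstacle is handling the dependence among degrees, since the $d_i$ are not independent even after conditioning on $\bm{c}$. However, the pairwise dependence is extremely weak (one shared edge), so $L^2$-convergence of $F_n$ suffices, and no sophisticated coupling or martingale argument is needed. A secondary delicate point is verifying that boundary effects near $c^*$—specifically ties at $\tilde d_i = c_n$—do not contribute in the limit; this follows since $F$ is continuous at $c^*$ and so the number of degrees in any shrinking neighborhood of $c_n$ is $o(n)$ in probability.
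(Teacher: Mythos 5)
Your proposal is correct and follows essentially the same route as the paper: the paper derives this theorem as a corollary of its graphon result, but that proof rests on exactly your ingredients --- a per-group CLT for the standardized degrees, a Glivenko--Cantelli-type uniform convergence of their empirical CDF with the shared-edge covariance handled by conditioning on $A_{ij}$, convergence of the empirical $(1-z)$-quantile to $c^*$, and an $o(n)$ control of nodes near the threshold. One small inaccuracy: since degrees are integers concentrated on a range of width $O(\sqrt{n})$, ties at $d_{(K)}$ occur with probability tending to one rather than vanishing, but your closing observation that only $o(n)$ degrees fall in a shrinking neighborhood of $c_n$ is the correct fix and matches the paper's $\epsilon_n$-neighborhood argument.
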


\begin{remark}
The conclusion remains valid if one or both signals vanish, i.e., $\mu_1 = o(1)$ and/or $\mu_2 = o(1)$. The limiting mixture distribution simplifies accordingly; for example, if $\mu_1 = o(1)$, the first component becomes $\mathcal{N}(0,1)$, and similarly for $\mu_2$. 
\end{remark}


In Section 2 of the Supplementary Material, we show that this asymptotic approximation can perform well for $n$ as small as $200$, and that it can be used to produce a consistent, parametric estimate of the minority representation profile which can be more accurate than the empirical class memberships, especially for small values of $K$.
In Figure~\ref{fig:sbmlimits}, we inspect the shape of some of the asymptotic minority representation profiles as we vary different SBM parameters. 
In panel (A), we show the effect of changing $\kappa$, the minority group proportion, for fixed $q=0.15$ and $\mu_1=\mu_2=1$.
In panel (B), we show the effect of changing $\mu_1$ and $\mu_2$ together, for fixed $\kappa=0.4$ and $q=0.15$, where we denote $\mu = \mu_1 = \mu_2$.
Finally, in panel (C), we show the effect of changing $q$, the between-group edge density, for fixed $\kappa=0.4$ and $\mu_1=\mu_2=2$.

\begin{figure}[!t]
        \centering
        \begin{subfigure}[b]{0.48\linewidth}
            \centering
            \includegraphics[width=\linewidth]{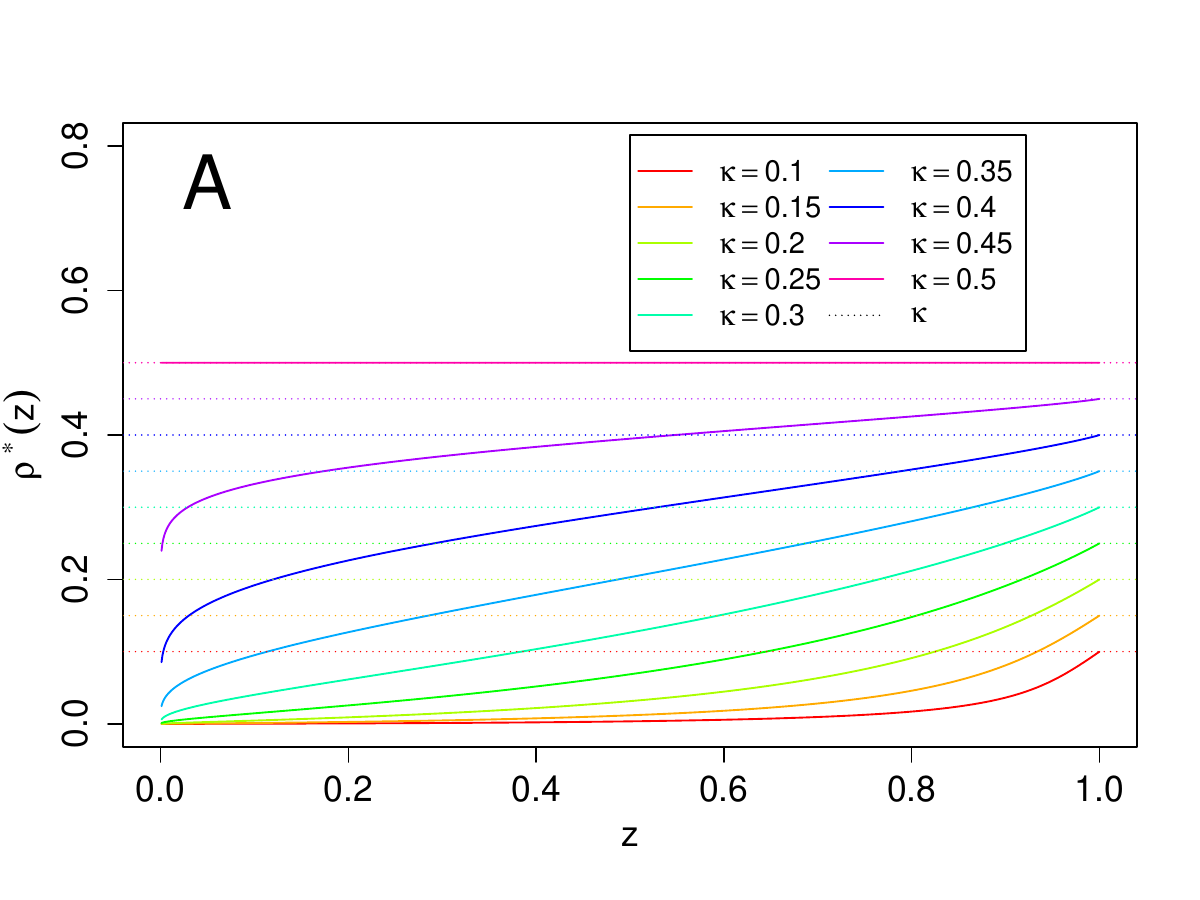}
            \caption{Varying $\kappa$, $\mu_1=\mu_2=1, q=0.15$}
        \end{subfigure}
        \hfill
        \begin{subfigure}[b]{0.48\linewidth}  
            \centering 
            \includegraphics[width=\linewidth]{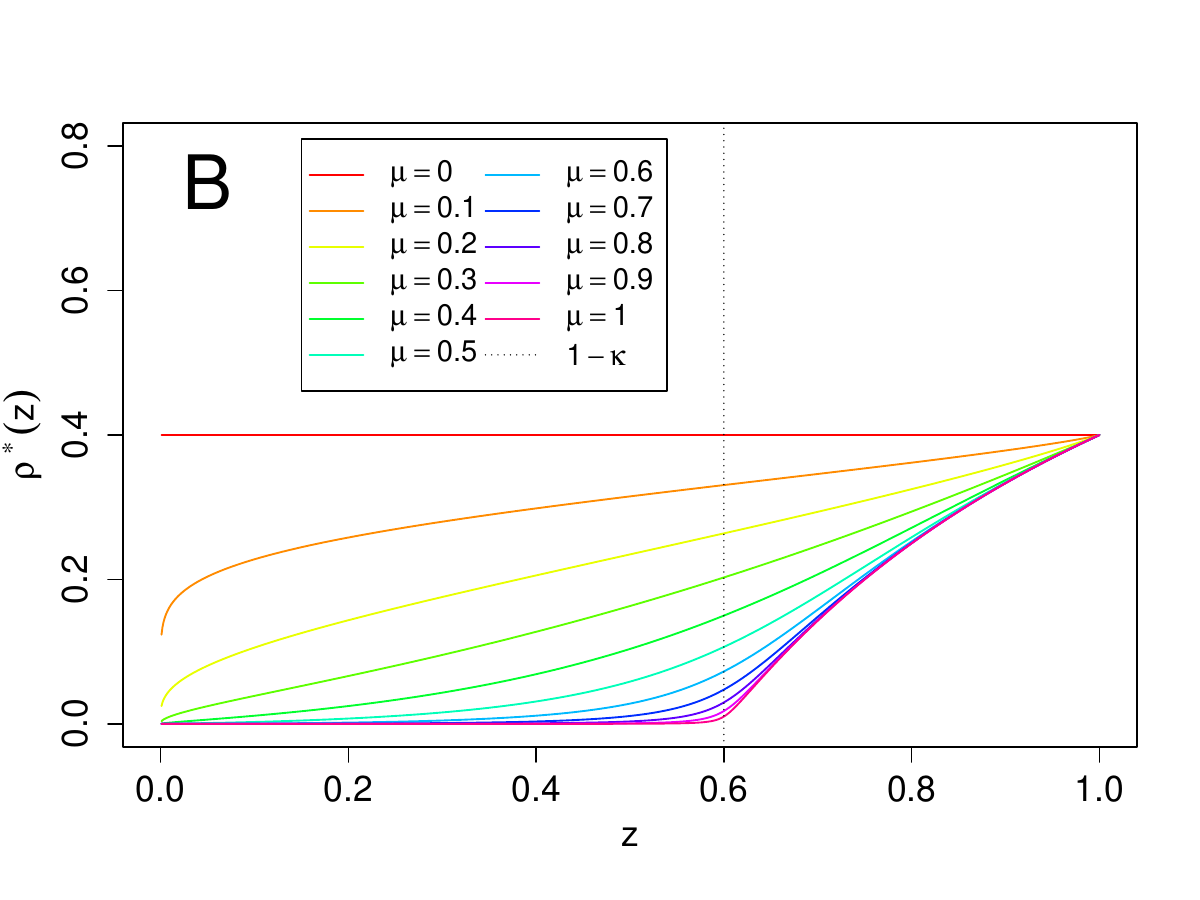}
            \caption{Varying $\mu$, $\kappa=0.4, \mu_1=\mu_2=\mu, q=0.15$}
        \end{subfigure}
        \begin{subfigure}[b]{0.5\linewidth}  
            \centering 
            \includegraphics[width=\linewidth]{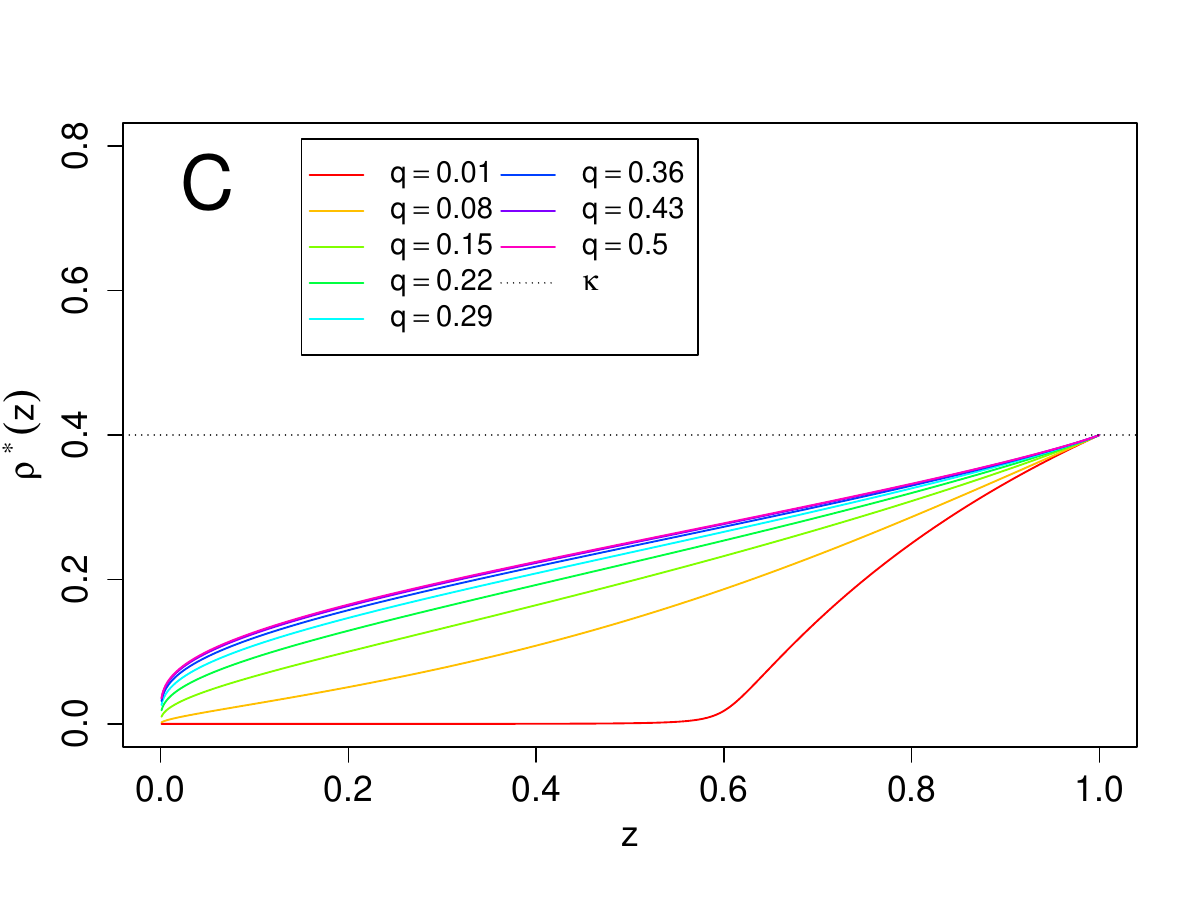}
            \caption{Varying $q$, $\kappa=0.4, \mu_1=\mu_2=2$}
        \end{subfigure}
        \caption{$\rho^*(z)$ for various SBM parameterizations.}
        \label{fig:sbmlimits}
\end{figure}

Panel (A) shows that as the groups become more unbalanced, so does the degree ranking, even with fixed SBM connection probabilities.
Panel (B) shows, as we expect, that as the SBM connection probabilities become more unbalanced in favor of in-group connections, the degree ranking becomes more unbalanced.
Finally, panel (C) shows a subtle effect of edge density on minority representation: for fixed minority proportion and signal strength, as the network becomes more sparse, the degree ranking becomes more unbalanced.
Note that $\rho^*(z)$ depends on $q$ only through $q(1-q)$, so the imbalance in the rankings will similarly increase for $q > 1/2$.

\section{Detecting and correcting for systematic bias}\label{sec:correction}

In this section, we develop novel methodology to detect and correct for systematic bias in rankings, under a 2-group SBM with observation errors, using the model defined in Section~\ref{subsec:sys_bias}.
\ifjrss
\else
This framework posits a construct model that we would like to use for ranking, but differential rates of (type II) error depending on group membership, which can disproportionately affect the rankings of minority group nodes under the observation model.
\fi
In general, the full model is unidentifiable from a single observed network, however the two fairness worldviews -- WAE and WYSIWYG -- can be interpreted formally as identifying assumptions.

Under the identifying assumptions of WYSIWYG, we assume that there are no observation errors, and thus there is no need to correct the ranking of observed degrees. 
Conversely, under the identifying assumptions of WAE, and assuming the construct model is ER, 
\ifjrss
\else
the observation model follows an SBM, as analyzed in Section~\ref{sec:theory}, and
\fi
any difference in the observation model edge densities is attributable to differences in the observation error rates.
Under WAE, the construct model is proportionally representative, making estimation of the construct model  $\rho_K$ a trivial task.
Moreover, there is a natural corrected degree ranking which takes the top nodes by degree in each group, so that the group proportions in any prefix of the ranking approximately match the overall minority proportion $\kappa$ \citep{zehlike2022fairness}.

With neither worldview as an identifying assumption, our goal is not necessarily to make the ranking proportionally representative, but instead to correct it to match an adaptive target representation profile: a plug-in estimate of the asymptotic minority representation profile for the construct model.
In practice, we compute corrected rankings sequentially. 
Initializing from the empty set, the next ranked node is either the top available minority group or top available majority group node, where ties are broken uniformly at random, and the group identity is chosen to minimize the discrepancy between the group proportion in the next prefix of the ranking and the corresponding value in the adaptive representation profile (see Algorithm~\ref{alg:correction}).
Any remaining discrepancy between the corrected and target representation profiles is due to the discreteness of the ranking problem.

\begin{algorithm} 
  \SetAlgoLined
  \caption{$\rho_K$-corrected degree ranking \label{alg:correction}}
  \KwResult{Returns corrected top-$K$ sets $\{\tilde{D}_K(A) \}_{K=1}^n$.}
  \KwIn{Adjacency matrix $A$, group labels $\bm{c}$, target representation profile $\{\rho_K\}_{K=1}^n$}
  $\tilde{R}_0(A) = 0$ ; $\tilde{D}_0(A) = \emptyset$ \;
  \For{$K \gets 1$ \KwTo $n$}{
    $g^*_K \gets \operatorname{argmin}_{g \in \{0,1\}} \left\{ \left\lvert \frac{(K-1)\tilde{R}_{K-1}(A) +  g}{K} - \rho_K \right\rvert \right\}$ \;
    Sample $j^*_K \in \operatorname{argmax}_j \{ d_j : j \notin \tilde{D}_{K-1}(A), c_j = 2 - g^*_K\}$
        uniformly at random. \;
    $\tilde{R}_K(A) \gets (\tilde{R}_{K-1}(A) + g^*_K)/K$ ; $\tilde{D}_K(A) \gets \tilde{D}_{K-1}(A) \cup \{ j^*_K \}$ \;
  }
\end{algorithm}

In order to develop methodology without an identifying worldview, we extend the work of \citet{chang2022estimation}, using independent network replicates to consistently estimate the group-dependent observation error rates and construct model parameters.
By \citet{chang2022estimation}, Theorem 1, replication is indeed a requirement here, it is impossible to consistently estimate the observation error rates without it.
Using the systematically biased model described above with only type II errors, two replicates are sufficient to estimate the blockwise observation error rates.

The availability of two independent replicates is common in various applications. For example, contact networks collected over different days \citep{fournet2014contact} and gene expression networks derived from repeated experiments under similar conditions \citep{faith2007large}. In other applications, such as for contact diary data, it may be possible to observe a single directed, noisy observation $Y$ of a construct network $A$, characterized by group-dependent directed type II (recall) errors with
$
    \mathbb{P}(Y_{ij}=0 \vert A_{ij}=1) = \beta_{c_ic_j}
$
for any $i,j\in [n]$; we may have $\beta_{12} \neq \beta_{21}$.


Directed noisy networks are often obtained via egocentric sampling, where an individual (the ``ego'') is asked to report their connections to a set of other individuals (the ``alters'') within social or communication networks. 
This is the case in our high school contact diary data.
As discussed in Section~\ref{sec:intro}, egocentric networks (including contact diary data) are affected by recall or reporting biases, which can introduce systematic errors in the observed ties. 
Beyond contact diary data, studies have also shown that male psychology researchers tend to systematically underreport or overlook ties with their female colleagues \citep{psychology_gender_bias}, illustrating another example of asymmetric, directed recall error. Similarly, research has found that recall discriminability in egocentric networks can be influenced by factors such as age and gender \citep{graves2017effects}.
One directed noisy network provides similar information to the two replicate case, and the error rate parameters can be estimated analogously. 
In the following, we focus on the two replicate, undirected case, with the corresponding methodology for a single directed noisy network provided in Section 1.4 of the Supplementary Material.




\subsection{Estimation}\label{subsec:estimation}

Under the (undirected) 2-group SBM with observation errors, we first generate $A$ according to a 2-group SBM with connection probability matrix parameterized by $(\kappa,q,\mu_1,\mu_2)$, then observe two conditionally independent and identically distributed replicates $Y$, $Y^*$ with type II observation errors, such that
$
    \mathbb{P}(Y_{ij}=0 \vert A_{ij}=1) = \beta_{c_ic_j},
$
independently over the node pairs with $i < j$; note $\beta_{12} = \beta_{21}$ by symmetry.
As in Section~\ref{sec:theory}, we reparameterize the model so that the marginal connection probabilities are within $O(n^{-1/2})$ by setting
$$
    \beta_{11} = \beta - \frac{\gamma_1}{\sqrt{n}}, \quad \beta_{22} = \beta - \frac{\gamma_2}{\sqrt{n}}, \quad \beta_{12} = \beta
$$
for constants $(\beta,\gamma_1,\gamma_2)$.
Per the discussion in Section~\ref{sec:theory}, this is the asymptotically interesting regime for analyzing representation profiles under the 2-group SBM. 
Marginally, both $Y$ and $Y^*$ follow a 2-group SBM with connection probability matrix
\begin{equation} \label{obs_matrix}
    \begin{pmatrix}
        (1-\beta + \gamma_1/\sqrt{n})(q + \mu_1/\sqrt{n}) & (1-\beta)q \\ (1-\beta)q & (1-\beta + \gamma_2/\sqrt{n})(q + \mu_2/\sqrt{n}) \end{pmatrix}.  
\end{equation}
To estimate the unknown parameters of the construct and observation models, we generalize the method of moments approach proposed in \citet{chang2022estimation}, by localizing according to node group memberships.
Define
\begin{align*}
    N_{g} &= \binom{n_g}{2}, \tabby &N_{\mathrm{b}} &= n_1n_2 \\
    \hat{u}_{1,g} &= \frac{1}{N_g} \sum_{i < j ~:~ c_i = c_j = g} Y_{ij}, \tabby &\hat{u}_{1,\mathrm{b}} &= \frac{1}{N_{\mathrm{b}}} \sum_{i < j ~:~ c_i \neq c_j} Y_{ij}, \\
    \hat{u}_{2,g} &= \frac{1}{2N_g} \sum_{i < j ~:~ c_i = c_j =g} \lvert Y_{ij} - Y_{ij}^* \rvert, \tabby &\hat{u}_{2,\mathrm{b}} &= \frac{1}{2N_{\mathrm{b}}} \sum_{i < j ~:~ c_i \neq c_j} \lvert Y_{ij} - Y_{ij}^* \rvert
\end{align*}
for $g=1,2$.
We estimate type II error rates by
\ifjrss
$\hat{\beta}_{\mathrm{x}} = \hat{u}_{2,\mathrm{x}} / \hat{u}_{1,\mathrm{x}}$, 
\else
\begin{equation*}
    \hat{\beta}_{\mathrm{x}} = \hat{u}_{2,\mathrm{x}} / \hat{u}_{1,\mathrm{x}},  
\end{equation*}
\fi
a consistent estimator for $\beta_{\mathrm{x}}$, replacing $\mathrm{x}$ by either ``$1$'', ``$2$'', or ``$\mathrm{b}$''.
In particular, we have
\begin{equation*}
    \lvert \hat{\beta}_{\mathrm{b}} - \beta \rvert = O_{\prob}(n^{-1}), \quad
    \left\lvert \hat{\beta}_{g} - \beta + \frac{\gamma_g}{\sqrt{n}} \right\rvert = O_{\prob}(n^{-1}),
\end{equation*}
for $g=1,2$,
which implies that
$\hat{\gamma}_g = \sqrt{n}(\hat{\beta}_{\mathrm{b}} - \hat{\beta}_{g})$ is a consistent estimator of $\gamma_g$ for $g=1,2$.
Finally, we define the following consistent estimators of the construct model parameters $(\kappa,q,\mu_1,\mu_2)$:
\begin{equation}
    \hat{\kappa} = \frac{n_1}{n}, \quad 
    \hat{q} = \frac{\hat{u}_{1,\mathrm{b}}}{1 - \hat{\beta}_{\mathrm{b}}}, \quad \hat{\mu}_g = \sqrt{n} \left( \frac{\hat{u}_{1,g}}{1 - \hat{\beta_g}} - \frac{\hat{u}_{1,\mathrm{b}}}{1 - \hat{\beta}_{\mathrm{b}}} \right)
    \ifjrss 
    \nonumber 
    \else 
    \label{construct_estimators}
    \fi
\end{equation}
for $g=1,2$.
The first estimator is simply the empirical proportion of minority group nodes.
\ifjrss
The other three estimators are found by adjusting the marginal blockwise connection probabilities under the observation model \eqref{obs_matrix} by the estimated type II error rates.
\else
For the other three, from \eqref{obs_matrix}, $\mathbb{E}\hat{u}_{1,\mathrm{b}} = (1-\beta)q$, and 
$$
    \mathbb{E}\hat{u}_{1,g} = \left\{ 1 - \left( \beta - \frac{\gamma_g}{\sqrt{n}} \right) \right\} \left( q + \frac{\mu_g}{\sqrt{n}} \right)
$$
for $g=1,2$. 
The estimators in \eqref{construct_estimators} are found by substituting the empirical $\hat{u}_{1,\mathrm{b}}$, $\hat{u}_{1,1}$ and $\hat{u}_{1,2}$ for their expectations; the error rate estimators $\hat{\beta}_{\mathrm{b}}$, $\hat{\beta}_1$, and $\hat{\beta}_2$ for $\beta$, $\beta - \gamma_1/\sqrt{n}$, and $\beta - \gamma_2/\sqrt{n}$, respectively; and solving for $q$, $\mu_1$, and $\mu_2$.
\fi
We will leverage these asymptotically normal estimators, and the general $\rho_K$-corrected ranking algorithm for detection (Section~\ref{subsec:detection}) and correction (Section~\ref{subsec:correction}) under the 2-group SBM with observation errors.

\subsection{Detection}\label{subsec:detection}

In this section, under our (undirected) 2-group SBM with observation errors, and assuming we have access to two independent network replicates, we develop an asymptotically valid test for the null hypothesis that no correction is needed for the degree ranking.
\ifjrss
\else
Recall that our 2-group SBM with observation errors has seven unknown parameters $(\kappa,p,\mu_1,\mu_2,\beta,\gamma_1,\gamma_2)$.
\fi

We develop a hypothesis test where the null hypothesis corresponds to a subset of the parameter space in which the corrected ranking is expected to be identical or nearly identical to the uncorrected ranking.
Formally, the null parameter space is made up of two subsets,
$
    H_{0,\bar{\beta}} = H_{0,\bar{\beta}}^{(\beta)} \cup H_0^{(\mu)},
$
where
$$
    H_{0,\bar{\beta}}^{(\beta)}: \gamma_1=\gamma_2=0, \beta \in [0,\bar{\beta}], \quad H_{0}^{(\mu)}: \gamma_1=\gamma_2=0, \mu_1=\mu_2=0
$$
for a tuning parameter $\bar{\beta} \in (0,1)$.
Under $H^{(\beta)}_{0,\bar{\beta}}$, the errors are balanced with $\gamma_1 = \gamma_2 = 0$, and the overall type II edge error rate is small, with $\beta  \leq \bar{\beta}$. This approximately corresponds to the WYSIWYG worldview.
We allow for small values of $\beta$ to fall in the null parameter space, since $\beta=0$ produces degenerate network replicates with no observation errors.
In practice, we suggest setting $\bar{\beta} = 0.1$; we see in Figure~\ref{fig:sbmlimits} (C) that the minority representation profiles in this range are nearly indistinguishable on the plot (compare the curves for $p=0.50$ and $p=0.43$).

To develop a hypothesis test for the first subset of the null parameter space, we first consider the degenerate case $\gamma_1=\gamma_2=0,\beta=0$. In this case we use a trivial test function $ \mathbb{I}(Y \neq Y^*)$ for any level $\alpha$, where $\mathbb{I}(\cdot)$ denotes the indicator function.
\ifjrss
\else
As $Y$ and $Y^*$ are exact observations of the same construct network $A$, this test will never reject, and defines a level $\alpha$ test for any $\alpha \geq 0$.
\fi
For $\gamma_1=\gamma_2=0$, $\beta \in (0,\bar{\beta}]$, the errors are balanced but non-zero. Assuming $Y \neq Y^*$, we find a $\chi_3^2$ asymptotic pivotal quantity.

Under $H_0^{(\mu)}$, the errors are not systematically biased, and the construct model has proportional representation, as $\mu_1=\mu_2=0$ and $\gamma_1 = \gamma_2 = 0$. This corresponds to the WAE worldview with no systematically biased errors.
In this case, we can find a $\chi_4^2$ asymptotic pivotal quantity.
Our analysis in Section~\ref{sec:theory}, in particular Figure~\ref{fig:sbmlimits}(c), verified that $\gamma_1=\gamma_2=0$ alone does not guarantee no systematic bias will occur.
Even when $\gamma_1=\gamma_2=0$, observation error can affect the representation profile of a network due to its influence on edge density. 

Our final test is defined in Theorem~\ref{thm:hyptest}, based on the intersection-union principle. 
Detailed derivations, and a proof of the following Theorem~\ref{thm:hyptest} are given in Section 1.3 of the Supplementary Material.
We confirm its empirical performance in simulation in Section~\ref{subsubsec:detection_sims}.

\begin{theorem} \label{thm:hyptest}
    Suppose $q \in (0,1)$, $\beta \in (0,1)$ and $\bar{\beta} \in (0,1)$.
    If $Y \neq Y^*$, define statistics
    \begin{align*}
        Q_{n,\bar{\beta}}(\beta) &= (n(\hat{\beta}_{\mathrm{b}} - \beta),\sqrt{n}\hat{\gamma}_1,\sqrt{n}\hat{\gamma}_2)^{\tp} \widehat{\Theta}_{\beta} (n(\hat{\beta}_{\mathrm{b}} - \beta),\sqrt{n}\hat{\gamma}_1,\sqrt{n}\hat{\gamma}_2), \\
        Q_{n,\mu} &= n (\hat{\gamma}_1,\hat{\gamma}_2,\hat{\mu}_1,\hat{\mu}_2)^{\tp} \widehat{\Theta}_{\mu} (\hat{\gamma}_1,\hat{\gamma}_2,\hat{\mu}_1,\hat{\mu}_2),
    \end{align*}
    where $(\hat{\beta}_{\mathrm{b}},\hat{\gamma}_1,\hat{\gamma}_2,\hat{\mu}_1,\hat{\mu}_2)$ are defined in Section~\ref{subsec:estimation}, and $\widehat{\Theta}_{\beta}$ and $\widehat{\Theta}_{\mu}$ are consistent estimators of asymptotic precision matrices (detailed definitions are given in Section 1.3 of the Supplementary Material).
    Define a test function
    $$
        \Psi_{\alpha}(Y,Y^*,\bm{c}) = \mathbb{I}(Y \neq Y^*) \cdot \mathbb{I}\left\{ \min_{\beta \in [0,\bar{\beta}]} Q_{n,\bar{\beta}}(\beta) > c^{(3)}_{\alpha} \right\} \cdot \mathbb{I}\left( Q_{n,\mu} > c^{(4)}_{\alpha} \right),
    $$
    where $c^{(\nu)}_{\alpha}$ is the $1-\alpha$ quantile of a $\chi^2_{\nu}$ distribution.
    Then, under $H_{0,\bar{\beta}}$, the test defined by $\Psi_{\alpha}$ controls type I errors asymptotically at level $\alpha$.
\end{theorem}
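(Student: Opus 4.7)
The proof plan is to exploit the union structure of the null hypothesis $H_{0,\bar{\beta}} = H_{0,\bar{\beta}}^{(\beta)} \cup H_0^{(\mu)}$ together with the product form of $\Psi_\alpha$. For any parameter $\theta$ in the null, $\{\Psi_\alpha = 1\}$ is an intersection of three events, so its probability is bounded above by the probability of any single factor. This reduces the union-null problem to verifying that, for each subset of $H_{0,\bar{\beta}}$, one factor of $\Psi_\alpha$ by itself achieves asymptotic level $\alpha$ on that subset.

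For $\theta \in H_{0,\bar{\beta}}^{(\beta)}$, I would split into the boundary case $\beta_0 = 0$ and the interior case $\beta_0 \in (0,\bar{\beta}]$. When $\beta_0 = 0$, there are no observation errors, so $Y = Y^*$ with probability one and the leading factor $\mathbb{I}(Y \neq Y^*)$ forces $\Psi_\alpha = 0$. When $\beta_0 > 0$, I would apply CLTs to the averages $\hat{u}_{j,\mathrm{x}}$ (each a sum of either $\binom{n_g}{2}$ or $n_1 n_2$ independent Bernoulli summands) and use the delta method for ratios to obtain
\[
\big( n(\hat{\beta}_{\mathrm{b}} - \beta_0),\; \sqrt{n}\hat{\gamma}_1,\; \sqrt{n}\hat{\gamma}_2 \big)^{\tp} \convd \mathcal{N}(\bm{0}, \Sigma_\beta).
\]
The unusual $n$-scaling of $\hat{\beta}_{\mathrm{b}} - \beta_0$ arises because the underlying average is over $N_{\mathrm{b}} = \Theta(n^2)$ node pairs, while the $\sqrt{n}$ scaling of $\hat{\gamma}_g$ comes directly from its definition. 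Combined with consistency of $\widehat{\Theta}_\beta$ for $\Sigma_\beta^{-1}$ and Slutsky's theorem, this gives $Q_{n,\bar{\beta}}(\beta_0) \convd \chi^2_3$. Since $\min_{\beta \in [0,\bar{\beta}]} Q_{n,\bar{\beta}}(\beta) \leq Q_{n,\bar{\beta}}(\beta_0)$,
\[
\mathbb{P}_\theta\!\left\{ \min_{\beta \in [0,\bar{\beta}]} Q_{n,\bar{\beta}}(\beta) > c_\alpha^{(3)} \right\} \leq \mathbb{P}_\theta\{ Q_{n,\bar{\beta}}(\beta_0) > c_\alpha^{(3)} \} \to \alpha,
\]
which upper-bounds $\mathbb{P}_\theta(\Psi_\alpha = 1)$.

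For $\theta \in H_0^{(\mu)}$, the analogous CLT plus delta-method calculation targets the four-dimensional vector $\sqrt{n}(\hat{\gamma}_1, \hat{\gamma}_2, \hat{\mu}_1, \hat{\mu}_2)^{\tp}$, which converges to $\mathcal{N}(\bm{0}, \Sigma_\mu)$. Combined with consistency of $\widehat{\Theta}_\mu$ for $\Sigma_\mu^{-1}$, this yields $Q_{n,\mu} \convd \chi^2_4$, so $\mathbb{P}_\theta(Q_{n,\mu} > c_\alpha^{(4)}) \to \alpha$, and the product structure of $\Psi_\alpha$ again finishes the bound.

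The main technical obstacle is the two joint CLT/delta-method derivations, particularly tracking the different rates of convergence ($n^{-1}$ for $\hat{\beta}_{\mathrm{b}} - \beta$ versus $n^{-1/2}$ for $\hat{\gamma}_g$ and $\hat{\mu}_g$) and verifying that $\Sigma_\beta$ and $\Sigma_\mu$ are positive definite under their respective nulls so the quadratic forms have the claimed $\chi^2$ limits with the stated degrees of freedom. These calculations, along with explicit formulas for $\Sigma_\beta$, $\Sigma_\mu$, and their consistent estimators $\widehat{\Theta}_\beta$, $\widehat{\Theta}_\mu$, would be relegated to Appendix~\ref{app:detection}. Given those in hand, the monotonicity argument for the minimum over $\beta$ and the handling of the $\beta = 0$ boundary via $\mathbb{I}(Y \neq Y^*)$ are essentially immediate.
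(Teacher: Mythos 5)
Your proposal is correct and follows essentially the same route as the paper: the paper also decomposes the union null (with the degenerate $\beta=0$ case handled by the factor $\mathbb{I}(Y \neq Y^*)$, the case $\gamma_1=\gamma_2=0,\ \beta=b\in(0,\bar{\beta}]$ handled by the $\chi^2_3$ pivot $Q_{n,\bar{\beta}}(b)$ evaluated at the true value so that the minimum over $[0,\bar\beta]$ is dominated, and $H_0^{(\mu)}$ handled by the $\chi^2_4$ pivot $Q_{n,\mu}$), with the asymptotic normality of $(n(\hat{\beta}_{\mathrm{b}}-\beta),\sqrt{n}\hat{\gamma}_1,\sqrt{n}\hat{\gamma}_2)$ and $\sqrt{n}(\hat{\mu}_1,\hat{\mu}_2,\hat{\gamma}_1,\hat{\gamma}_2)$ established exactly as you outline, via a Lindeberg--Feller CLT for the blockwise moments plus the delta method, and positive definiteness plus consistency of $\widehat{\Theta}_{\beta},\widehat{\Theta}_{\mu}$ verified in the appendix. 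The only cosmetic difference is that the paper phrases the final step as an intersection--union test over the refined union of null subsets, which is the same bound you obtain by dominating each factor of $\Psi_\alpha$ at the true parameter value.
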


\begin{remark}
Note $Q_{n,\bar{\beta}}(\beta)$ is strictly convex, as it is a quadratic function of $\beta$, and $\widehat{\Theta}_{\beta}$ is positive definite. Thus, $Q_{n,\bar{\beta}}(\beta)$ will have a unique minimum on the closed interval $[0,\bar{\beta}]$.
\end{remark}


\subsection{Ranking Correction} \label{subsec:correction}

Given that we reject the hypothesis in Section~\ref{subsec:detection}, and conclude there is a need for ranking correction, it is natural use the estimators of the construct model SBM parameters and error rates to correct the degree ranking, which we summarize in Algorithm~\ref{alg:plugin_correction}.

\begin{algorithm}
  \SetAlgoLined
  \caption{Plug-in corrected degree ranking  \label{alg:plugin_correction}}
  \KwResult{Corrected top-$K$ sets based on estimated construct model}
  \KwIn{Network replicates $Y$, $Y^*$, group labels $\bm{c}$}
  
  Compute sufficient statistics $(n_1,\hat{u}_{1,1},\hat{u}_{2,1},\hat{u}_{1,2},\hat{u}_{2,2},\hat{u}_{1,\mathrm{b}},\hat{u}_{2,\mathrm{b}})$ from $Y$, $Y^*$, and $\bm{c}$ \; 
  $\hat{\beta}_{\mathrm{b}} \gets \hat{u}_{2,\mathrm{b}} ~/~ \hat{u}_{1,\mathrm{b}} $ ; $\hat{\beta}_{1} \gets \hat{u}_{2,1} ~/~ \hat{u}_{1,1} $ ; $\hat{\beta}_{2} \gets \hat{u}_{2,2} ~/~ \hat{u}_{1,2} $ \;
  $\hat{\kappa} \gets n_1 ~/~ n$ ; $\hat{q} \gets \hat{u}_{1,\mathrm{b}} ~/~ (1- \hat{\beta}_{\mathrm{b}})$ \;
  $\hat{\mu}_1 \gets \sqrt{n}[ \{ \hat{u}_{1,1} ~/~ (1 - \hat{\beta}_1)\} - \hat{q} ] $ ; $\hat{\mu}_2 \gets \sqrt{n}[ \{ \hat{u}_{1,2} ~/~ (1 - \hat{\beta}_2)\} - \hat{q} ] $\;
  \For{$K \gets 1$ \KwTo $n$}{
    $\hat{\rho}_K^* = \rho^*(K/n ~;~ \hat{\kappa},\hat{q},\hat{\mu}_1,\hat{\mu}_2)$ (see Theorem~\ref{thm:SBM_limits}) \;
  }
  Compute $\hat{\rho}_K$-corrected degree ranking from $Y$, $\bm{c}$, $\{\hat{\rho}_K^*\}_{K=1}^n$ (Algorithm~\ref{alg:correction})
\end{algorithm}


This correction algorithm shows the utility of a good parametric approximation to the minority representation profile: if we can estimate the model parameters, we can estimate a target representation profile, and adaptively correct the degree ranking.
We evaluate the performance of this correction algorithm on simulated networks in Section~\ref{subsubsec:correction_sims}.

\subsection{Simulations}

\subsubsection{Detecting systematic bias} \label{subsubsec:detection_sims}

Under an errorfully observed SBM with type II observation errors, we perform simulations to evaluate the operating characteristics of the detection approach described in Section~\ref{subsec:detection}.
We evaluate our approach for two noisy network replicates with $n \in \{20,50,100\}$ nodes and $\kappa=1/4$. 
In order to ignore variability in $n_1$, we fix $n_1$ to be exactly equal to the integer floor of its expectation, $\lfloor \kappa n \rfloor$.
We evaluate empirical power with nominal type I error rate $\alpha=1/10$ for $400$ independent replications.
The remaining free model parameters are $q$, $\mu_1$, $\mu_2$, $\beta$, $\gamma_1$, and $\gamma_2$. For simplicity, we will set $\mu_1=\mu_2$ and $\gamma_1=\gamma_2$ throughout.

In scenario (A), we fix $q=1/5$, $\gamma_1=\gamma_2=0$ and vary $\beta$. 
As increasing $\beta$ attenuates the signal strength, we set $\mu_1 = \mu_2 = 1 / 2(1-\beta)$. 
For $\beta \leq \bar{\beta} = 0.1$, the null hypothesis ($H^{(\beta)}_{0,\bar{\beta}}$) is true; the observation error rate is low and balanced.
In scenario (B), we fix $q=1/4$, $\beta=1/5$, $\gamma_1=\gamma_2=0$ and vary $\mu$, the common value of $\mu_1$ and $\mu_2$.
For $\mu=0$, the null hypothesis ($H^{(\mu)}_{0}$) is true; the construct and observation models are both ER.
In scenario (C), we fix $q=1/4$, $\mu_1=\mu_2=0$, $\beta=1/4$, and vary $\gamma$, the common value of $\gamma_1$ and $\gamma_2$. 
For $\gamma=0$, the null hypothesis ($H^{(\mu)}_{0}$) is true; the construct and observation models are both ER.
Finally, in scenario (D) we fix $\mu_1=\mu_2=1/10$, $\beta=1/5$, $\gamma_1=\gamma_2=1/2$ and vary $q$.
None of the scenario (D) settings are null.
All results are shown in Figure~\ref{fig:detection}. 
Power $1$ is denoted by a dotted line, and the nominal level $\alpha=1/10$ is denoted by a dashed line.

\begin{figure}[!ht]
  \centering
  \begin{subfigure}[b]{0.48\linewidth}
    \centering
    \includegraphics[width=\linewidth]{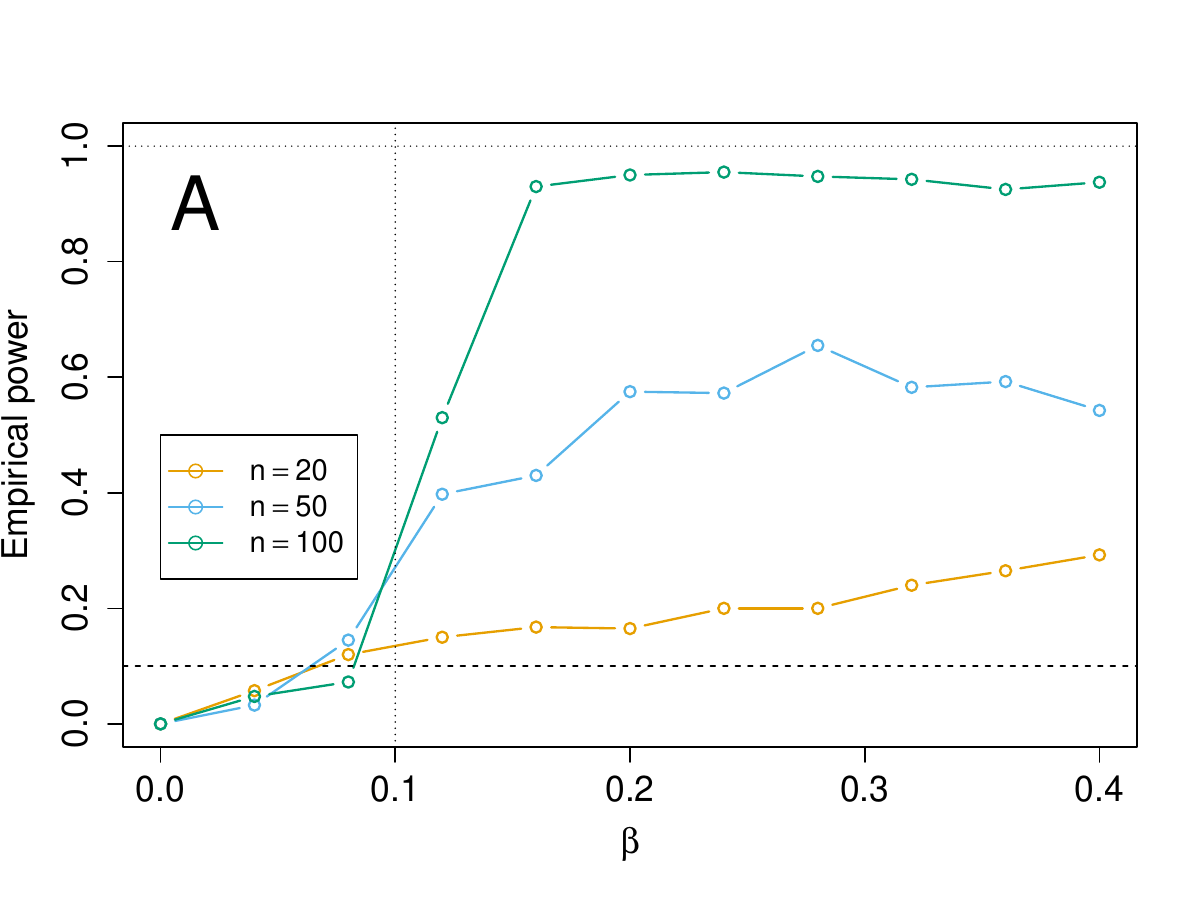}
    \caption{$q=\tfrac{1}{5}$, $\mu_1=\mu_2=\tfrac{1}{2}(1-\beta)$, $\gamma_1=\gamma_2=0$, varying $\beta$.}
  \end{subfigure}
  \hfill
  \begin{subfigure}[b]{0.48\linewidth}
    \centering
    \includegraphics[width=\linewidth]{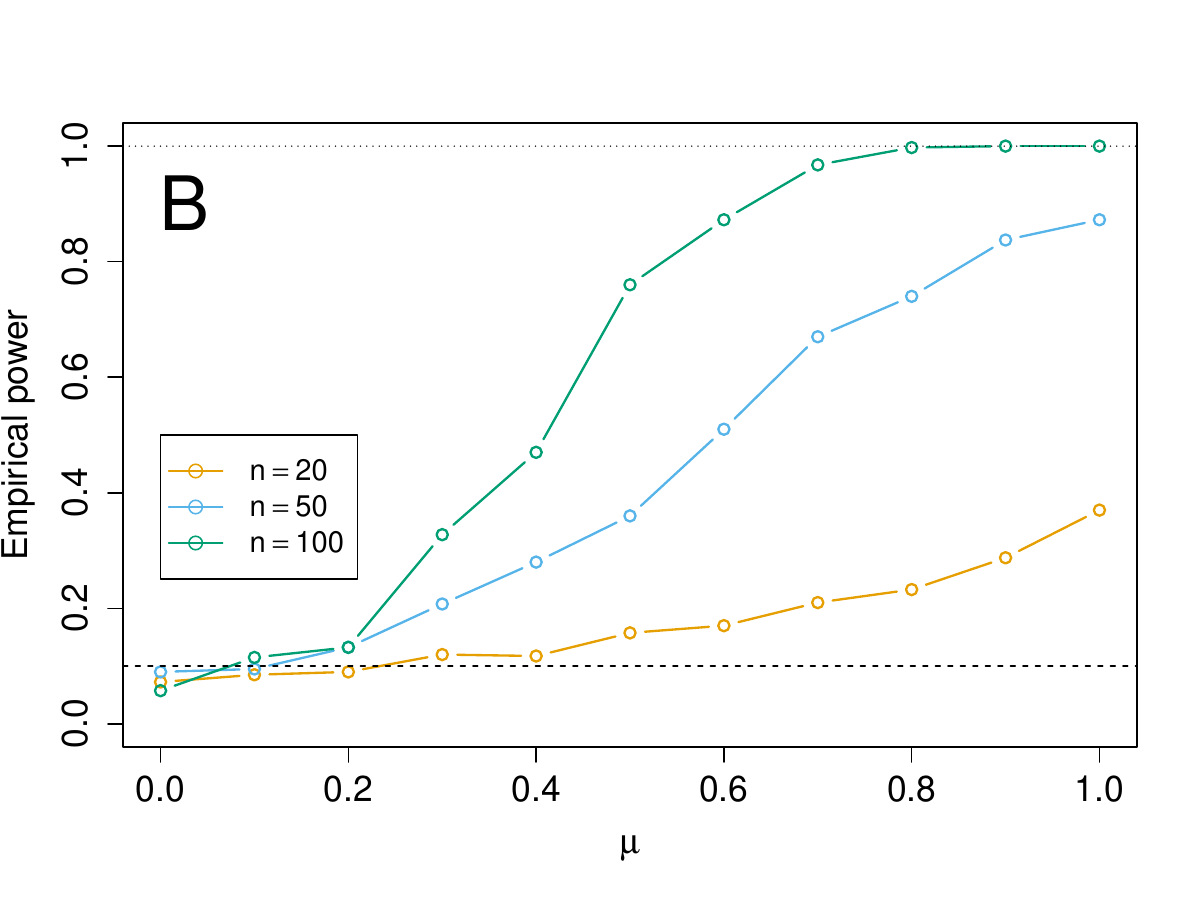}
    \caption{$q=\tfrac{1}{4}$, $\beta=\tfrac{1}{5}$, $\gamma_1=\gamma_2=0$, varying $\mu$, where $\mu_1=\mu_2=\mu$.}
  \end{subfigure}

  \medskip 

  \begin{subfigure}[b]{0.48\linewidth}
    \centering
    \includegraphics[width=\linewidth]{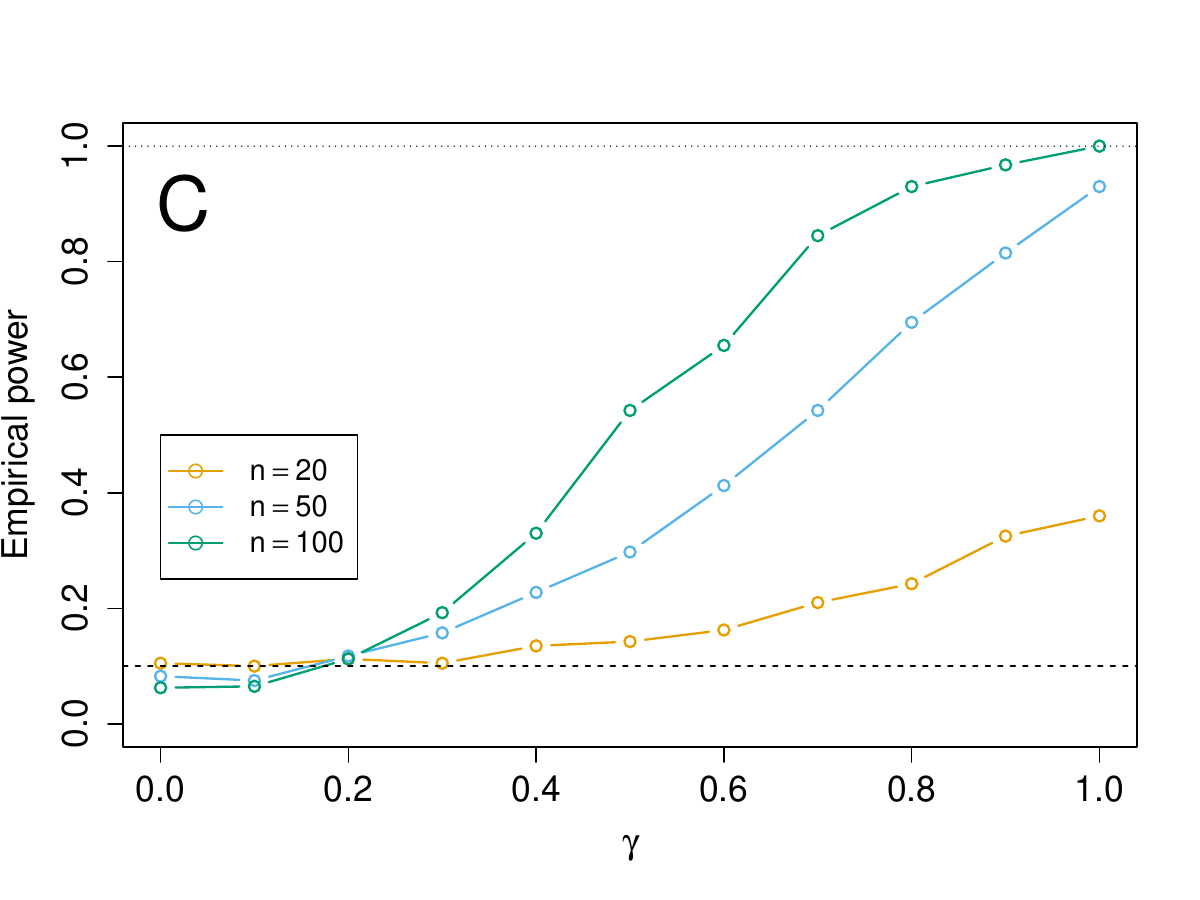}
    \caption{$q=\tfrac{1}{4}$, $\mu_1=\mu_2=0$, $\beta=\tfrac{1}{4}$, varying $\gamma$, where $\gamma_1=\gamma_2=\gamma$.}
  \end{subfigure}
  \hfill
  \begin{subfigure}[b]{0.48\linewidth}
    \centering
    \includegraphics[width=\linewidth]{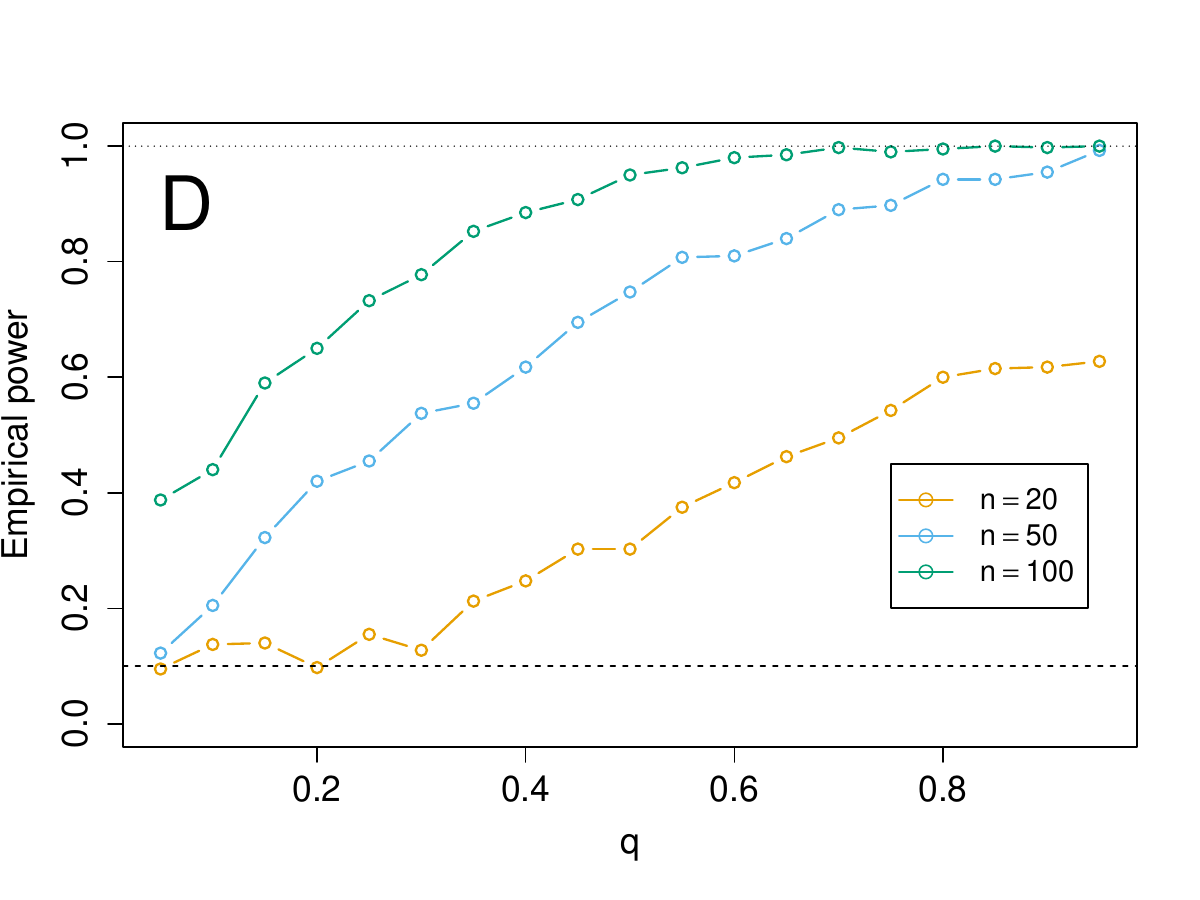}
    \caption{$\mu_1=\mu_2=\tfrac{1}{10}$, $\beta=\tfrac{1}{5}$, $\gamma_1=\gamma_2=0.5$, varying $q$.}
  \end{subfigure}

  \caption{Empirical rejection rate of systematic bias detection for various parameter settings, 2-group SBM with $\kappa=0.25$.}
  \label{fig:detection}
\end{figure}

We see that the type I error rate is controlled in the null settings in scenarios (A) (left of the vertical dotted line), (B) and (C).
In alternative settings
in scenarios (A), (B) and (C), power gets very close to $1$ as $n$ and any of the three signal parameters $\beta$, $\mu$, or $\gamma$ increase.
In scenario (D), none of these signal parameters are varied.
However, the effective sample size to estimate the type II error rate parameters $\beta_w$ and $\beta_b$ depends on the number of edges present, not the number of potential edges, thus test power increases with overall edge density and $n$.

\subsubsection{Correcting systematic bias} \label{subsubsec:correction_sims}

Under an errorfully observed SBM with type II observation errors, we perform simulations to compare our parametric correction approach to an uncorrected ranking and a proportionally representative ranking.
Ties in rankings are resolved uniformly at random.

\ifjrss
\else
In all settings, we generate a construct adjacency matrix $A$ on $n$ nodes with parameters $\kappa$, $p$, $\mu_1$ and $\mu_2$.
We then generate two conditionally independent observed adjacency matrices with type II error rates parameterized by $\beta$, $\gamma_1$, and $\gamma_2$.
\fi

\ifjrss
\else
In Figure~\ref{fig:correction_varyb} we show the results of the three ranking approaches in terms of their Spearman rank correlation coefficient, compared to a construct model node ranking based on the degrees in the construct network $A$. 
We set $n=200$, $\kappa=2/5$, $q=1/2$, $\mu_1=\mu_2=-2$, and vary $\gamma$, the common value of $\gamma_1$ and $\gamma_2$.
We set $\beta$ as a function of $\gamma$ such that $\beta - \gamma/\sqrt{n} = 1/10$, so that the within-group type II error rates are fixed.
Note that when $\mu < 0$ and $\gamma > 0$, the construct network model favors between-group connections, while for sufficiently large $\gamma$ (greater than about 3), the observed network favors within-group connections.

\begin{figure}
        \centering
        \includegraphics[width=0.7\linewidth]{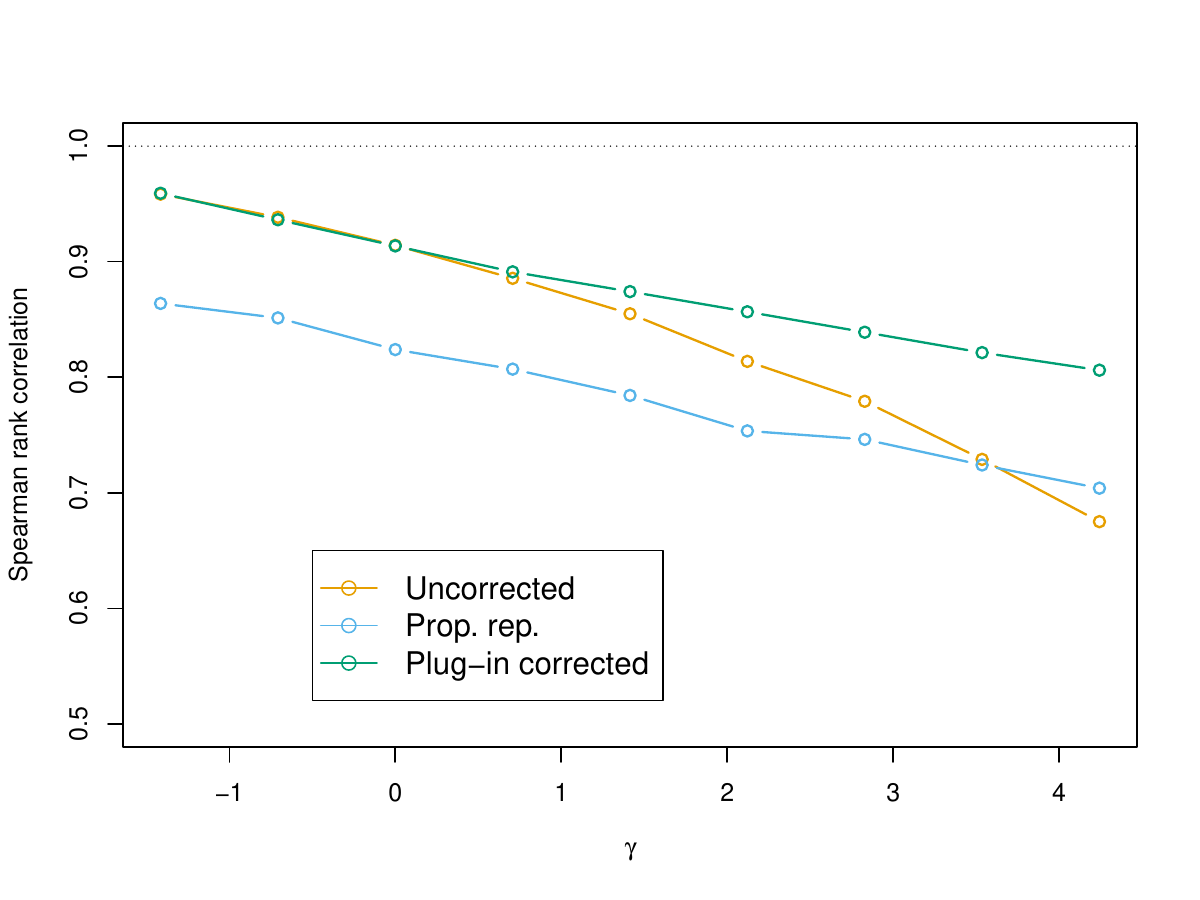}
        \caption{Performance of ranking correction varying $\gamma$, where $\gamma = \gamma_1 = \gamma_2$; $n=200$, $\kappa=2/5$, $q=1/2$, $\mu_1=\mu_2=-2$, $\beta = 1/10 + \gamma/\sqrt{n}$.} 
        \label{fig:correction_varyb}
\end{figure}

For $\gamma > 0$, our plug-in correction performs the best at approximating the ranking based on $A$.
None of the methods achieve a perfect correlation due to the intrinsic variability of the observed degrees.
The performance of the uncorrected ranking is comparable to that of the plug-in correction when the error rates are equal.
\fi

\begin{figure}[!ht]
  \centering
  \begin{subfigure}[b]{0.475\linewidth}
    \centering
    \includegraphics[width=\linewidth]{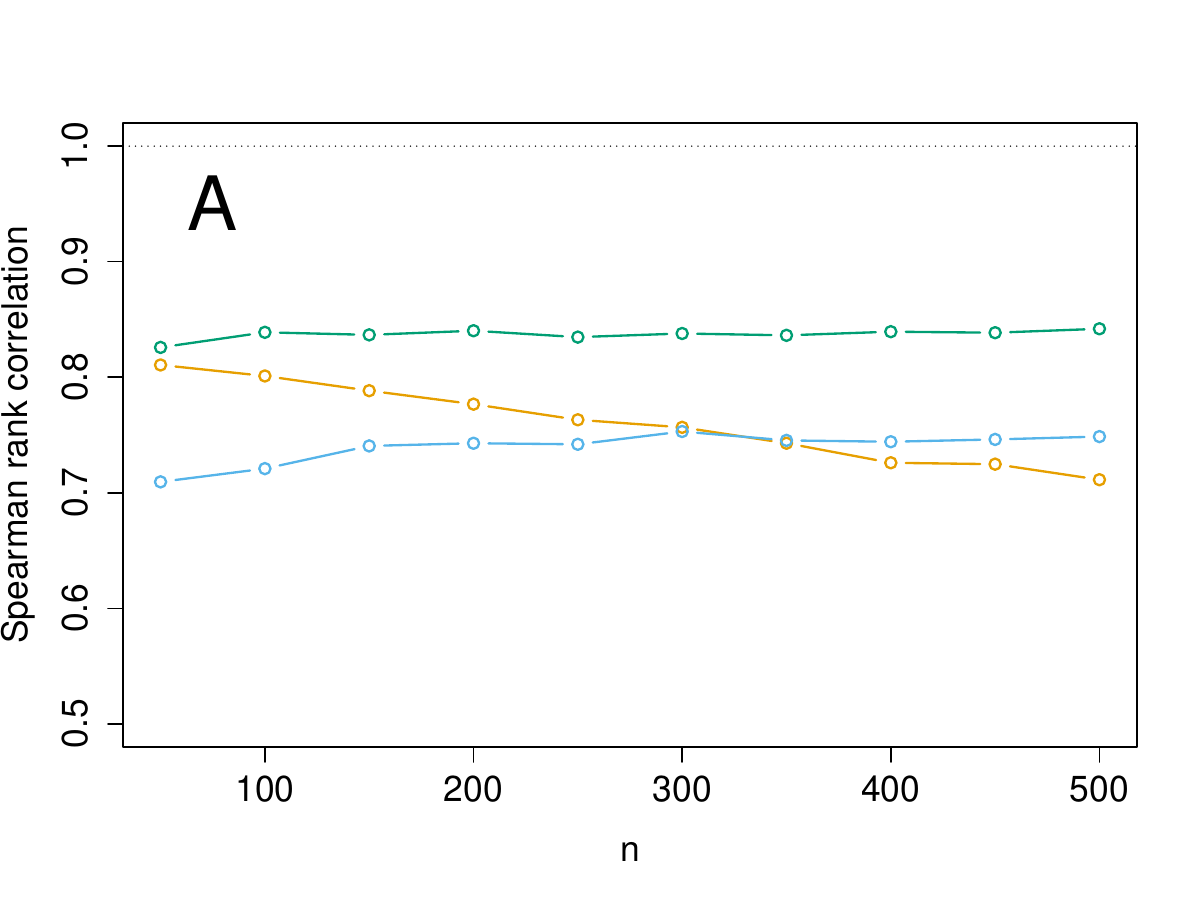}
    \caption{SBM with observation errors; $\mu_1=\mu_2=-2$, $\beta=\tfrac{3}{10}$, $\gamma_1=\gamma_2 = \tfrac{\sqrt{n}}{5}$, varying $n$.}
  \end{subfigure}
  \hfill
  \begin{subfigure}[b]{0.475\linewidth}
    \centering
    \includegraphics[width=\linewidth]{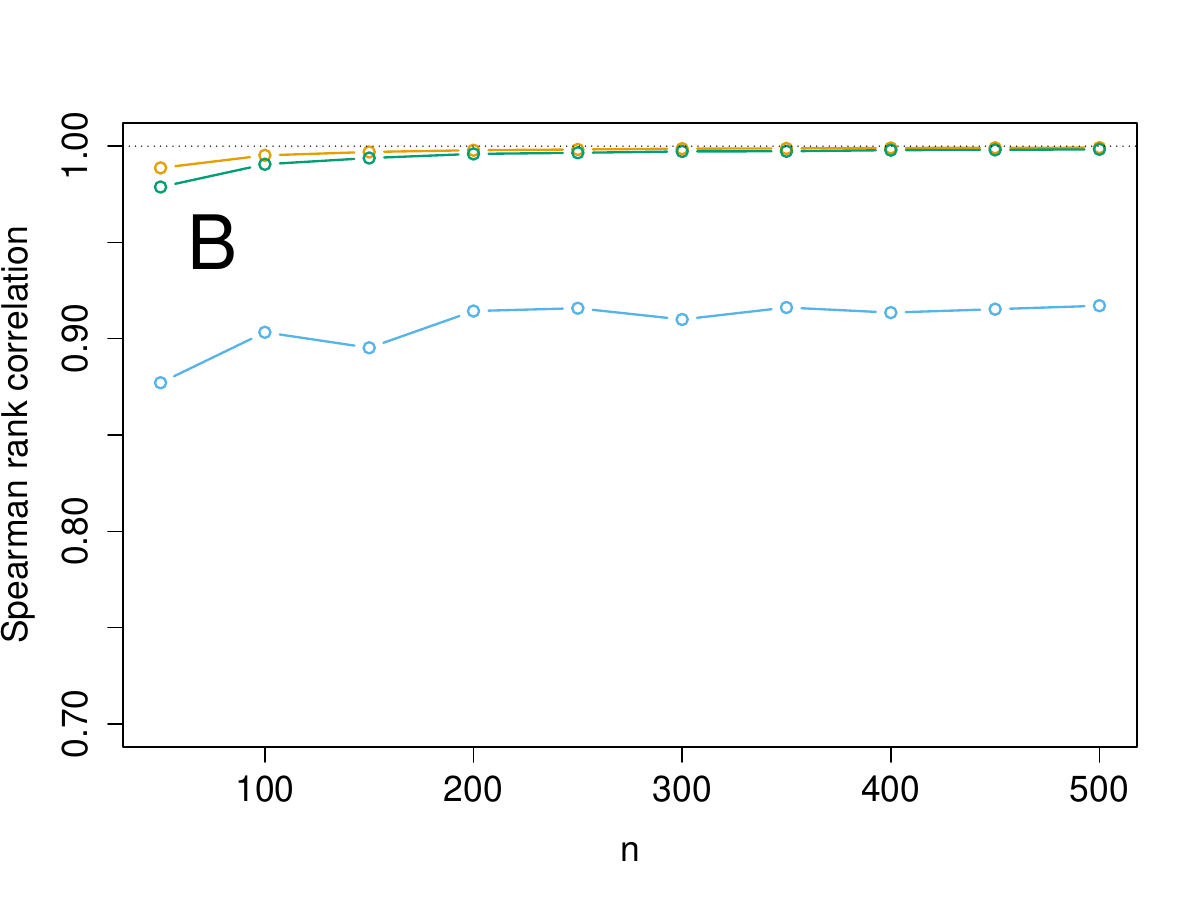}
    \caption{SBM with no observation errors; $\mu_1=\mu_2=-2$, $\beta=0$, $\gamma_1=\gamma_2=0$, varying $n$.}
  \end{subfigure}

  \medskip 

  \begin{subfigure}[b]{0.475\linewidth}
    \centering
    \includegraphics[width=\linewidth]{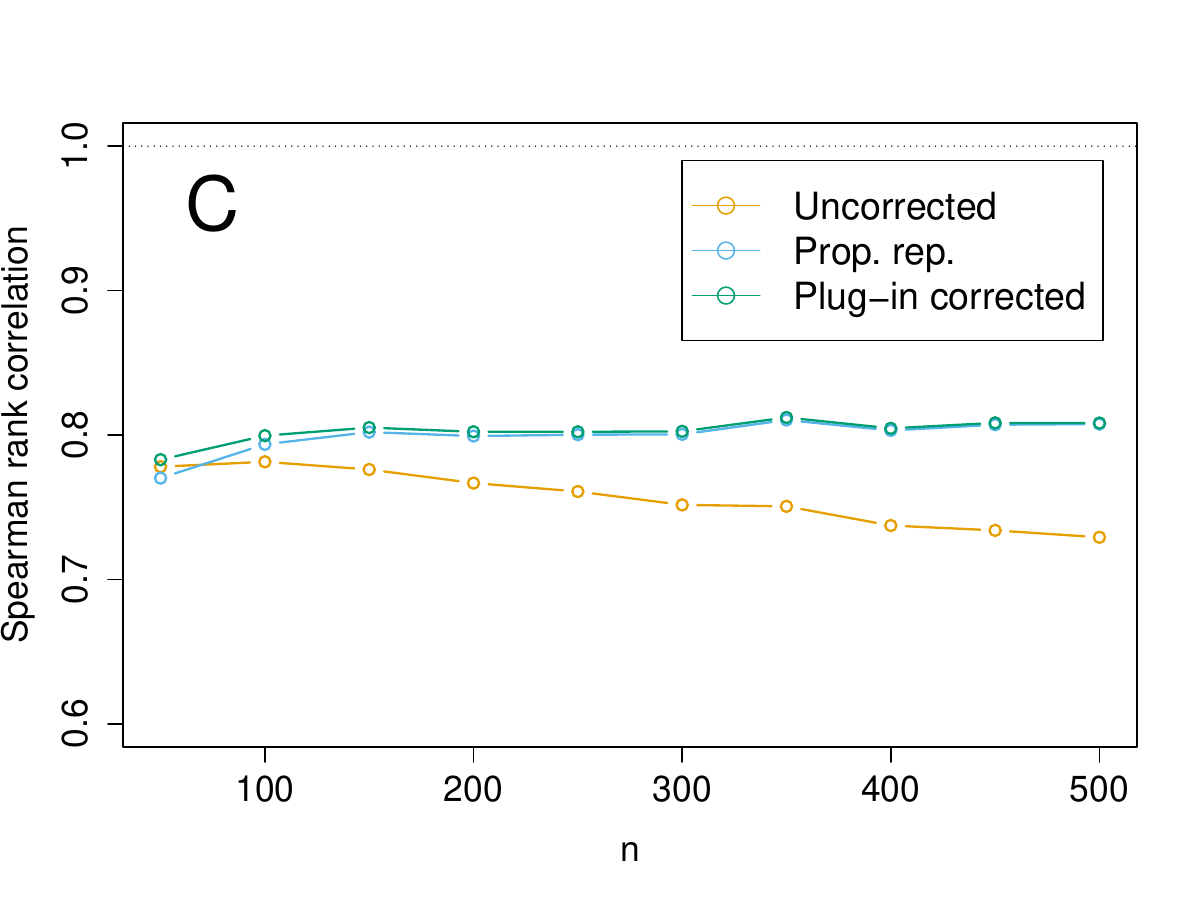}
    \caption{ER model with observation errors; $\mu_1=\mu_2=0$, $\beta=\tfrac{3}{10}$, $\gamma_1=\gamma_2=\tfrac{\sqrt{n}}{5}$, varying $n$.}
  \end{subfigure}
  \hfill
  \begin{subfigure}[b]{0.475\linewidth}
    \centering
    \includegraphics[width=\linewidth]{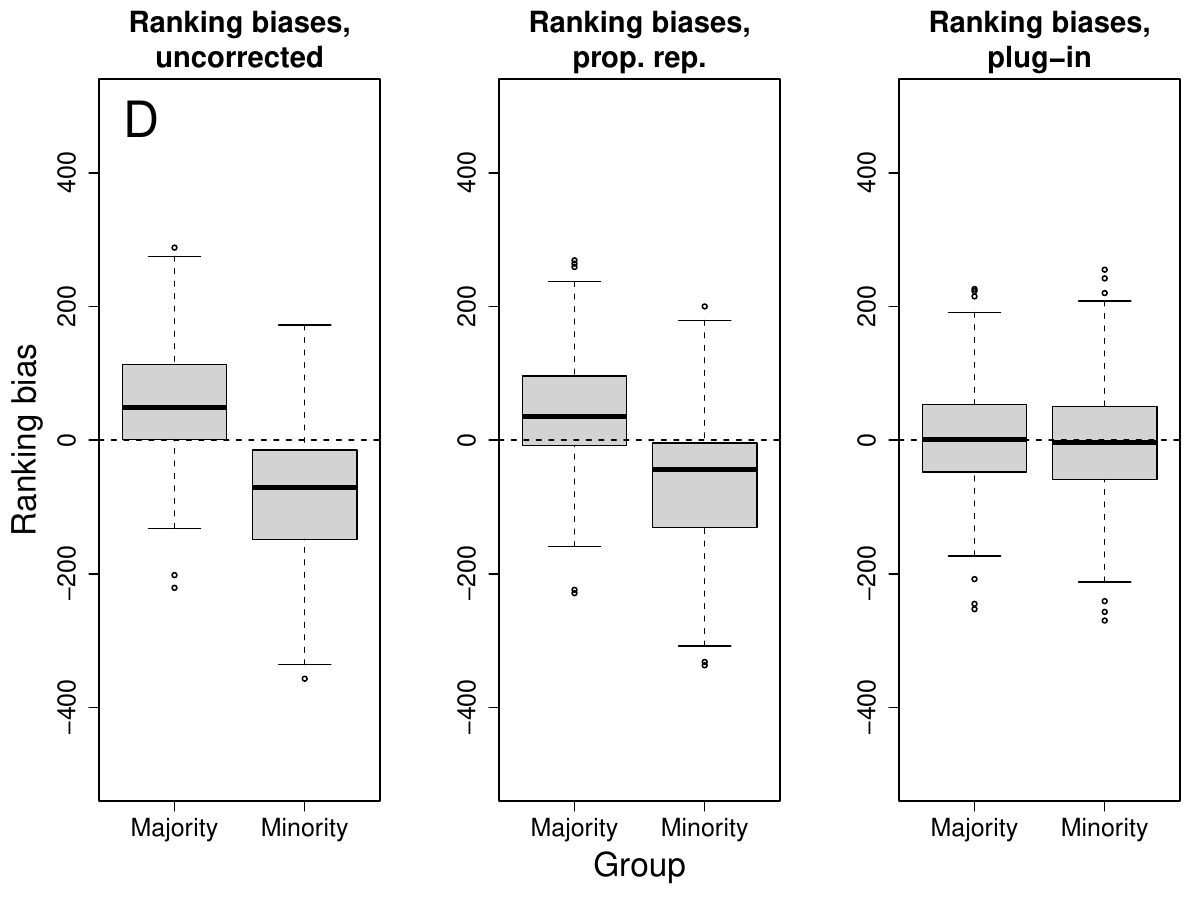}
    \caption{Box plots of ranking bias; $n=500$, $\mu_1=\mu_2=-2$, $\beta=\tfrac{3}{10}$, $\gamma_1=\gamma_2=2\sqrt{5}$. Ranking bias: ground truth ranking minus estimated/corrected ranking. Positive bias implies nodes are ranked too high.}
  \end{subfigure}

  \caption{(A)–(C): Performance of ranking correction varying $n$ under SBM, $\kappa=\tfrac{2}{5}$, $q=\tfrac{1}{2}$. (D): visual inspection of ranking bias.}
  \label{fig:correction_varyn}
\end{figure}

In Figure~\ref{fig:correction_varyn}, we show the performance of 3 parameter settings as $n$ increases.
\ifjrss
Performance is in terms of Spearman rank correlation coefficient, compared to a construct model node ranking based on the degrees in the construct network $A$. 
In all cases, although our approach is based on an asymptotic approximation, its performance is good even for $n = 50$.
\else
\fi
In scenario (A), the construct model ranking favors the minority class, but the observation errors lead to minority underrepresentation among highly ranked nodes.
In scenario (B), there are no observation errors, 
so there is no need for ranking correction.
Moreover, as the construct model does not satisfy proportional representation, the proportionally representative ranking correction performs poorly relative to the other two approaches.
In scenario (C), the construct model does satisfy proportional representation, so the proportionally representative ranking correction is a good choice.
In panel (D), we show box plots of the average bias in the rankings for the different groups, for one replicate, under scenario (A) with $n=500$. 
We can see clearly that the uncorrected ranking favors the majority class. While the proportionally representative ranking does some correction to reduce this systematic bias, our plug-in ranking correction fully removes it, as it detects that in the construct network, between-group connections occur at a higher rate than within-group connections.

\ifjrss
\else
Although our method relies on an asymptotic approximation, its performance is good even for small $n$, and in panel (A) it outperforms the uncorrected and proportional approaches for $n$ as small as $50$.
In panel (B), there are no observation errors and the performance of the plug-in correction is comparable to that of the uncorrected ranking, and nearly identical for $n \geq 300$.
In panel (C), the construct ranking is already proportionally representative on average, the plug-in correction is nearly identical to the proportionally representative correction for $n \geq 100$.
\fi

\section{Analysis of high school contact data}\label{sec:realdata}

We analyze the contact diary data introduced in Section~\ref{sec:intro} using our proposed estimation, testing, and correction procedures. The dataset consists of directed reports of contacts among 120 students at Lycée Thiers in Marseilles, France, collected in December 2013 \citep{mastrandrea2015contact}. In addition to reported contacts, the data record each student’s gender (male/female) and academic specialization—MP (mathematics and physics), PC (physics and chemistry), and BIO (biology)—with each specialization comprising multiple classes (e.g., BIO1, BIO2, BIO3 for biology).  

For ranking, we focus on in-degrees, which better reflect how often a student is recognized by peers as a contact. This choice is supported by \citet{mastrandrea2015contact}, who found that in-degree is positively correlated with degree in the corresponding sensor-based network, whereas the
correlation for out-degree was insignificant.

For data preprocessing, we first excluded class MP2, which had only one student and no interactions. Figure~\ref{fig:combined_connection_probability} shows the estimated connection probability matrix across classes and genders. Most classes exhibit a relatively homogeneous connection pattern, with more interactions within the same gender, while BIO1 and BIO3 display more cross-gender connections. 

Although our methods can handle both assortative and disassortative structures, we restrict attention to subsets with coherent mixing patterns to ensure clearer interpretation. For this reason, we exclude BIO1 and BIO3 from the analysis. The remaining classes BIO2, MP1, MP3, and PC1 all display assortative structures and are thus more suitable for applying our proposed methods. 

The selected subset consists of 83 individuals, including 37 females, a female proportion of 0.45, indicating a slight minority presence. As shown in Figure~\ref{fig:proportion_females_rho}, the blue curve represents the proportion of females among the top-$K$ nodes by naive degree ranking, revealing disproportional representation for smaller $K$. 
\begin{figure}
    \centering
    \includegraphics[width=0.5\linewidth]{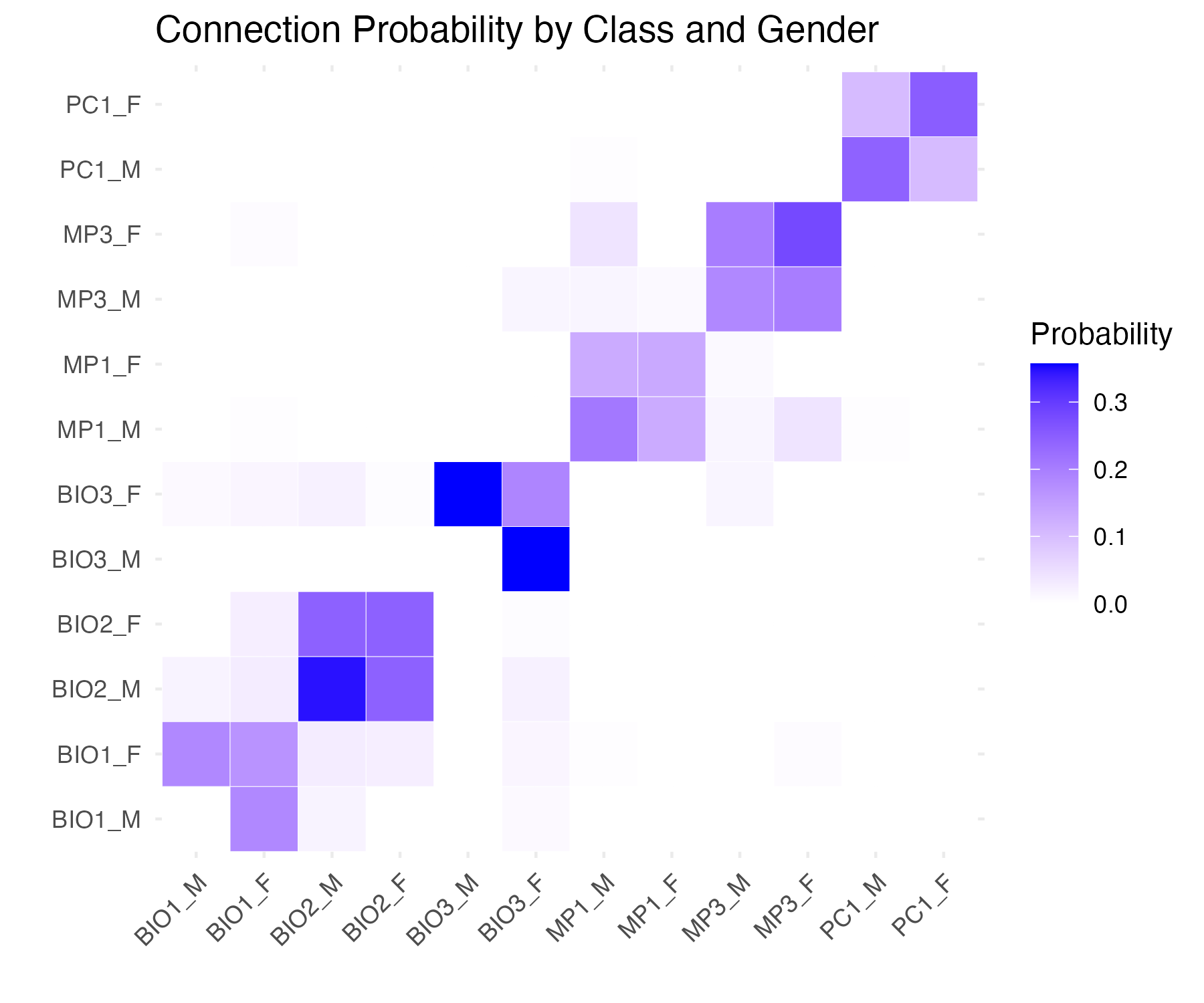}
       \caption{Average contact probabilities by classes and genders, contact diary data. Each cell shows the estimated probability between student groups.}
    \label{fig:combined_connection_probability}
\end{figure}

As noted in Section~\ref{sec:correction}, a noisy directed network can be treated analogously to two independent replicates, and the relevant methodology is detailed in 
Section 1.4 of the Supplementary Material. Previous research has identified systematic recall biases in contact diary data, particularly those driven by gender \citep{quintane2025gender}. Figure~\ref{fig:intro_bias_overview}, panel (A) shows a strong asymmetry in cross-gender reporting: 
male students appear less likely to record contacts with female peers than the reverse. This asymmetry is more important than overall assortativity, as it reflects a systematic recall bias that can distort network-based measures of influence. To formally test for gender-related bias, we define the hypothesis as
\[
H_{1,0}: \beta_{12} = \beta_{21}, \qquad H_{1,1}: \beta_{12} < \beta_{21},
\]
where $\beta_{12}$ is the probability that a female forgets a connection with a male, and 
$\beta_{21}$ is the probability that a male forgets a connection with a female.

Our proposed testing procedure yields evidence against the null hypothesis 
($p = 0.021$), with estimates $\hat\beta_{12} = 0.39$ and $\hat\beta_{21} = 0.53$. 
This result highlights the presence of systematic bias and underscores the need for correction 
to ensure equitable analysis. The remaining estimates are 
$\hat\beta_{11} = 0.36$ and $\hat\beta_{22} = 0.37$. First, the difference between $\hat\beta_{21}$ and $\hat\beta_{22}$ shows that the imbalance cannot be explained by overall recall tendencies of males, but instead reflects a directional bias against female students. Second, the $\beta$ estimates are consistent with the findings of \cite{mastrandrea2015contact}.  In their comparison of diary-based network with 
the sensor-based network recorded on the same day, they 
observed that ``only 41.4\% of the contacts registered by the sensors find a match in the 
contact diary.'' This comparison illustrates the usefulness of our model: by formally modelling and estimating heterogeneous recall error rates, it recovers more accurate degree-based rankings even when the construct network is observed with bias, and provides more reliable inputs for downstream analyses.

\citet{mastrandrea2015contact} found that the probability of forgetting a connection is negatively correlated with the duration of the contact: longer interactions are more likely to be remembered. To further investigate this effect, we exploit the fact that the original contact diary data encode contact duration through the weight variable $w$: (i) at most 5 minutes if $w=1$; (ii) 5--15 minutes if $w=2$; (iii) 15 minutes--1 hour if $w=3$; and (iv) more than 1 hour if $w=4$. Based on this categorization, we construct two directed diary-based networks restricted to short contacts ($w \in \{1,2\}$) and long contacts ($w \in \{3,4\}$). In both networks we continue to observe cross-gender recall imbalance, as illustrated in Figure~\ref{fig:prob_matrices}. However, the asymmetry is more pronounced for short contacts and appears to diminish as duration increases. Applying our estimation and testing procedure to the two networks yields $p$-values of 0.0046 and 0.293 for testing $H_{0}: \beta_{12} = \beta_{21}$ against $H_{1}: \beta_{12} < \beta_{21}$. These results indicate that interaction duration is strongly associated with cross-gender recall bias: shorter male--female interactions are particularly underreported, whereas longer interactions are recalled more symmetrically. This pattern is consistent with the observation in \citet{mastrandrea2015contact} that longer interactions are less likely to be forgotten.

\begin{figure}[tbp]
    \centering
    \begin{subfigure}{0.48\linewidth}
        \centering
        \includegraphics[width=\linewidth]{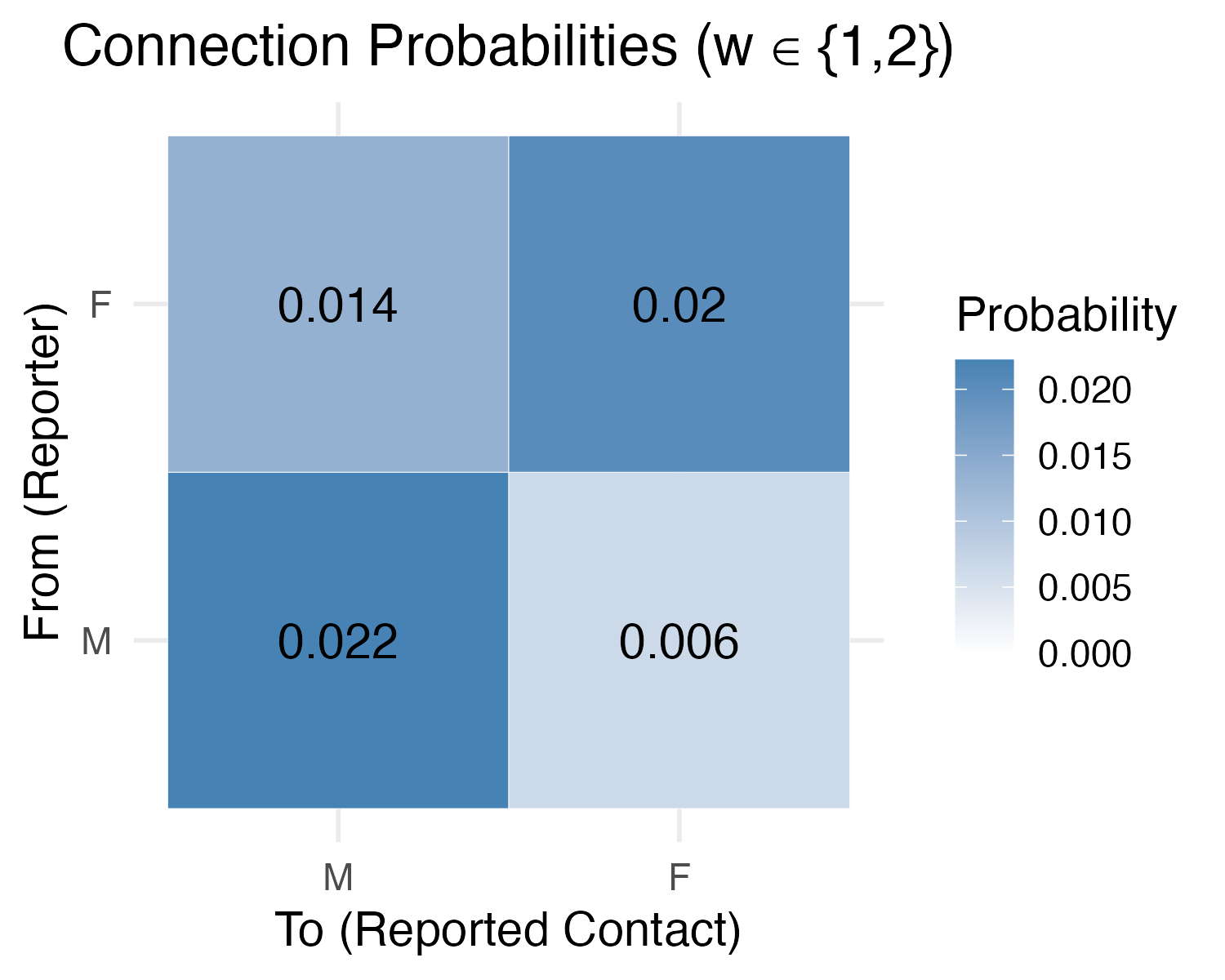}
        \caption{Connection probabilities for short contacts, with nodes subsetted by gender.}
    \end{subfigure}
    \hfill
    \begin{subfigure}{0.48\linewidth}
        \centering
        \includegraphics[width=\linewidth]{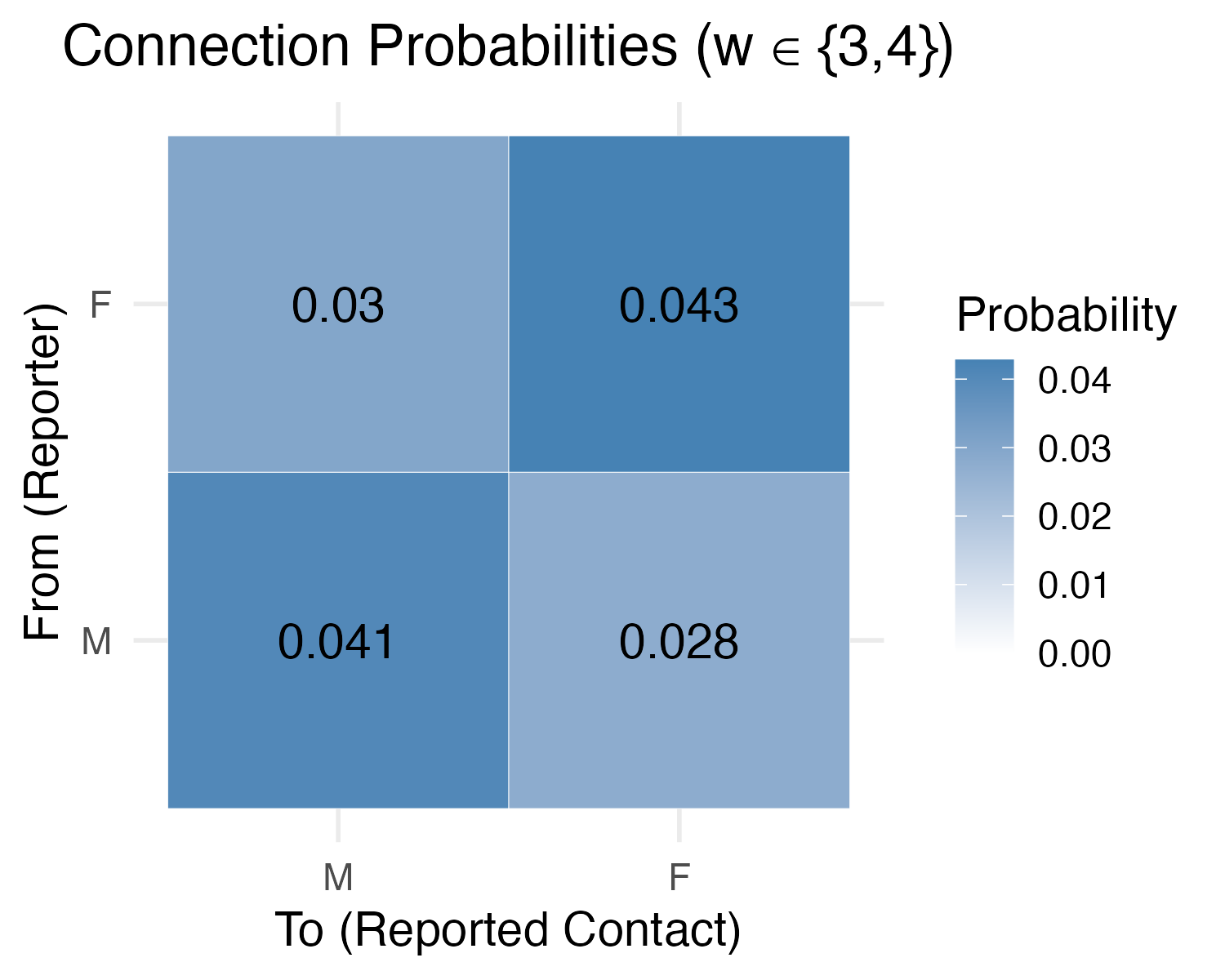}
        \caption{Connection probabilities for long contacts, with nodes subsetted by gender.}
    \end{subfigure}
    \caption{Estimated connection probability matrices by duration, contact diary data. Short contacts exhibit stronger cross-gender recall asymmetry, whereas long contacts are reported more symmetrically.}
    \label{fig:prob_matrices}
\end{figure}

To correct for systematic bias, we apply our parametric approach to compute a corrected node ranking based on in-degrees. 
Figure \ref{fig:proportion_females_rho} presents the proportion of females under different methods, where the red curve represents the target representation profile of females. 
Leveraging replication from the directed diary-based network, we can perform a principled ranking correction without assuming WYSIWYG or WAE. For instance, in the empirical ranking, only one of the top-10 students is female. In contrast, our corrected ranking includes four female students among the top-10, demonstrating a notable shift in representation after accounting for recall bias.  A detailed comparison of the top-10 students before and after correction, including their IDs and genders, is provided in Table~\ref{tab:top10_comparison}. Note that some reordering among male students arises from ties in in-degree, which are broken randomly under both the empirical and corrected rankings. 

To evaluate the effectiveness of our re-ranking algorithm, we compare to the sensor-based network collected on the same day as the diary-based network. Although the diary-based and sensor-based networks capture different aspects of student interaction, the latter provides a reasonable proxy for true contact intensity, as it is based on directly measured physical proximity.
Moreover, it is reasonable to assume that the sensor-based network is free of the directional bias which we detect in the diary-based network.
In Table~\ref{tab:rerank_summary}, we report the Spearman correlations between the sensor-based degree ranking and three diary-based rankings: the unadjusted ranking (corresponding to WYSIWYG), the proportionally representative re-ranking (corresponding to WAE), and our proposed algorithmic re-ranking. 
The re-ranking procedures maintain a similar level of concordance with the sensor-based ranking (Spearman $\approx$ 0.5), indicating that fairness improvements do not come at the cost of overall accuracy. At the same time, both the proportional and plug-in corrections substantially reduce group-specific ranking bias, yielding more balanced gender representation among top-ranked students. The unadjusted ranking exhibits a pronounced underrepresentation of female students, whereas our re-ranking algorithm substantially reduces this bias. Moreover, it slightly outperforms the proportional adjustment on the same metrics. Overall, the proposed re-ranking algorithm achieves a significant gain in fairness while preserving the fidelity of the original ranking structure.

Taken together, these results demonstrate that our methods effectively capture systematic biases, yield interpretable parameter estimates consistent with prior findings, and introduce a re-ranking algorithm that enhances fairness, thereby offering a stronger foundation for subsequent analyses.

\begin{figure}[tbp]
\centering
\begin{minipage}{0.485\linewidth}
    \centering
    \includegraphics[width=\linewidth]{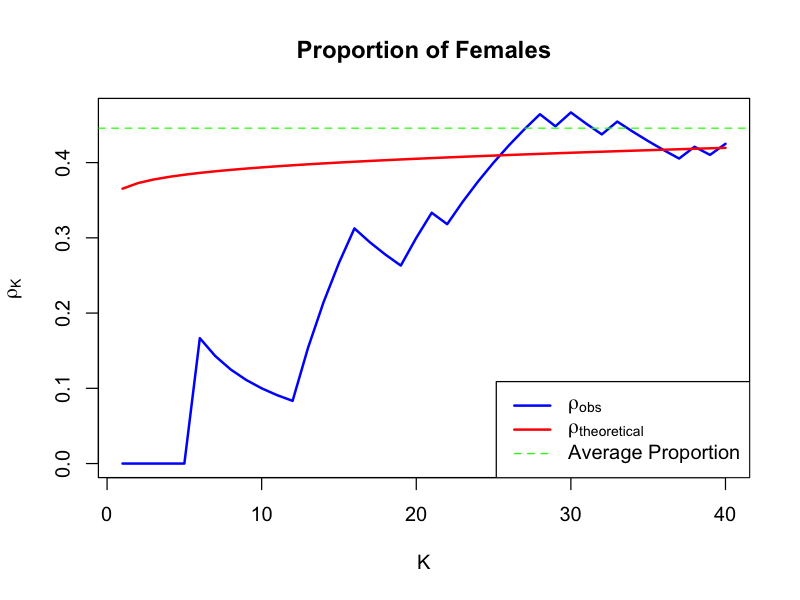}
    \captionsetup{font=footnotesize}
    \caption{Proportion of females among top-$K$ nodes. Blue: observed $\rho_K$, red: parametric plug-in estimate of $\rho_K$, green: overall female proportion.}
    \label{fig:proportion_females_rho}
\end{minipage}
\hspace{0.01\linewidth}
\begin{minipage}{0.475\linewidth}
    \scriptsize
    \centering
    \begin{tabular}{rllrl}
        \toprule
        \textbf{Rank} & \textbf{Orig ID} & \textbf{Gen} & \textbf{Corr ID} & \textbf{Gen} \\
        \midrule
        1 & 1295 & M & 1295 & M \\ 
        2 & 1423 & M & 441 & F \\ 
        3 &  21 & M & 1423 & M \\ 
        4 &  28 & M & 295 & F \\ 
        5 & 200 & M & 1214 & M \\ 
        6 & 441 & F & 200 & M \\ 
        7 & 488 & M & 232 & F \\ 
        8 & 826 & M &  28 & M \\ 
        9 & 1214 & M & 388 & F \\ 
        10 & 1401 & M & 826 & M \\ 
        \bottomrule
    \end{tabular}
    \captionsetup{font=footnotesize}
    \captionof{table}{Top-10 students by in-degree before and after bias correction. ``Gen" denotes gender.}
    \label{tab:top10_comparison}
\end{minipage}
\end{figure}

\begin{table}[t!]
\centering
\caption{Comparison of re-ranking methods based on 100 random tie-breaking replicates.
Reported values are averages across replicates.
Minority and majority biases are mean rank differences relative to the sensor-based ranking
(positive = ranked too high, negative = ranked too low).}
\label{tab:rerank_summary}
\begin{tabular}{l|c|cc}
\toprule
Method & Spearman vs.\ Sensor rank & Minority Bias & Majority Bias \\
\midrule
Unadjusted & 0.49 & $-3.46$ & $2.69$ \\
Proportion & 0.50 & $-0.26$ & $0.20$ \\
Plug-in    & 0.50 & $-0.17$ & $0.13$ \\
\bottomrule
\end{tabular}
\end{table}

\section{Discussion and conclusions}\label{sec:conclusion}

Motivated by contact diary data, and the broader challenge of fairness in network-based ranking across different social groups,  this paper introduces a mathematically rigorous framework to define and analyze systematic bias in network degree-based rankings. 
We define systematic bias as the disparity between construct and observed spaces, and introduce the minority representation profile as a tractable measure to evaluate bias and proportional representation among top-$K$ ranked nodes. 
To model systematic bias, we employ a group-dependent observation error model. 
Due to non-identifiability, estimating bias parameters requires either two independent replicates or a directed network. We propose methods to detect and correct for systematic bias when it is significant, leveraging the derived asymptotic limits to adjust minority group representations and refine ranking results. Our approach provides a formal statistical foundation for identifying and mitigating bias in network rankings, contributing to broader efforts in algorithmic fairness and equitable representation in labelled networks.

\ifjrss
\else
To validate our theoretical results, we conducted simulations that demonstrate the utility of our derived asymptotic limits, particularly for small node sizes ($n$) and small values of $K$ in selection. 
For detection, our method demonstrated great power, and nominal type I error control across different network sizes. 
The ranking correction, implemented using the derived theoretical plug-in estimator,  outperformed both the uncorrected rankings and those based solely on proportional representation in all cases except when no bias is present, achieving improved fairness and enhanced minority representation. 

We applied our methods to contact diary data, and detected systematic bias between female and male nodes, which aligns with the recall discriminability phenomenon in egocentric networks.
Our correction approach provided a more balanced degree-based ranking, without a loss of accuracy relative to an uncorrected ranking, or the need for strong identifying assumptions (WYSIWYG or WAE).
\fi

There are several limitations and potential extensions of our work. First, our analysis is restricted to degree centrality. An extension to eigenvector centrality could be pursued using modern spectral theory \citep{chen2021spectral}. However, unlike the degree case, such an extension would not yield a sharp asymptotic limit, since spectral theory provides only upper bounds on the eigenvector perturbation introduced by systematic bias.

Second, while our proposed label-dependent missing edge model falls within the class of systematic bias models, it would be valuable to explore alternative modeling approaches for systematic bias, as well as to develop new statistics for measuring fairness in network data.
For instance, \cite{braithwaite2020automated} identified underreporting by highly connected individuals in contact networks during the COVID-19 pandemic, suggesting a model in which observation error rates are related to node degree in the construct network. 

Third, particular applications motivate a hierarchical SBM (HSBM) formulation, in which edge error rates still depend on only protected group membership, but the construct model edge probabilites depend on additional categorical node covariates. That is,
\begin{align*}
    \prob(Y_{ij}=0 ~\vert~ A_{ij}=1, c_i,c_j,\bm{x}_i,\bm{x}_j) &= \beta_{c_ic_j}, \\
    \prob(A_{ij}=1 ~\vert~ c_i,c_j,\bm{x}_i,\bm{x}_j) &= B(c_i,c_j,\bm{x}_i,\bm{x}_j) \in [0,1]
\end{align*}
for $i < j$, where $\{ \bm{x}_i \}_{i=1}^n$ are additional (non-protected) categorical node covariates.
As this HSBM is a special case of the labeled graphon model, our estimation and testing framework remains applicable. In principle, this model should better capture the structure of our contact diary data, using students’ class membership as the second covariate. 
However, for our particular dataset, sparse cross-class connections led to high estimator variance and limited the reliability of testing. Hence we do not present these results here. 

Lastly, if the labels contain more than two groups ($k > 2$), designing tractable statistics to measure the fairness of the observed network and extending current detection and correction methods will significantly advance the statistical analysis of fairness in networks.

\noindent {\bf Acknowledgements} \\
Eric D. Kolaczyk and Hui Shen were supported in part by the Natural Sciences and Engineering Research Council of Canada (NSERC) through grants RGPIN-2023-03566 and DGDND-2023-03566. Peter W. MacDonald was supported in part by NSERC through a Postdoctoral Fellowship, and grant RGPIN-2025-02892.

\bibliography{bibfile}

\appendix

\section{Technical proofs}

\subsection{Supporting lemma}
We first introduce and prove a lemma that will be useful for proving the main theorems in the manuscipt. 
Consider the general setting where $\frac{n_1}{n} \rightarrow \kappa \in (0,1) $ and the scaled degrees have non-degenerate limiting distributions:
$$
\phi_n(d_i) \convd F_{c_i},
$$
for some monotone scaling function $\phi_n(\cdot)$ and $F_1(x)$ and $F_2(x)$ denote the limiting distributions for nodes in group 1 and group 2, respectively. The functions $\phi_n$ form a sequence of increasing transformations. With slight abuse of notation, we use $F_1$ and $F_2$ interchangeably to denote both the random variables and their distributions. Then an arbitrary node's degree has a mixture limiting distribution with
    $$
    F^*(x) := \kappa F_1(x) + (1-\kappa) F_2(x). 
    $$

We establish the following Glivenko–Cantelli-type result for the empirical distribution of normalized degrees.

    \begin{lemma}\label{lem:Glivenko-Cantelli}
    Under the graphon model and $K/n \rightarrow z$, we have: 
    $$
    \operatorname{sup}_{x \in \mathbb{R}} \left\lvert \frac{1}{n} \sum_{i=1}^n \mathbb{I}\left( \frac{d_i - np}{\sqrt{np(1-p)}} \leq x \right) - F^*(x) \right\rvert \convp 0,
    $$
    and 
    $$
    \phi_n(d_{(K)}) \convp c^*. 
    $$
    where $d_{(K)}$ is the $K$-th largest degree and $c^* = (F^{*})^{-1}(1-z)$ is the $(1 - z)$-th quantile of the mixture distribution $F^*$.  
    \end{lemma}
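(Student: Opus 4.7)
The plan is to prove pointwise convergence $F_n^*(x) := \frac{1}{n}\sum_{i=1}^n \mathbb{I}(\phi_n(d_i) \leq x) \convp F^*(x)$ at each fixed $x$, then promote this to uniform convergence via monotonicity together with the continuity of $F^*$, and finally obtain the quantile statement as a standard consequence of uniform convergence plus strict monotonicity of $F^*$ at $c^*$.

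The main technical work lies in showing $\mathrm{Var}(F_n^*(x)) = o(1)$. I would use the decomposition $\mathrm{Var}(F_n^*(x)) = \mathbb{E}\{\mathrm{Var}(F_n^*(x) \mid U)\} + \mathrm{Var}(\mathbb{E}[F_n^*(x) \mid U])$, where $U = (u_1,\ldots,u_n)$. For the first term, conditional on $U$ the edges are independent, so the indicators $X_i = \mathbb{I}(\phi_n(d_i) \leq x)$ and $X_j$ interact (for $i \neq j$) only through the shared Bernoulli $A_{ij}$. Flipping $A_{ij}$ moves $\phi_n(d_i)$ by $O(1/\sqrt{np_n(1-p_n)})$, and a normal-density / local-limit bound (justified by the conditional CLT applied to $d_i$) shows that the sensitivity $|\mathbb{E}[X_i \mid U, A_{ij}=0] - \mathbb{E}[X_i \mid U, A_{ij}=1]| = O(n^{-1/2})$. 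Hence $|\mathrm{Cov}(X_i, X_j \mid U)| = O(1/n)$ uniformly, yielding $\mathrm{Var}(F_n^*(x) \mid U) = O(1/n)$.

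For the conditional-expectation term, I approximate $f_i(U) := \mathbb{P}(\phi_n(d_i) \leq x \mid U)$ via Berry--Esseen: given $U$, $d_i$ is a sum of $n-1$ independent Bernoullis with parameters $\omega_n(u_i, u_k) = p_n + O(n^{-1/2})$, so Assumption~\ref{assump:graphon_concentration} controls the error in a normal approximation, while Assumption~\ref{assump:graphon_mean} together with an elementary LLN over the i.i.d.~$\{u_k\}_{k\neq i}$ identifies the recentered, rescaled mean as $\mu(u_i) + o_P(1)$. Combined, $f_i(U) = \Phi(x - \mu(u_i)) + o_P(1)$ for each fixed $i$, and since $f_i$ is uniformly bounded, $\frac{1}{n}\sum_i f_i(U) = \frac{1}{n}\sum_i \Phi(x - \mu(u_i)) + o_P(1)$. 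The marginal distribution of $u_i$ is Uniform$(0,1)$, so the usual LLN gives the average converging to $\mathbb{E}[\Phi(x - \mu(U))] = \kappa F_1(x) + (1-\kappa) F_2(x) = F^*(x)$.

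Combining both terms yields $F_n^*(x) \convp F^*(x)$ pointwise. Uniform convergence then follows from the classical P\'olya argument: since $F^*$ is continuous (a bounded-density mixture of $\Phi(\cdot - \mu(u))$), for any $\epsilon > 0$ I take a finite grid $x_1 < \ldots < x_m$ on which $F^*$ increases by at most $\epsilon$; monotonicity of $F_n^*$ and $F^*$ sandwiches $\sup_x |F_n^*(x) - F^*(x)| \leq \max_k |F_n^*(x_k) - F^*(x_k)| + \epsilon$, and pointwise-in-probability convergence at the finitely many grid points closes the argument. The quantile convergence $\phi_n(d_{(K)}) \convp c^*$ is then immediate from uniform convergence of $F_n^*$, $K/n \to z$, and the positivity of the density of $F^*$ at $c^*$. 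I expect the Berry--Esseen step for $f_i(U)$ to be the main obstacle: the Bernoulli summands are non-identically distributed, and one must combine the normal-approximation error with the separate randomness of the $\{u_k\}_{k \neq i}$ carefully enough that both the mean shift $\mu(u_i)$ and the $o_P(1)$ remainder emerge cleanly with the scaling provided by Assumption~\ref{assump:graphon_mean}.
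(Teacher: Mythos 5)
Your proposal is correct and follows the same overall skeleton as the paper's proof: a Berry--Esseen normal approximation for $d_i$ conditional on the latent positions, with Assumption~\ref{assump:graphon_concentration} controlling the approximation error and Assumption~\ref{assump:graphon_mean} (plus a law of large numbers over the other latent positions) producing the shift $\mu(u_i)$, then a P\'olya/Glivenko--Cantelli grid argument to upgrade pointwise to uniform convergence, and finally the quantile statement. Where you genuinely differ is in two sub-arguments. First, for the variance of $F_n^*(x)$ you condition on the full latent vector $U$ and exploit that, given $U$, the indicators $X_i$ and $X_j$ share only the single edge $A_{ij}$, so an anti-concentration (Berry--Esseen plus Gaussian density) bound on a unit-length interval gives sensitivity $O(n^{-1/2})$ and hence $\lvert\mathrm{Cov}(X_i,X_j\mid U)\rvert = O(1/n)$; the dependence of $X_i$ and $X_j$ through the shared latent positions $\{u_k\}_{k\neq i,j}$ is then pushed into the term $\mathbb{E}[F_n^*(x)\mid U]=\tfrac1n\sum_i f_i(U)$, which you control by $f_i(U)=\Phi(x-\mu(u_i))+o_P(1)$, exchangeability, bounded convergence, and the LLN over the i.i.d.\ $u_i$. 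The paper instead bounds the unconditional covariance $\mathrm{Cov}(X_i,X_j)$ directly, conditioning on $A_{ij}$ and comparing leave-one-out degrees $d_i^{(-j)}, d_j^{(-i)}$ via their marginal normal approximations; your decomposition makes the treatment of the residual dependence through shared latent positions more transparent, at the price of needing the uniform (in $U$) anti-concentration step you correctly flag. Second, for $\phi_n(d_{(K)})\convp c^*$ you use the standard argument from uniform convergence plus the everywhere-positive density of the Gaussian mixture $F^*$ at $c^*$, whereas the paper invokes the device of van der Vaart's Lemma 21.2 (evaluating $F_n$ at an independent standard normal and inverting); both are valid, and yours is arguably the more elementary route. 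The only point to make explicit when writing this up is that your variance decomposition needs both pieces to conclude: $\mathbb{E}[\mathrm{Var}(F_n^*(x)\mid U)]=O(1/n)$ together with $\mathbb{E}[F_n^*(x)\mid U]\convp F^*(x)$ (which, by boundedness, also gives convergence of the mean), so that Chebyshev applied to the centered sum plus the conditional-mean convergence yields $F_n^*(x)\convp F^*(x)$.
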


\begin{proof}
    The proof consists of two steps:
    \begin{enumerate}
         \item
        Assmuing        \begin{equation}\label{eqn:GC_result}
            \operatorname{sup}_{x \in \mathbb{R}}  \left\lvert \frac{1}{n} \sum_{i=1}^n \mathbb{I}\left( \frac{d_i - np}{\sqrt{np(1-p)}} \leq x \right) - F^*(x) \right\rvert \convp 0,
        \end{equation}
show that $\phi_n(d_{(K_n)}) \convp c^*$.
        \item
        Show that $$
	\operatorname{sup}_{x \in \mathbb{R}}  \left\lvert \frac{1}{n} \sum_{i=1}^n \mathbb{I}\left( \frac{d_i - np}{\sqrt{np(1-p)}} \leq x \right) - F^*(x) \right\rvert \convp 0.
    $$
    \end{enumerate}

{\bf Proof of Step 1}.  
We follow the proof strategy of Lemma 21.2 in \cite{van2000asymptotic}. 

Define 
$$F_n(x) = \frac{1}{n} \sum_{i=1}^n \mathbb{I}\left( \frac{d_i - np}{\sqrt{np(1-p)}} \leq x \right).$$ 
Let $Z$ be a standard normally distributed random variable. Assuming~\eqref{eqn:GC_result}, we have
$$
F_n(Z) \xrightarrow{p} F^*(Z),
$$
since
$$
F_n(Z) - F(Z) = \bbE\left( F_n(Z) - F^*(Z) \mid Z \right) 
\leq \bbE\left( \sup_{x \in \bbR} \left|F_n(x) - F^*(x)\right| \mid Z \right) 
\xrightarrow{p} 0.
$$

Thus,
$$
\Phi\left(F_n^{-1}(t)\right) = \bbP_Z\left(F_n(Z) < t\right)
\xrightarrow{p} \bbP_Z\left(F^*(Z) < t\right)
= \Phi\left((F^*)^{-1}(t)\right),
$$
for every $t$, since both $F^*$ and $\Phi$ are continuous. By the continuity of $\Phi^{-1}$, it follows from the continuous mapping theorem that
$$
F_n^{-1}(t) \xrightarrow{p} (F^*)^{-1}(t).
$$

In particular, for quantiles where $K_n/n \to q$, and using the continuity of $F$, we have
$$
F_n^{-1}\left(\frac{K_n}{n}\right) \xrightarrow{p} (F^*)^{-1}(q).
$$

{\bf Proof of Step 2}.

By the proof of basic Glivenko-Cantelli Theorem \citep{kosorok2008introduction},
it should be sufficient to show pointwise convergence for any $x \in \mathbb{R}$.

Following the proof of the the basic Glivenko-Cantelli Theorem, let $0<\delta<1$ be arbitrary and take $x_0<x_1<\cdots<x_N$ such that $F^*\left(x_j\right)-F^*\left(x_{j-1}\right)=\delta$. Note that $N(\delta)\leq 1+1 / \delta$. If $x \in\left(x_{j-1}, x_j\right]$, then clearly
$$
\left(-\infty, x_{j-1}\right] \subset(-\infty, x] \subset\left(-\infty, x_j\right]. 
$$

Now, fix $x \in \mathbb{R}$. Define $X_i := \mathbb{I} \left( \dfrac{d_i - np}{\sqrt{np(1-p)}} \leq x \right)$,
and let
$$F_n(x) = \frac{1}{n} \sum_{i=1}^n X_i, \quad F^{(n)}_{\text{mix}}(x) := \mathbb{E}[F_n(x)]. $$

We bound:
\begin{equation}\label{eqn:2251}
    \mathbb{P} \left( |F_n(x) - F^*(x)| \geq \epsilon \right)
    \leq \mathbb{P} \left( |F_n(x) - F^{(n)}_{\text{mix}}(x)| \geq \epsilon/2 \right) + \mathbb{P} \left( |F^{(n)}_{\text{mix}}(x) - F^*(x)| \geq \epsilon/2 \right).
\end{equation}

We now control both terms on the right-hand side.

For the first term in \eqref{eqn:2251}, by Chebyshev’s inequality:
$$\mathbb{P} \left( |F_n(x) - F^{(n)}_{\text{mix}}(x)| \geq \epsilon/2 \right)
\leq \frac{4}{\epsilon^2} \operatorname{Var}(F_n(x)).$$

We control the variance via:
$$\operatorname{Var}(F_n(x)) = \frac{1}{n^2} \sum_{i=1}^n \operatorname{Var}(X_i) + \frac{1}{n^2} \sum_{i \neq j} \operatorname{Cov}(X_i, X_j).$$

Since $X_i \in \{0,1\}$, we have $\operatorname{Var}(X_i) \leq \frac{1}{4}$, and so:
$$\frac{1}{n^2} \sum_{i=1}^n \operatorname{Var}(X_i) = O(n^{-1}).$$

Now consider the covariance term. For $i \neq j$,
\begin{align*}
\operatorname{Cov}(X_i, X_j)
&= \mathbb{E}[X_i X_j] - \mathbb{E}[X_i] \mathbb{E}[X_j] \\
&= \mathbb{P}(X_i = 1, X_j = 1) - \mathbb{P}(X_i = 1)\mathbb{P}(X_j = 1).
\end{align*}

Write them as:
\begin{align*}
    & \mathbb{P}(X_i = 1, X_j = 1) = \mathbb{P}\left( \frac{d_i - np}{\sqrt{np(1-p)}} \leq x, \frac{d_j - np}{\sqrt{np(1-p)}} \leq x \right) = \mathbb{P}(d_i \leq \tau, d_j \leq \tau), \\    
    & \mathbb{P}(X_i = 1)\mathbb{P}(X_j = 1) = \mathbb{P}(d_i \leq \tau) \mathbb{P}(d_j \leq \tau),
\end{align*}
where $\tau := np + \sqrt{np(1-p)}x$. 

To isolate dependence, we condition on the edge $A_{ij}$:
$$
\mathbb{P}(d_i, d_j \leq \tau)
= \mathbb{P}(d_i, d_j \leq \tau \mid A_{ij} = 1)\mathbb{P}(A_{ij} = 1) + \mathbb{P}(d_i, d_j \leq \tau \mid A_{ij} = 0)\mathbb{P}(A_{ij} = 0).
$$

Recall that $d_i = d_i^{(-j)} + A_{ij}$, so:
\begin{align*}
\mathbb{P}(d_i, d_j \leq \tau \mid A_{ij} = 1)
&= \mathbb{P}(d_i^{(-j)} \leq \tau - 1, d_j^{(-i)} \leq \tau - 1), \\
\mathbb{P}(d_i, d_j \leq \tau \mid A_{ij} = 0)
&= \mathbb{P}(d_i^{(-j)} \leq \tau, d_j^{(-i)} \leq \tau).
\end{align*}

We now compute the conditional expectation and variance of degrees under the graphon model. For any node $i$,
\begin{align*}
\mathbb{E}(d_i \mid U_i) &= \mathbb{E}\left[ \sum_{j=1}^n \omega(U_i, U_j) \right] = n \int_0^1 \omega(U_i, v) dv \\
&= np + n \int_0^1 \left\{ \omega(U_i, v) - p \right\} dv \\
&= np + \sqrt{np(1-p)} \left\{ \mu(U_i) + o(1) \right\} \\
&= np + \mu(U_i) \sqrt{np(1-p)} + o(\sqrt{n}),
\end{align*}
where the second last step uses the scaling assumption in Assumption~2 of the manuscript.

Similarly, the conditional variance is:
\begin{align*}
    \operatorname{Var}(d_i \vert U_i) &= \mathbb{E}\left( \sum_j \omega(U_i,U_j)\{ 1 - \omega(U_i,U_j) \} \right) + \operatorname{Var} \left\{ \sum_j \omega(U_i,U_j) \right\} \\
    &= n \left\{ \int_0^1 \omega(U_i,v)dv -  \int_0^1 \omega^2(U_i,v)dv  \right\} + n  \int_0^1 \omega^2(U_i,v)dv - n \left( \int_0^1 \omega(U_i,v)dv \right)^2 \\
    &= n \left(  \int_0^1 \omega(U_i,v)dv \right) \left( 1 -  \int_0^1 \omega(U_i,v)dv \right).
\end{align*}
Then $ \int_0^1 \omega(U_i,v)dv = p + O(1/\sqrt{n})$ since $\omega(U_i, v) = p + O(n^{-1/2})$ uniformly in $v$ by Assumption~1 of the manuscript.

As a result, for any node $i\in [n]$, we have 
\begin{align*}
    \mathbb{E}(d_i \vert U_i) = np + \mu(U_i)\sqrt{np(1-p)} + o(\sqrt{n}), \quad
    \operatorname{Var}(d_i \vert U_i) = np(1-p) + O(\sqrt{n}).
\end{align*}

Applying the Berry–Esseen theorem conditionally on $U_i$, we obtain:
$$\mathbb{P}(d_i \leq \tau \mid U_i) = \Phi\left( \frac{\tau - \mathbb{E}(d_i \mid U_i)}{\sqrt{\operatorname{Var}(d_i \mid U_i)}} \right) + o(1).$$
Now marginalizing over $U_i \sim \text{Uniform}[0,1]$, we obtain:
$$\mathbb{P}(d_i \leq \tau) = \int_0^1 \Phi\left( \frac{\tau - \mathbb{E}(d_i \mid u)}{\sqrt{\operatorname{Var}(d_i \mid u)}} \right) du + o(1).$$
Using the expansion:
$$    
\mathbb{E}(d_i \vert U_i) = np + \mu(U_i)\sqrt{np(1-p)} +     o(\sqrt{n}), \quad 
\operatorname{Var}(d_i \vert U_i) = np(1-p) + O(\sqrt{n}),$$
we have
\begin{align*}
\frac{\tau - \mathbb{E}(d_i \mid u)}{\sqrt{\operatorname{Var}(d_i \mid u)}}
&= \frac{np + x \sqrt{np(1 - p)} - \left[ np + \mu(u) \sqrt{np(1 - p)} + o(\sqrt{n}) \right]}{\sqrt{np(1 - p)} + O(1)} \\
&= \frac{(x - \mu(u)) \sqrt{np(1 - p)} + o(\sqrt{n})}{\sqrt{np(1 - p)} + O(1)} \\
&= x - \mu(u) + o(1).
\end{align*}

Therefore:
$$\mathbb{P}(d_i \leq \tau) = \int_0^1 \Phi(x - \mu(u)) du + o(1).$$
Similarly, for the leave-one-out degrees $d_i^{(-j)} := d_i - A_{ij}$, we apply the same Berry–Esseen argument. Since removing a single Bernoulli term changes the degree by at most one, we have
$$\mathbb{P}(d_i^{(-j)} \leq \tau - 1 \mid U_i) = \Phi\left(x - \mu(U_i)\right) + o(1), \quad
\mathbb{P}(d_i^{(-j)} \leq \tau \mid U_i) = \Phi\left(x - \mu(U_i)\right) + o(1).$$
Marginalizing over $U_i \sim \operatorname{Uniform}[0,1]$, it follows that
$$\mathbb{P}(d_i^{(-j)} \leq \tau - 1) = \int_0^1 \Phi(x - \mu(u)) \, du + o(1), \quad
\mathbb{P}(d_i^{(-j)} \leq \tau) = \int_0^1 \Phi(x - \mu(u)) \, du + o(1).$$

Therefore, 
$$
\operatorname{Cov}(X_i, X_j) = o(1).
$$
Finally, summing over all $i \neq j$:
$$\frac{1}{n^2} \sum_{i \neq j} \operatorname{Cov}(X_i, X_j) = o(1),$$
which establishes the overall variance bound:
$$\operatorname{Var}\left( \frac{1}{n} \sum_{i=1}^n X_i \right) = O(n^{-1}) + o(1) = o(1).$$

For the second term in \eqref{eqn:2251}, note that 
$$ \bbP\left( \left| F^{(n)}_{\operatorname{mix}}(x) - F^*(x) \right| \geq  \epsilon/2 \right) = 0$$ 
for large $n$ since $F^{(n)}_{\operatorname{mix}}(x) - F^*(x) = o\left( 1\right)$ by the central limit theorem.

Combining both components, we conclude that
\begin{align*}
    \bbP\left( \left| \frac{1}{n} \sum_{i=1}^n \mathbb{I}\left( \frac{d_i - np}{\sqrt{np(1-p)}} \leq x \right) - F^*(x) \right| \geq \epsilon \right) = o(1) .
\end{align*}
It follows that
$$
\sup _{x \in \mathbb{R}}\left|F_n(x)-F^*(x)\right| \leq \max _{j=1, \ldots, N}\left|F_n\left(x_j\right)-F^*\left(x_j\right)\right|+\delta, 
$$
and
\begin{align*}
     \bbP\left( \sup _{x \in \mathbb{R}}\left|F_n(x)-F^*(x)\right| \geq 2\delta \right) \leq \bbP\left( \max_{j=1, \ldots, N}\left|F_n\left(x_j\right)-F^*\left(x_j\right)\right| \geq \delta \right) \leq No(1). 
\end{align*}
In conclusion,
$$
\sup _{x \in \mathbb{R}}\left|F_n(x)-F^*(x)\right| \convp 0.
$$
This completes the entire proof. 

\end{proof}

\subsection{Technical proofs for Section~3 of the manuscript}

\subsubsection{Proof of Theorem~1}
\begin{proof}

Under the graphon model, to generate a graph with minority and majority group memberships, we assign $c_i = 1$ with probability $\kappa < 1/2$ and $c_i=2$ with probability $1-\kappa$, independently.
For $i=1,\ldots,n$,
\begin{align}
    U_i \vert c_i=1 &\sim \operatorname{Uniform}[0,\kappa],\label{eqn:graphon_U1} \\
    U_i \vert c_i=2 &\sim \operatorname{Uniform}[\kappa,1]\label{eqn:graphon_U2}
\end{align}
independently. 
For $i < j$,
$$
    A_{ij} \sim \operatorname{Bernoulli}\left\{ \omega_n(U_i,U_j) \right\}
$$
independently.
Marginally, $U_i \sim  \operatorname{Uniform}[0,1]$. 
For simplicity of the proof, we omit the subscript $n$ on $\omega$.

We can show that, conditional on $U_i$ (and $c_i$), i.e.~taking probability over the other $U$'s and the edges,
$$
    \frac{d_i - np}{\sqrt{np(1-p)}} \convd \mathcal{N}(\mu(U_i),1).
$$
It follows that the marginal of $d_i$ is asymptotically a (continuous) mixture over $U_i$, as is the conditional distribution on only group membership. The proof proceeds as follows: 

\begin{enumerate}
    \item By Lyapunov CLT and since
    $$
        \frac{\sum_{j \neq i} \omega(U_i,U_j) \{ 1 - \omega(U_i,U_j) \}}{n p (1-p)} \inprob 1
    $$
    we have
    $$
        \frac{d_i - \sum_{j \neq i} \omega(U_i,U_j)}{\sqrt{np(1-p)}} = \frac{\sum_{j \neq i} \{A_{ij} - \omega(U_i,U_j)\}}{\sqrt{np(1-p)}} \convd \mathcal{N}(0,1)
    $$
    conditional on all the $U$'s.
    \item By Chebyshev inequality,
    $$
        \frac{\sum_{j \neq i} \omega(U_i,U_j) - n \int_0^1 \omega(U_i,v)dv}{\sqrt{np(1-p)}} \inprob 0,
    $$
    conditional on $U_i$, since we can show that the variance of the numerator is asymptotically bounded.
    \item By assumption
    $$
        \frac{n \int_0^1 \omega(U_i,v)dv - np}{\sqrt{np(1-p)}} \rightarrow \mu(U_i).
    $$
    \item Thus,
    $$
    \frac{d_i - np}{\sqrt{np(1-p)}} \convd \mathcal{N}(\mu(U_i),1) 
    $$
    conditional on $U_i$.
\end{enumerate}

Then unconditional on $U_i$, we have for $x\in \bbR $
\begin{align*}
    & \bbP\left( \frac{d_i - np}{\sqrt{np(1-p)}} \leq x \right) 
    = \bbE_{U_i}\left[ \bbP\left( \frac{d_i - np}{\sqrt{np(1-p)}} \leq x | U_i \right) \right] \\
    \rightarrow & \bbE_{U_i}\left[
    \bbP\left( \mathcal{N}(\mu(U_i),1) \leq x|U_i \right)
    \right] \\
    =& \bbE_{U_i}\left[
    \bbP\left( \mathcal{N}(0,1) + \mu(U_i) \leq x|U_i \right)
    \right] \\
    = & 
    \bbP\left( \mathcal{N}(0,1) + \mu(U_i) \leq x \right)
\end{align*}
where the convergence is guaranteed by the dominanted convergence theorem. 


Applying Lemma~\ref{lem:Glivenko-Cantelli} with $F_1 = \mathcal{N}\left(0, 1\right) + \mu(U^{(1)})$ and $F_2 = \mathcal{N}\left(0, 1\right) + \mu(U^{(2)})$  with $U^{(1)}\sim \operatorname{Uniform}[0,\kappa]$, $U^{(2)}\sim \operatorname{Uniform}[\kappa,1]$ gives the final result.

\end{proof}

\subsubsection{Proof of Proposition~1}
\begin{proof}
We begin by proving the first conclusion.

    {\bf Proof of Conclusion (1)}. 
    
    Assume $p_i-q \ll \frac{1}{\sqrt{n}}$ for $i\in [2]$. The proof consists of two steps:
    \begin{enumerate}
        \item 
        Show that for any node $i$, regardless of its group, the normalized degree satisfies:
        $$
        \frac{d_i - nq}{\sqrt{nq(1-q)}} \convd \mathcal{N}(0,1). 
        $$
        \item 
        Apply Lemma \ref{lem:Glivenko-Cantelli} to conclude $R_K(\bm{c},A) \convp \kappa$. 
    \end{enumerate}
    Denote $\delta_i = p_i - q$ for $i\in [2]$.  For an arbitrary node $i\in [n]$, assume $c_i = 1$. Then 
    $$
    d_i = \sum_{j\in \cC_1, j\neq i} d_{ij} + \sum_{j\in \cC_2} d_{ij},
    $$
    whose expectation and variance are 
    \begin{align*}
    \bbE\left(d_i\right) & = (n_1-1) p_1 + n_2q = (n_1-1) (q+\delta_1) + n_2 q = (n-1)q + (n_1 - 1)\delta_1 \\     
    \operatorname{Var}\left(d_i\right) & = (n_1-1) p_1 (1-p_1)+ n_2q(1-q) = (n_1-1) (q+\delta_1) (1-(q+\delta_1))+ n_2q(1-q) \\
    & = (n-1)q(1-q) + (n_1 - 1)\left[ \delta_1 (1-2q) - \delta_1^2 \right]  
    \end{align*}
    The central limit theorem gives 
    \begin{equation}\label{eqn:di_CLT}
        \frac{d_i-\bbE\left(d_i\right)}{\sqrt{\operatorname{Var}\left(d_i\right)}} \convd \cN(0,1). 
    \end{equation}
    and 
    $$
    n_1 = \kappa n + O_{\prob}\left( \sqrt{n} \right), \quad n_2 = (1 - \kappa) n + O_{\prob}\left( \sqrt{n} \right). 
    $$
    Now consider 
    \begin{align*}
    \frac{d_i - nq}{\sqrt{nq(1-q)}} 
    = \left(\frac{d_i-\bbE\left(d_i\right)}{\sqrt{\operatorname{Var}\left(d_i\right)}} + \frac{(n_1-1)\delta_1 - q}{\sqrt{\operatorname{Var}(d_i)}} \right)\frac{\sqrt{\operatorname{Var}(d_i)}}{\sqrt{nq(1-q)}} .
    \end{align*}
    
    Under the assumption $\delta_1 = p_1 -q =o\left( \frac{1}{\sqrt{n}}\right)$, we have 
    $$
    (n_1-1)\delta_1 - q = o_{\prob}(\sqrt{n}), \quad \operatorname{Var}(d_i) = nq(1-q) + o_{\prob}(\sqrt{n}), 
    $$
    and thus 
    $$
    \frac{(n_1-1)\delta_1 - q}{\sqrt{\operatorname{Var}(d_i)}} = o_{\prob}(1), \quad \frac{\sqrt{\operatorname{Var}(d_i)}}{\sqrt{nq(1-q)}} \convp 1. 
    $$
    By Slutsky’s theorem, conditional on $c_i=1$, 
    $$
    \frac{d_i - nq}{\sqrt{nq(1-q)}} \convd \cN(0,1). 
    $$
    A similar argument applies when $c_i = 2$, where
    $$
    d_i = \sum_{j\in \cC_1} d_{ij} + \sum_{j\in \cC_2, j\neq i} d_{ij},
    $$
    whose expectation and variance are 
    \begin{align*}
    \bbE\left(d_i\right) & = n_1 q + (n_2-1)p_2 = n_1 q  + (n_2-1)(q + \delta_2 )= (n-1)q + (n_2-1)\delta_2 \\    
    \operatorname{Var}\left(d_i\right) & = n_1 q (1-q)+ (n_2-1)p_2(1-p_2) = n_1 q(1 - q) + (n_2 - 1)(q + \delta_2)(1 - q - \delta_2) \\
    &= (n - 1) q(1 - q) + (n_2 - 1) \left[ \delta_2 (1 - 2q) - \delta_2^2 \right].
    \end{align*}
    Therefore, regardless of group membership, we have
    $$
    \frac{d_i - np}{\sqrt{np(1 - p)}} \xrightarrow{d} \mathcal{N}(0,1).
    $$
    To apply Lemma~\ref{lem:Glivenko-Cantelli}, define 
    $$
    \phi_{n}(x) = \frac{x-nq}{\sqrt{nq(1-q)}} , \quad F_1 = F_2 = F^* = \Phi. 
    $$
    where $\Phi$ is the CDF of the standard normal distribution. 
    Hence, 
    $$
    \phi_n(d_{(K)}) \convp c^* = (\Phi)^{-1}(1-z).
    $$
    Recall the proportion of group 1 (minority) nodes as
    $$
    R_K(\bm{c},A) = \frac{1}{K} \left\lvert \left\{ i \in D_K(A) : c_i = 1 \right\} \right\rvert = \frac{1}{K}\sum_{i=1}^n \mathbb{I}(c_i = 1 , d_i \geq d_{(K)})
    $$
    and we define a population version of the proportion of group 1 (minority) nodes as 
    $$
    R_K^*(\bm{c},A) = \frac{1}{K}\sum_{i=1}^n \mathbb{I}(c_i = 1 , \phi(d_i) \geq c^*).
    $$    
    To show the convergence of this quantity, we apply Chebyshev’s inequality and follow a similar argument to the proof of Lemma~\ref{lem:Glivenko-Cantelli}. Specifically, for any $\epsilon > 0$,
\begin{align*}
    & \bbP\left( \left| R_K^*(\bm{c},A) -\bbE\left(  R_K^*(\bm{c},A) \right) \right| > \epsilon \right) \\
    = & \bbP\left( \left| \frac{1}{K}\sum_{i=1}^n \mathbb{I}(c_i = 1 , \phi(d_i) \geq c^*) - \bbE\left( \frac{1}{K}\sum_{i=1}^n \mathbb{I}(c_i = 1 , \phi(d_i)\geq c^*) \right) \right| > \epsilon \right) \\
    \leq & \frac{1}{\epsilon^2}\operatorname{Var}\left( \frac{1}{K}\sum_{i=1}^n \mathbb{I}(c_i = 1 , \phi(d_i) \geq c^*) \right)\\
    = & \frac{1}{\epsilon^2}\frac{1}{K^2} \left[  \sum_{i=1}^n \operatorname{Var}\left( \mathbb{I}(c_i = 1 , \phi(d_i) \geq c^*)  \right) + \sum_{i\neq j} \cov\left(\mathbb{I}(c_i = 1 , \phi(d_i) \geq c^*), \mathbb{I}(c_j = 1 , \phi(d_j) \geq c^*)\right) \right] \\
    = & O(n^{-1/4}).
\end{align*}
Moreover, since
    $$
    \bbE\left( \mathbb{I} \left( c_i = 1 , \phi(d_i) \geq c^*\right) \right)
    = \bbP\left( 
    \phi(d_i) \geq c^*\right)\bbP\left( c_i = 1|\phi(d_i) \geq c^*\right) \rightarrow z\kappa 
    $$
   we conclude that
    \begin{equation} \label{eqn:R_star}
    R_K(\bm{c}, A) \xrightarrow{p} \kappa, \quad \text{as } \frac{n}{K} \to \frac{1}{z}.
    \end{equation}
Now consider bounding the difference between $R_K$ and $R_K^*$:
    \begin{align*}
    & \lvert R_K(\bm{c},A) - R_K^*(\bm{c},A) 
    \rvert \\
    \leq & \frac{1}{K}\sum_{i=1}^n \left\{\mathbb{I}\left(\phi_n(d_{(K)}) \leq \phi_n(d_i) < c^*, c_i = 1\right) + \mathbb{I}\left(c^* \leq \phi_n(d_i) < \phi_n(d_{(K)}), c_i = 1\right) \right\}\\    
    \leq & \frac{1}{K}\sum_{i=1}^n \left\{\mathbb{I}\left(\phi_n(d_{(K)}) \leq \phi_n(d_i) < c^*\right) + \mathbb{I}\left(c^* \leq \phi_n(d_i) < \phi_n(d_{(K)})\right) \right\}\\
    \leq & \frac{1}{K}\sum_{i=1}^n \mathbb{I}(c^* - \epsilon_n \leq \phi_n(d_i) \leq c^* + \epsilon_n) + \frac{n}{K}\mathbb{I}(B_n^c)  
    \end{align*}
    where we define the event
    $$
    B_n = \{ c^* - \epsilon_n \leq \phi_n(d_{(K)}) \leq c^* + \epsilon_n \},
    $$
    for some sequence $\epsilon_n > 0 $. 
    By Lemma~\ref{lem:Glivenko-Cantelli}, we have
    $$
    \phi_n(d_{(K)}) \convp c^*. 
    $$
    Therefore, there exist $\epsilon_n \rightarrow 0 $ such that 
    $$
    \bbP(B_n) \to 1 \quad \text{as} \quad n \to \infty.
    $$
    We can bound the expectation as
    $$
    \bbE\lvert R_K(\bm{c},A) - R_K^*(\bm{c},A) 
    \rvert \leq \frac{n}{K}\bbP\left( c^* - \epsilon_n \leq \phi_n(d_1) \leq c^* + \epsilon_n \right) + \frac{n}{K}\bbP(B_n^c) \rightarrow 0, 
    $$
    because $\phi_n(d_1) \convd \cN(0,1)$, $\epsilon_n\rightarrow 0$ and $\frac{n}{K} \rightarrow \frac{1}{z} \in (1,\infty)$. 
    Since  $L^1$  convergence implies convergence in probability, we have 
    \begin{equation}\label{eqn:R_and_R_star}
    R_K(\bm{c},A) - R_K^*(\bm{c},A) \convp 0.         
    \end{equation}
    Combing \eqref{eqn:R_star} and \eqref{eqn:R_and_R_star}, we conclude  
    $$
    R_K(\bm{c},A) \convp \kappa,
    $$
    which completes the proof of Conclusion (1). 
   
    {\bf Proof of Conclusion (2)}. \\
    Using results derived above,
    \begin{align*}
    \bbE\left( d_i|i\in \cC_1 \right) - \bbE\left( d_i|i\in \cC_2 \right) &= [(n-1)q + (n_1 - 1)\delta_1] - [(n-1)q + (n_2-1)\delta_2] \\
    &=  (n_1 - 1)\delta_1 - (n_2-1)\delta_2 := \Delta_n (\delta). 
    \end{align*}
    Using Bernstein inequality and the union bound argument,
    \begin{align*}
        & \bbP\left( \max_{i\in \cC_1}|d_i - \bbE(d_i)| \geq \frac{1}{3} |\Delta_n (\delta)| \right) \\
        \leq & \sum_{i\in \cC_1} \bbP\left( |d_i - \bbE(d_i)| \geq \frac{1}{3} |\Delta_n (\delta) | \right) \\
        \leq & 2n_1 \exp\left( - \frac{\frac{1}{18}\Delta_n ^2(\delta)}{\operatorname{Var}(d_i) + \frac{1}{9}| \Delta_n (\delta)| } \right) \\
        \leq &  2n_1 \exp\left( - 
        \frac{\frac{1}{18}\Delta_n^2 (\delta) }{(n-1)q(1-q) + (n_1 - 1)\left[ \delta_1 (1-2q) - \delta_1^2 \right] + \frac{1}{9} |\Delta_n (\delta)| } \right) \\
        \leq & 2n_1  \exp\left( - c\log n \right)
        \leq 2 n^{-(c-1)}
    \end{align*}
    for some constant $c$ under the assumption that $|\Delta_n (\delta)| = |\kappa\delta_1 - (1-\kappa)\delta_2| \gtrsim \sqrt{\frac{\log n}{n}}$. 
   
    Similarly, we can show that 
    $$
    \bbP\left( \max_{i\in \cC_2}|d_i - \bbE(d_i)| \geq \frac{1}{3} |\Delta_n(\delta)| \right) \leq 2 n^{-(c-1)}. 
    $$
    Therefore, in the case where $\Delta_n(\delta)<0$, with probability $1 - 4n^{-(c-1)}$, 
    $$\max_{i\in \cC_1} d_i \leq [(n-1)q + (n_1 - 1)\delta_1] - \frac{1}{3}\Delta_n(\delta) := B_1, 
    $$
    and 
    $$
    \max_{i\in \cC_2} d_i \geq [(n-1)q + (n_2-1)\delta_2] + \frac{1}{3}\Delta_n(\delta) := B_2,
    $$
    where $B_1 < B_2$ under the condition $\Delta_n(\delta)<0$. 
    From the central limit theorem, 
    $$
    n_2 = (1-\kappa)n + O_{\prob}(\sqrt{n}). 
    $$
    Under the assumption that
    $K/n \rightarrow z$ with $z \leq 1-\kappa$, 
    we have 
    $$d_{(K)} >  [(n-1)q + (n_1 - 1)\delta_1] - \frac{1}{3}\Delta_n(\delta),$$
    and all top $K$ nodes are from group 2 with high probability. 
    Therefore, 
    $$
    R_K(\bm{c},A) \convp 0.
    $$    
    If $\Delta_n(\delta) > 0$, we can similarly show that
    $$
    R_K(\bm{c},A) \convp 1.
    $$
    This finishes the entire proof. 
\end{proof}

\subsubsection{Proof of Theorem~2}
\begin{proof}
    The SBM can be viewed as a special case of the graphon model. In this case, Theorem~2 follows as a corollary of Theorem~1, where the graphon $\omega_n$ is block-constant:
    \begin{equation*}
        \omega_n(u,v) = \begin{cases} q + \mu_1/\sqrt{n}, &\tabby u \in [0,\kappa], v \in [0,\kappa], \\
        q, &\tabby u \in [0,\kappa], v \in (\kappa,1], \\
        q, &\tabby u \in (\kappa,1], v \in [0,\kappa], \\
        q + \mu_2/\sqrt{n}, &\tabby u \in (\kappa,1], v \in (\kappa,1]. \end{cases}
    \end{equation*}
\end{proof}

\subsection{Technical proofs for Section 4.2 of the manuscript} \label{app:detection}

We first prove a supporting lemma about the joint asymptotic normality of the empirical moments.
\begin{lemma}
    Suppose $q \in (0,1)$ and $\beta \in (0,1)$. Then 
    $$
     n \begin{pmatrix}
        \hat{u}_{1,1} - p_1^* \\
        \hat{u}_{2,1} - \beta_{11} p_1^* \\ 
        \hat{u}_{1,2} - p_2^* \\
        \hat{u}_{2,2} - \beta_{22} p_2^* \\ 
        \hat{u}_{1,\mathrm{b}} - q^* \\
        \hat{u}_{2,\mathrm{b}} - \beta q^*
    \end{pmatrix} \convd \mathcal{N}\left( \bm{0}, \begin{pmatrix}
        a_{\kappa}^{-1} \Sigma_u & \bm{0} & \bm{0} \\ \bm{0} & b_{\kappa}^{-1} \Sigma_u & \bm{0} \\ 
        \bm{0} & \bm{0} & c_{\kappa}^{-1} \Sigma_u
    \end{pmatrix}\right),
    $$
    where
    \begin{align*}
        \beta_{11} &= \beta - \gamma_1/\sqrt{n}, \\
        \beta_{22} &= \beta - \gamma_2/\sqrt{n}, \\
        q^* &= (1-\beta) q, \\
        p^*_g &=  (1 - \beta + \gamma_g/\sqrt{n})(q + \mu_g/\sqrt{n}) ~ \text{for $g=1,2$},\\
        a_{\kappa} &= \frac{1}{2} \kappa^2, \\
        b_{\kappa} &= \frac{1}{2}(1-\kappa)^2, \\
        c_{\kappa} &= \kappa(1-\kappa), \\
        \Sigma_u &= \begin{pmatrix}
        q^*(1-q^*) & (1/2 - q^*)\beta q^* \\ (1/2 - q^*)\beta q^* & (1/2 -  \beta q^*)\beta q^*
    \end{pmatrix}.
    \end{align*}
    Moreover, since $q \in (0,1)$ and $\beta \in (0,1)$, $\Sigma_u$ is strictly positive definite.
    
    As a consequence, we also have
    \begin{equation*} \label{u_consistent}
\begin{pmatrix}
        \hat{u}_{1,1} &
        \hat{u}_{2,1} & 
        \hat{u}_{1,2} &
        \hat{u}_{2,2} & 
        \hat{u}_{1,\mathrm{b}} &
        \hat{u}_{2,\mathrm{b}}
    \end{pmatrix}^{\tp} \convp \begin{pmatrix}
        q^* &
        \beta q^* &
        q^* &
        \beta q^* &
        q^* &
        \beta q^*
\end{pmatrix}^{\tp}.
\end{equation*}
\end{lemma}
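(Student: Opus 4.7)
The plan is to leverage the conditional independence structure of the edge observations given the community labels. First I would condition on the label vector $\bm{c}$, making $n_1$, $n_2$, $N_1$, $N_2$, $N_{\mathrm{b}}$ deterministic. The three edge sets (within group 1, within group 2, between groups) involve disjoint node pairs, so given $\bm{c}$, the six random variables $\{\hat{u}_{k,\mathrm{x}}\}$ decouple into three independent blocks indexed by $\mathrm{x} \in \{1,2,\mathrm{b}\}$, with each block built from i.i.d.\ bivariate summands $\bigl(Y_{ij}, \tfrac{1}{2}|Y_{ij}-Y^*_{ij}|\bigr)$.

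Next I would apply the multivariate CLT within each block. For each pair $(i,j)$, $Y_{ij}$ is Bernoulli with mean $p^*_g$ (respectively $q^*$), and $|Y_{ij}-Y^*_{ij}|$ is Bernoulli with mean $2\beta_{gg}(1-\beta_{gg})p_g$ (respectively $2\beta(1-\beta)q$), using the fact that only type II errors occur so $Y_{ij}=Y^*_{ij}=0$ whenever $A_{ij}=0$. The key moment computations use the identity $|Y_{ij}-Y^*_{ij}| = Y_{ij}+Y^*_{ij}-2Y_{ij}Y^*_{ij}$, yielding $\bbE[Y_{ij}\cdot\tfrac{1}{2}|Y_{ij}-Y^*_{ij}|] = \tfrac{1}{2}(q^* - q(1-\beta)^2) = \tfrac{1}{2}\beta q^*$ in the between-group block, so the cross-covariance simplifies to $\beta q^*(1/2 - q^*)$; analogous calculations yield the diagonal entries $q^*(1-q^*)$ and $\beta q^*(1/2 - \beta q^*)$. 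After multiplying by $n$ and using that $N_{\mathrm{b}}/n^2 = n_1 n_2/n^2 \convp \kappa(1-\kappa) = c_\kappa$ (with analogues $N_g/n^2 \convp a_\kappa, b_\kappa$ by LLN on $c_i$'s), Slutsky delivers $n(\hat{u}_{k,\mathrm{x}} - \bbE \hat{u}_{k,\mathrm{x}}) \convd \mathcal{N}(0, c_\kappa^{-1}\Sigma_u)$ etc.\ conditionally on $\bm{c}$, with the block-diagonal structure given in the statement. Because the conditional limit is deterministic, the unconditional statement follows.

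Throughout I need to keep track of the fact that $p^*_g$, $\beta_{gg}$ depend on $n$ through $O(n^{-1/2})$ perturbations. Since the centering is exactly at the $n$-dependent mean $p^*_g$ and the covariance entries $q^*(1-q^*)$ etc.\ are continuous in the relevant parameters, the perturbations $\mu_g/\sqrt{n}, \gamma_g/\sqrt{n}$ only contribute $o(1)$ to the limit covariance and can be absorbed by Slutsky. Positive definiteness of $\Sigma_u$ reduces to verifying $\det\Sigma_u = \beta q^{*2}\bigl[(1-q^*)(1/2-\beta q^*) - \beta(1/2-q^*)^2\bigr] > 0$ on $q \in (0,1)$, $\beta \in (0,1)$; this can be checked directly as a polynomial inequality in $\beta$ and $q^* = (1-\beta)q$.

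The main obstacle will be bookkeeping rather than any hard probabilistic step: ensuring the $n$-dependent means and the random block sizes combine cleanly under Slutsky, and that the three blocks' joint convergence holds with the correct block-diagonal covariance. The consistency statement at the end follows immediately since $n(\hat{u}_{k,\mathrm{x}} - \bbE\hat{u}_{k,\mathrm{x}}) = O_{\prob}(1)$ while $\bbE\hat{u}_{k,\mathrm{x}}$ converges to the stated limits $q^*$ or $\beta q^*$.
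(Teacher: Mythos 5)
Your proposal is correct and follows essentially the same route as the paper's proof: identify the three independent blocks of disjoint node pairs, compute the bivariate Bernoulli moments of $\bigl(Y_{ij}, \tfrac{1}{2}\lvert Y_{ij}-Y^*_{ij}\rvert\bigr)$, apply a CLT per block while treating the $O(n^{-1/2})$-perturbed parameters as a triangular array, and rescale via $N_1/n^2 \convp a_{\kappa}$, $N_2/n^2 \convp b_{\kappa}$, $N_{\mathrm{b}}/n^2 \convp c_{\kappa}$ with Slutsky. The only cosmetic differences are that you condition on $\bm{c}$ and then decondition rather than working with the random block sizes directly, and you verify positive definiteness of $\Sigma_u$ via the determinant rather than by noting the two coordinates are not linearly related.
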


\begin{proof}
    For $g=1,2$, consider edges within group $g$. 
    Consider the $N_g$ independent random vectors
$$
    \left( Y_{ij}, \frac{1}{2}\lvert Y_{ij} - Y^*_{ij} \rvert \right)
$$
with $i < j$ and $c_i=c_j=g$.
\begin{align*}
    \mathbb{E}Y_{ij} &= p_g^* \\
    \mathbb{E}(\lvert Y_{ij} - Y^*_{ij} \rvert /2) &= \beta_{gg} p_g^* \\
    \sfvar(Y_{ij}) &= p_g^*(1-p_g^*) = q^*(1-q^*) + o(1) \\
    \sfvar(\lvert Y_{ij} - Y^*_{ij} \rvert /2) &= (1/2 -  \beta p_g^*)\beta_{gg} p_g^* = (1/2 - q^*)\beta q^* + o(1) \\
    \cov (Y_{ij},\lvert Y_{ij} - Y^*_{ij} \rvert /2) &= (1/2 - p_g^*)\beta_{gg} p_g^* = (1/2 -  \beta q^*)\beta q^* + o(1).
\end{align*}

since we can view $|Y_{ij} - Y^*_{ij}|\sim \operatorname{Bern}(2\beta_{gg} p_g^*)$.

For between-group edges we have the same replacing $\beta_{gg}$ by $\beta$ and $p^*$ by $q^*$.
Note that $\Sigma_u$ is positive definite, since $\sfvar(Y_{ij}) > 0$, $\sfvar(\lvert Y_{ij} - Y^*_{ij}\rvert) > 0$, and they are not linearly related.
Moreover, $\lambda_{\min}(\Sigma_u)$ will depend only on $p$ and $\beta$.

Since each random vector is bounded we have by Lindeberg-Feller CLT for triangular arrays,
$$
     \begin{pmatrix}
        \sqrt{N_1}(\hat{u}_{1,1} - p_1^*) \\
        \sqrt{N_1}(\hat{u}_{2,1} - \beta_{11} p_1^*) \\ 
        \sqrt{N_2}(\hat{u}_{1,2} - p_2^*) \\
        \sqrt{N_2}(\hat{u}_{2,2} - \beta_{22} p_2^*) \\ 
        \sqrt{N_{\mathrm{b}}}(\hat{u}_{1,\mathrm{b}} - q^*) \\
        \sqrt{N_{\mathrm{b}}}(\hat{u}_{2,\mathrm{b}} - \beta q^*) 
    \end{pmatrix} \convd \mathcal{N}\left( \bm{0}, \begin{pmatrix}
        \Sigma_u & \bm{0} & \bm{0} \\ \bm{0} & \Sigma_u & \bm{0} \\
        \bm{0} & \bm{0} & \Sigma_u
    \end{pmatrix}\right),
$$
where $p_g^*$ and $\beta_{gg}$ for $g=1,2$ depend on $n$.
Using the central limit theorem, 
$$
    \frac{1}{n^2}N_{1} = \frac{1}{n^2}\left\{\frac{n_1(n_1 - 1)}{2} \right\} \convp \frac{1}{2} \kappa^2 = a_{\kappa},
$$
$$
    \frac{1}{n^2}N_{2} = \frac{1}{n^2}\left\{\frac{n_2(n_2 - 1)}{2} \right\} \convp \frac{1}{2} (1-\kappa)^2 = b_{\kappa},
$$
and 
$$
     \frac{1}{n^2}N_{\mathrm{b}} = \frac{n_1n_2 }{n^2} \convp \kappa(1-\kappa) = c_{\kappa}
$$
Thus, by Slutsky's Theorem, we have
$$
     n \begin{pmatrix}
        \hat{u}_{1,1} - p_1^* \\
        \hat{u}_{2,1} - \beta_{11} p_1^* \\ 
        \hat{u}_{1,2} - p_2^* \\
        \hat{u}_{2,2} - \beta_{22} p_2^* \\ 
        \hat{u}_{1,\mathrm{b}} - q^* \\
        \hat{u}_{2,\mathrm{b}} - \beta q^*
    \end{pmatrix} \convd \mathcal{N}\left( \bm{0}, \begin{pmatrix}
        a_{\kappa}^{-1} \Sigma_u & \bm{0} & \bm{0} \\ \bm{0} & b_{\kappa}^{-1} \Sigma_u & \bm{0} \\ 
        \bm{0} & \bm{0} & c_{\kappa}^{-1} \Sigma_u
    \end{pmatrix}\right),
    $$
    as desired.
\end{proof}

We now use the delta method to prove Theorems~\ref{thm:pivot_beta} and \ref{thm:pivot}, asymptotic normality of our localized estimators

\begin{theorem} \label{thm:pivot_beta}
    Suppose $q \in (0,1)$, $\beta \in (0,1)$, and $Y \neq Y^*$.
    Then, using notation from Section~4.1 of the manuscript, 
    $$
        \begin{pmatrix}
        n(\hat{\beta}_{\mathrm{b}} - \beta) \\
        \sqrt{n}(\hat{\gamma}_1 - \gamma_1) \\
        \sqrt{n}(\hat{\gamma}_2 - \gamma_2)
    \end{pmatrix} \convd \mathcal{N}\left( \bm{0}, \Sigma_{\beta} \right)
    $$
    for a $3 \times 3$ positive definite covariance $\Sigma_{\beta}$.
\end{theorem}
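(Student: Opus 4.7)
The plan is to apply the multivariate delta method to the joint asymptotic normality of the empirical moments established in the preceding lemma. First, I would invoke the lemma to obtain that the rescaled vector
$$
n\bigl(\hat{u}_{1,1} - p_1^*, \; \hat{u}_{2,1} - \beta_{11}p_1^*, \; \hat{u}_{1,2} - p_2^*, \; \hat{u}_{2,2} - \beta_{22}p_2^*, \; \hat{u}_{1,\mathrm{b}} - q^*, \; \hat{u}_{2,\mathrm{b}} - \beta q^*\bigr)^{\tp}
$$
converges to a centered Gaussian with block-diagonal covariance, so that the three pairs (within-group-1, within-group-2, between-group) are asymptotically independent. The rate $n$ (rather than $\sqrt{n}$) arises because each $\hat{u}$ averages over $\Theta(n^2)$ edges.

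Next, I would apply the delta method to the map $h(x,y) = y/x$ separately to each pair $(\hat{u}_{1,\mathrm{x}}, \hat{u}_{2,\mathrm{x}})$ for $\mathrm{x} \in \{1, 2, \mathrm{b}\}$. Under $q \in (0,1)$ and $\beta \in (0,1)$, the limit points $(q^*, \beta q^*)$ and $(p_g^*, \beta_{gg} p_g^*)$ have strictly positive first coordinates, so $h$ is smooth there with gradient $\nabla h(x,y) = (-y/x^2, 1/x)^{\tp}$. This yields that $n(\hat{\beta}_{\mathrm{b}} - \beta)$, $n(\hat{\beta}_1 - \beta_{11})$, $n(\hat{\beta}_2 - \beta_{22})$ are jointly asymptotically normal with a diagonal covariance matrix $D = \mathrm{diag}(c_{\kappa}^{-1}, a_{\kappa}^{-1}, b_{\kappa}^{-1}) \cdot \sigma_h^2$, where $\sigma_h^2 = \nabla h^{\tp} \Sigma_u \nabla h > 0$ by the strict positive definiteness of $\Sigma_u$.

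Third, I would express the target vector as a fixed linear transformation of the above. Because $\beta_{gg} = \beta - \gamma_g/\sqrt{n}$ gives $\gamma_g = \sqrt{n}(\beta - \beta_{gg})$, a simple rearrangement yields
$$
\sqrt{n}(\hat{\gamma}_g - \gamma_g) = n(\hat{\beta}_{\mathrm{b}} - \beta) - n(\hat{\beta}_g - \beta_{gg}),
$$
so the target vector equals $L \cdot (n(\hat{\beta}_{\mathrm{b}} - \beta), \; n(\hat{\beta}_1 - \beta_{11}), \; n(\hat{\beta}_2 - \beta_{22}))^{\tp}$, where $L$ has rows $(1,0,0)$, $(1,-1,0)$, $(1,0,-1)$. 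By the continuous mapping theorem, the asymptotic covariance of the target is $\Sigma_{\beta} = L D L^{\tp}$; since $\det L = 1$ and $D$ has strictly positive diagonal entries, $\Sigma_{\beta}$ is positive definite.

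The main obstacle I anticipate is carefully handling the drifting true parameter $\beta_{gg} = \beta - \gamma_g/\sqrt{n}$ when applying the delta method, to confirm that the $n^{-1/2}$ bias in the ratio's argument does not spoil the remainder's $o_{\prob}(1)$ control. This is a routine but technical Taylor-expansion estimate, since the $\hat{u}$'s concentrate at their $n$-dependent means at the sharper rate $n^{-1}$. A minor subsidiary point is that the event $Y \neq Y^*$ is invoked only to make the ratios $\hat{\beta}_{\mathrm{x}}$ well-defined in finite samples; asymptotically $\hat{u}_{2,\mathrm{x}}$ concentrates on the strictly positive value $\beta q^*$ or $\beta_{gg} p_g^*$, so the conditioning has no effect on the limiting distribution.
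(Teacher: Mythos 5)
Your proposal is correct and follows essentially the same route as the paper: joint asymptotic normality of the six localized moments from the supporting lemma, followed by the delta method, together with the key identity $\sqrt{n}(\hat{\gamma}_g-\gamma_g)=n(\hat{\beta}_{\mathrm{b}}-\beta)-n(\hat{\beta}_g-\beta_{gg})$. The only difference is organizational: you factor the transformation into pairwise ratios $h(x,y)=y/x$ followed by the linear map $L$, whereas the paper applies the delta method once to the composite six-to-three map $h(u,v,w,x,y,z)=\left(z/y,\; v/u-z/y,\; x/w-z/y\right)$; by the chain rule your $\Sigma_{\beta}=LDL^{\tp}$ coincides with the paper's $\bm{D}h(\bm{\xi}^*)\,\mathrm{diag}\!\left(a_{\kappa}^{-1}\Sigma_u,\,b_{\kappa}^{-1}\Sigma_u,\,c_{\kappa}^{-1}\Sigma_u\right)\bm{D}h(\bm{\xi}^*)^{\tp}$.
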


\begin{proof}[Proof of Theorem~\ref{thm:pivot_beta}]
    Define
$$
    h(u,v,w,x,y,z) = \begin{pmatrix} \frac{z}{y} \\ \frac{v}{u} - \frac{z}{y} \\ \frac{x}{w} - \frac{z}{y} \end{pmatrix},
$$
then
$$
    \bm{D}h(u,v,w,x,y,z) = \begin{pmatrix}
        0 & 0 & 0 & 0 & - \frac{z}{y^2} & \frac{1}{y} \\
        -\frac{v}{u^2} & \frac{1}{u} & 0 & 0 & \frac{z}{y^2} & - \frac{1}{y} \\
         0 & 0 & - \frac{x}{w^2} & \frac{1}{w} & \frac{z}{y^2} & - \frac{1}{y} \\
    \end{pmatrix}
$$
Then by the delta method,
$$
    n \begin{pmatrix} \hat{\beta}_{\mathrm{b}} - \beta \\ \hat{\beta}_{\mathrm{b}} - \hat{\beta}_{g} - (\beta - \beta_{gg}) \end{pmatrix} \convd \mathcal{N}(\bm{0},\Sigma_{\beta}),
$$
where
$$
    \Sigma_{\beta} = \bm{D}h(\bm{\xi}^*) \begin{pmatrix}
        a_{\kappa}^{-1} \Sigma_u & \bm{0} & \bm{0} \\ \bm{0} & b_{\kappa}^{-1} \Sigma_u & \bm{0} \\ 
        \bm{0} & \bm{0} & c_{\kappa}^{-1} \Sigma_u
    \end{pmatrix} \bm{D}h(\bm{\xi}^*)^{\tp}
$$
and $\bm{\xi}^* = (q^*,\beta q^*, q^*, \beta q^*, q^*, \beta q^*)$.
Note that $\hat{\beta}_{\mathrm{b}} - \hat{\beta}_{g} - (\beta - \beta_{gg}) = n^{-1/2}(\hat{\gamma}_g - \gamma_g)$, which completes the proof.
\end{proof}

\begin{theorem} \label{thm:pivot}
    Suppose $q \in (0,1)$, $\beta \in (0,1)$, and $Y \neq Y^*$.
    Then, using notation from Section~4.1 of the manuscript,
    $$
        \sqrt{n} \begin{pmatrix}
        \hat{\mu}_1 - \mu_1 \\
        \hat{\mu}_2 - \mu_2 \\
        \hat{\gamma}_1 - \gamma_1 \\
        \hat{\gamma}_2 - \gamma_2 \\
    \end{pmatrix} \convd \mathcal{N}\left( \bm{0}, \Sigma_{\mu} \right)
    $$
    for a $4 \times 4$ positive definite covariance $\Sigma_{\mu}$.
\end{theorem}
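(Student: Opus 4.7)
The plan is to apply the multivariate delta method to the full six-dimensional vector of empirical moments from the preceding lemma, in a triangular-array setting, and then extract the four coordinates of interest via a linear transformation. The key observation is that the targets are not fixed: $p_g^{**} := p_g^*/(1-\beta_{gg}) = q + \mu_g/\sqrt{n}$ and $\beta_{gg} = \beta - \gamma_g/\sqrt{n}$ both depend on $n$ and converge to $q$ and $\beta$, respectively. Consequently, the natural rescalings of the estimators of $\mu_g$ and $\gamma_g$ at $\sqrt{n}$ rate arise as \emph{differences} of quantities that converge at $n$ rate, a point that must be handled carefully.

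First, I would introduce the maps $\phi(u,v) = u^2/(u-v)$ and $\psi(u,v) = v/u$, noting $\phi(p_g^*,\beta_{gg}p_g^*) = p_g^{**}$, $\phi(q^*,\beta q^*) = q$, $\psi(p_g^*,\beta_{gg}p_g^*) = \beta_{gg}$, and $\psi(q^*,\beta q^*) = \beta$. Applying the delta method (in its triangular-array form, using continuity of $\nabla\phi$ and $\nabla\psi$ at $(q^*,\beta q^*)$) block-by-block to the three disjoint pairs of moments from the lemma, I obtain joint asymptotic normality at rate $n$ for
\[
\bigl(n(\tilde{q}_1 - p_1^{**}),\,n(\hat{\beta}_1 - \beta_{11}),\,n(\tilde{q}_2 - p_2^{**}),\,n(\hat{\beta}_2 - \beta_{22}),\,n(\tilde{q}_{\mathrm{b}} - q),\,n(\hat{\beta}_{\mathrm{b}} - \beta)\bigr),
\]
where $\tilde{q}_g := \hat{u}_{1,g}/(1-\hat{\beta}_g)$ and $\tilde{q}_{\mathrm{b}} := \hat{u}_{1,\mathrm{b}}/(1-\hat{\beta}_{\mathrm{b}})$. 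The limiting covariance is block-diagonal with three blocks of the form $J\,\Sigma_u J^{\tp}$ scaled by $a_\kappa^{-1}$, $b_\kappa^{-1}$, $c_\kappa^{-1}$, where $J$ is the $2\times2$ Jacobian of $(\phi,\psi)$ at $(q^*,\beta q^*)$.

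Second, using $p_g^{**} - q = \mu_g/\sqrt{n}$ and $\beta - \beta_{gg} = \gamma_g/\sqrt{n}$, I would express the quantities of interest as linear combinations of the six centered estimators above, namely
\begin{align*}
\sqrt{n}(\hat{\mu}_g - \mu_g) &= n(\tilde{q}_g - p_g^{**}) - n(\tilde{q}_{\mathrm{b}} - q), \\
\sqrt{n}(\hat{\gamma}_g - \gamma_g) &= n(\hat{\beta}_{\mathrm{b}} - \beta) - n(\hat{\beta}_g - \beta_{gg}),
\end{align*}
for $g=1,2$. Joint asymptotic normality then follows by the linear transformation, and $\Sigma_\mu$ can be read off as $L\,\Sigma_{\mathrm{full}}L^{\tp}$ for the explicit $4\times 6$ linear map $L$.

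Finally, to prove positive definiteness of $\Sigma_\mu$, it suffices to show the Jacobian $J$ is invertible (so that each $2\times 2$ block of the intermediate covariance is positive definite, by positive definiteness of $\Sigma_u$ from the lemma) and that $L$ has rank $4$. A direct computation gives $\det J = 1/\{q^*(1-\beta)\}\neq 0$, and the rank of $L$ is easily seen to be $4$ from its explicit form. The main obstacle I anticipate is the triangular-array bookkeeping in Step 2: the centering terms $p_g^{**}$ and $\beta_{gg}$ are $n$-dependent, so one must verify that the delta-method remainder terms, which a priori involve the $n$-dependent gradients, can be controlled uniformly in $n$. This is handled by standard arguments since $\nabla\phi$ and $\nabla\psi$ are continuous at $(q^*,\beta q^*)$ and $(p_g^*,\beta_{gg}p_g^*)\to(q^*,\beta q^*)$.
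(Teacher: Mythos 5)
Your proposal is correct and follows essentially the same route as the paper: the paper applies the delta method once to the six empirical moments via the single map $g(u,v,w,x,y,z) = \bigl(u^2/(u-v) - y^2/(y-z),\; w^2/(w-x) - y^2/(y-z),\; v/u - z/y,\; x/w - z/y\bigr)^{\tp}$, which is exactly your blockwise $(\phi,\psi)$ transformation composed with your linear map $L$, so your covariance $L\,\Sigma_{\mathrm{int}}\,L^{\tp}$ coincides with the paper's $\bm{D}g(\bm{\xi}^*)\,\Sigma_{\mathrm{full}}\,\bm{D}g(\bm{\xi}^*)^{\tp}$. Your explicit treatment of the $n$-dependent centerings and the positive-definiteness argument (invertibility of $J$ with $\det J = 1/\{q^*(1-\beta)\}$ together with $\operatorname{rank}(L)=4$) matches, and is slightly more detailed than, the paper's own handling.
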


\begin{proof}[Proof of Theorem~\ref{thm:pivot}]
    This proof proceeds similarly to the proof of Theorem~\ref{thm:pivot_beta}, for a smooth transformation
    $$
    g(u,v,w,x,y,z) = \begin{pmatrix}
        \frac{u^2}{u-v} - \frac{y^2}{y -z} \\
        \frac{w^2}{w-x} - \frac{y^2}{y -z} \\
        \frac{v}{u} - \frac{z}{y} \\
        \frac{x}{w} - \frac{z}{y}
    \end{pmatrix}.
    $$
    We have that
    $$
    g(\hat{u}_{1,1},\hat{u}_{2,1},\hat{u}_{1,2},\hat{u}_{2,2},\hat{u}_{1,\mathrm{b}},\hat{u}_{2,\mathrm{b}}) 
    = \frac{1}{\sqrt{n}} \begin{pmatrix}
        \hat{\mu}_1 \\
        \hat{\mu}_2 \\
        \hat{\gamma}_1 \\
        \hat{\gamma}_2
    \end{pmatrix}
    $$
    and
    $$
    g(p_1^*,\beta_{11} p_1^*,p_2^*,\beta_{22} p_2^*,q^*,\beta_{\mathrm{b}}q^*) = \frac{1}{\sqrt{n}} \begin{pmatrix}
        \mu_1 \\
        \mu_2 \\
        \gamma_1 \\
        \gamma_2
    \end{pmatrix}.
    $$
    So by delta method,
    $$
    \sqrt{n} \begin{pmatrix}
        \hat{\mu}_1 - \mu_1 \\
        \hat{\mu}_2 - \mu_2 \\
        \hat{\gamma}_1 - \gamma_1 \\
        \hat{\gamma}_2 - \gamma_2
    \end{pmatrix} \convd \mathcal{N}\left( \bm{0}, \Sigma_{\mu} \right)
$$
where
$$
   \Sigma_{\mu} =  \bm{D}g(\bm{\xi}^*) \begin{pmatrix}
        a_{\kappa}^{-1} \Sigma_u & \bm{0} & \bm{0} \\ \bm{0} & b_{\kappa}^{-1} \Sigma_u & \bm{0} \\ 
        \bm{0} & \bm{0} & c_{\kappa}^{-1} \Sigma_u
    \end{pmatrix} \bm{D}g(\bm{\xi}^*)^{\tp},
$$
and
$$
    \bm{D}g(x,y,z,w) = \begin{pmatrix}
        \frac{u(u-2v)}{(u-v)^2} & \frac{u^2}{(u-v)^2} & 0 & 0 & - \frac{y(y-2z)}{(y-z)^2} & - \frac{y^2}{(y-z)^2} \\
        0 & 0 & \frac{w(w-2x)}{(w-x)^2} & \frac{w^2}{(w-x)^2} & - \frac{y(y-2z)}{(y-z)^2} & - \frac{y^2}{(y-z)^2} \\
        -\frac{v}{u^2} & \frac{1}{u} & 0 & 0 & \frac{z}{y^2} & - \frac{1}{y} \\
         0 & 0 & - \frac{x}{w^2} & \frac{1}{w} & \frac{z}{y^2} & - \frac{1}{y} \\
    \end{pmatrix}.
$$
\end{proof}

We are now ready to state and prove Corollaries~\ref{cor:beta_test} and \ref{cor:mu_test}, which give the asymptotic distributions of the $\chi^2$ pivots used in Theorem~3 of the manuscript.

\begin{corollary} \label{cor:beta_test}
    Under the assumptions of Theorem~\ref{thm:pivot_beta}, suppose $\widehat{\Theta}_{\beta}$ is a consistent estimator of $(\Sigma_{\beta})^{-1}$. Then under $\gamma_1=\gamma_2=0$, we have
    $$
        Q_{n,\bar{\beta}}(\beta) := (n(\hat{\beta}_{\mathrm{b}} - \beta), \sqrt{n}\hat{\gamma}_1, \sqrt{n}\hat{\gamma}_2 )^{\tp} \widehat{\Theta}_{\beta} (n(\hat{\beta}_{\mathrm{b}} - \beta), \sqrt{n}\hat{\gamma}_1, \sqrt{n}\hat{\gamma}_2) \convd \chi^2_3.
    $$
\end{corollary}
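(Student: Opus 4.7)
The plan is to derive this directly from Theorem~\ref{thm:pivot_beta} combined with Slutsky's theorem and the standard fact that a quadratic form in a nondegenerate Gaussian vector with its inverse covariance gives a chi-squared distribution.

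First, I would specialize Theorem~\ref{thm:pivot_beta} to the null setting $\gamma_1 = \gamma_2 = 0$. Since $\sqrt{n}(\hat{\gamma}_g - \gamma_g) = \sqrt{n}\hat{\gamma}_g$ when $\gamma_g = 0$, the theorem gives
$$
    W_n := \bigl( n(\hat{\beta}_{\mathrm{b}} - \beta),\, \sqrt{n}\hat{\gamma}_1,\, \sqrt{n}\hat{\gamma}_2 \bigr)^{\tp} \convd Z \sim \mathcal{N}(\bm{0}, \Sigma_{\beta}).
$$
Theorem~\ref{thm:pivot_beta} also guarantees that $\Sigma_{\beta}$ is positive definite, hence invertible, and so $\Sigma_{\beta}^{-1}$ is well-defined and positive definite as well.

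Next, I would combine this with the consistency assumption $\widehat{\Theta}_{\beta} \convp \Sigma_{\beta}^{-1}$ via the continuous mapping theorem applied to the jointly continuous map $(w, M) \mapsto w^{\tp} M w$. Together with Slutsky's theorem for the joint convergence $(W_n, \widehat{\Theta}_{\beta}) \convd (Z, \Sigma_{\beta}^{-1})$, this yields
$$
    Q_{n,\bar{\beta}}(\beta) = W_n^{\tp} \widehat{\Theta}_{\beta} W_n \convd Z^{\tp} \Sigma_{\beta}^{-1} Z.
$$

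To finish, I would invoke the standard reduction: writing $Z = \Sigma_{\beta}^{1/2} \tilde{Z}$ with $\tilde{Z} \sim \mathcal{N}(\bm{0}, I_3)$, we obtain $Z^{\tp}\Sigma_{\beta}^{-1} Z = \tilde{Z}^{\tp} \tilde{Z} = \sum_{k=1}^3 \tilde{Z}_k^2 \sim \chi^2_3$. The only subtle point is the positive-definiteness of $\Sigma_{\beta}$, which was asserted in Theorem~\ref{thm:pivot_beta} but is implicit in the delta-method expression through the positive-definiteness of $\Sigma_u$ (valid for $q \in (0,1)$, $\beta \in (0,1)$) and the non-degeneracy of the Jacobian $\bm{D}h(\bm{\xi}^*)$. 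Provided this is in hand, no other step presents any real obstacle, and the chi-squared limit has exactly $3$ degrees of freedom rather than a degenerate weighted form.
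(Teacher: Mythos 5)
Your argument is correct and takes essentially the same route as the paper: both reduce the corollary to Theorem~\ref{thm:pivot_beta} plus the assumed consistency of $\widehat{\Theta}_{\beta}$, handle the random weight matrix by a Slutsky-type step (the paper splits the quadratic form into a $\Sigma_{\beta}^{-1}$ part and a remainder that vanishes because the vector is $O_{\prob}(1)$ and $\widehat{\Theta}_{\beta} - \Sigma_{\beta}^{-1} = o_{\prob}(1)$, which is equivalent to your joint-convergence/continuous-mapping phrasing), and conclude $\chi^2_3$ from the quadratic form in a nondegenerate Gaussian. The paper's proof additionally constructs an explicit plug-in $\widehat{\Theta}_{\beta}$ and verifies its operator-norm consistency together with the positive definiteness of $\Sigma_{\beta}$ via the full-row-rank Jacobian $\bm{D}h(\bm{\xi}^*)$, but since the corollary's hypothesis already assumes consistency, your shorter version is sufficient.
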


\begin{corollary} \label{cor:mu_test}
    Under the assumptions of Theorem~\ref{thm:pivot}, suppose $\widehat{\Theta}_{\mu}$ is a consistent estimator of $(\Sigma_{\mu})^{-1}$. Then under $H_0^{(\mu)}$, we have
    $$
        Q_{n,\mu} := n (\hat{\mu}_1, \hat{\mu}_2, \hat{\gamma}_1,\hat{\gamma}_2)^{\tp} \widehat{\Theta}_{\mu} (\hat{\mu}_1, \hat{\mu}_2, \hat{\gamma}_1,\hat{\gamma}_2) \convd \chi^2_4.
    $$
\end{corollary}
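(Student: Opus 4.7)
The plan is to deduce the $\chi_4^2$ limit by combining the joint asymptotic normality supplied by Theorem~\ref{thm:pivot} with a Slutsky/continuous-mapping argument, so the technical content is essentially a routine quadratic-form-of-a-Gaussian computation. Specifically, under the null $H_0^{(\mu)}$ we have $\mu_1=\mu_2=\gamma_1=\gamma_2=0$, so Theorem~\ref{thm:pivot} yields
\[
    \sqrt{n}\,(\hat{\mu}_1,\hat{\mu}_2,\hat{\gamma}_1,\hat{\gamma}_2)^{\tp} \convd \mathcal{N}(\bm{0},\Sigma_{\mu}).
\]
Assuming $\Sigma_\mu$ is invertible (see below), $\Sigma_\mu^{-1/2}$ is well defined, and the continuous mapping theorem gives $\sqrt{n}\,\Sigma_\mu^{-1/2}(\hat{\mu}_1,\hat{\mu}_2,\hat{\gamma}_1,\hat{\gamma}_2)^{\tp} \convd \mathcal{N}(\bm{0},I_4)$, whose squared Euclidean norm is exactly $\chi^2_4$. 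This squared norm equals $n(\hat{\mu}_1,\hat{\mu}_2,\hat{\gamma}_1,\hat{\gamma}_2)\,\Sigma_\mu^{-1}\,(\hat{\mu}_1,\hat{\mu}_2,\hat{\gamma}_1,\hat{\gamma}_2)^{\tp}$.

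Next I would swap $\Sigma_\mu^{-1}$ for the consistent plug-in $\widehat{\Theta}_\mu$ using Slutsky's theorem. Writing
\[
    Q_{n,\mu} = n(\hat{\mu}_1,\hat{\mu}_2,\hat{\gamma}_1,\hat{\gamma}_2)\,\Sigma_\mu^{-1}\,(\hat{\mu}_1,\hat{\mu}_2,\hat{\gamma}_1,\hat{\gamma}_2)^{\tp} + n(\hat{\mu}_1,\hat{\mu}_2,\hat{\gamma}_1,\hat{\gamma}_2)\,(\widehat{\Theta}_\mu - \Sigma_\mu^{-1})\,(\hat{\mu}_1,\hat{\mu}_2,\hat{\gamma}_1,\hat{\gamma}_2)^{\tp},
\]
the first term converges to $\chi^2_4$ by the previous paragraph, while the second term is bounded in operator norm by $\|\widehat{\Theta}_\mu - \Sigma_\mu^{-1}\|_{\mathrm{op}}\cdot\|\sqrt{n}(\hat{\mu}_1,\hat{\mu}_2,\hat{\gamma}_1,\hat{\gamma}_2)\|_2^2 = o_{\mathbb P}(1)\cdot O_{\mathbb P}(1)$, hence converges to $0$ in probability. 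Another application of Slutsky then gives $Q_{n,\mu}\convd\chi^2_4$.

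The main obstacle, and the only nontrivial step, is verifying that $\Sigma_\mu$ is strictly positive definite so that $\Sigma_\mu^{-1/2}$ and $\Sigma_\mu^{-1}$ exist and $\widehat{\Theta}_\mu$ is a sensible consistent estimator. From the proof of Theorem~\ref{thm:pivot}, $\Sigma_\mu = \bm{D}g(\bm{\xi}^*)\,\Omega\,\bm{D}g(\bm{\xi}^*)^{\tp}$, where $\Omega$ is the block-diagonal covariance with blocks $a_\kappa^{-1}\Sigma_u,\,b_\kappa^{-1}\Sigma_u,\,c_\kappa^{-1}\Sigma_u$, and $\Sigma_u$ is already shown positive definite whenever $q\in(0,1)$ and $\beta\in(0,1)$. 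Thus positive-definiteness of $\Sigma_\mu$ reduces to showing that $\bm{D}g(\bm{\xi}^*)$ has full row rank $4$. At $\bm{\xi}^* = (q^*,\beta q^*,q^*,\beta q^*,q^*,\beta q^*)$, the first two rows of $\bm{D}g(\bm{\xi}^*)$ depend on the within-group coordinates $(u,v)$ and $(w,x)$, respectively, while only the last two columns involve $(y,z)$; the $(\mu_1,\mu_2)$ rows involve the within-group coordinates in a way that the $(\gamma_1,\gamma_2)$ rows do not (by inspection of the explicit $\bm{D}g$ formula). I would verify this by evaluating $\bm{D}g(\bm{\xi}^*)$ explicitly at $\bm{\xi}^*$ and computing a nonzero $4\times 4$ minor, which establishes full row rank and hence the invertibility of $\Sigma_\mu$ for all $q\in(0,1)$ and $\beta\in(0,1)$. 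The remaining step is to exhibit an explicit consistent estimator $\widehat{\Theta}_\mu$, which is immediate by plugging the moment estimators $(\hat u_{1,1},\hat u_{2,1},\hat u_{1,2},\hat u_{2,2},\hat u_{1,\mathrm b},\hat u_{2,\mathrm b},\hat\kappa)$ and $\hat\beta_{\mathrm b}$ into the closed-form expression for $\Sigma_\mu^{-1}$ and invoking the continuous mapping theorem on the consistent moments.
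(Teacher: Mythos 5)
Your proposal is correct and follows essentially the same route as the paper: the paper's proof of Corollary~\ref{cor:mu_test} defers to the argument for Corollary~\ref{cor:beta_test}, which uses exactly your decomposition $Q_{n,\mu} = v^{\tp}\Sigma_{\mu}^{-1}v + v^{\tp}(\widehat{\Theta}_{\mu}-\Sigma_{\mu}^{-1})v$ with the remainder controlled by $\lVert\widehat{\Theta}_{\mu}-\Sigma_{\mu}^{-1}\rVert_2\cdot O_{\prob}(1)$ and Slutsky, together with positive definiteness of $\Sigma_{\mu}$ obtained from the positive definiteness of $\Sigma_u$ and the full row rank of $\bm{D}g(\bm{\xi}^*)$, and the same plug-in construction of $\widehat{\Theta}_{\mu}$ from the empirical moments.
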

In practice, a consistent estimator for the asymptotic covariance matrix can be found easily by plugging in the MLE's and error rate estimators developed in Section~4.1 of the manuscript.

\begin{proof}[Proof of Corollary~\ref{cor:beta_test}]
    This result follows directly from Theorem~\ref{thm:pivot_beta}, if we can find a consistent (in operator norm) estimator of $\Sigma_{\beta}^{-1}$.

    We propose the natural plug-in estimate
    $$
        \widehat{\Theta}_{\beta} = \left\{ \bm{D}h(\hat{\bm{\xi}}) \begin{pmatrix}
        \frac{n^2}{N_{1}} \Sigma_u(\hat{u}_{1,1},\hat{u}_{2,1}) & \bm{0}  & \bm{0} \\ 
        \bm{0} & \frac{n^2}{N_{2}} \Sigma_u(\hat{u}_{1,2},\hat{u}_{2,2}) & \bm{0}  \\ \bm{0} & \bm{0} & \frac{n^2}{N_{\mathrm{b}}} \Sigma_u(\hat{u}_{1,\mathrm{b}},\hat{u}_{2,\mathrm{b}})
    \end{pmatrix} \bm{D}h(\hat{\bm{\xi}})^{\tp} \right\}^{-1}
    $$
    where
    $$
        \hat{\bm{\xi}} = \begin{pmatrix}
        \hat{u}_{1,1} &
        \hat{u}_{2,1} & 
        \hat{u}_{1,2} &
        \hat{u}_{2,2} & 
        \hat{u}_{1,\mathrm{b}} &
        \hat{u}_{2,\mathrm{b}}
    \end{pmatrix}^{\tp}
    $$
    and
    $$
        \Sigma_u(x,y) = \begin{pmatrix} x(1-x) & (1/2 - x) y \\
        (1/2 - x) y & (1/2 - y) y \end{pmatrix}.
    $$
    By convergence of 
    \begin{equation*}
        \begin{pmatrix}
        \hat{u}_{1,1} &
        \hat{u}_{2,1} & 
        \hat{u}_{1,2} &
        \hat{u}_{2,2} & 
        \hat{u}_{1,\mathrm{b}} &
        \hat{u}_{2,\mathrm{b}}
    \end{pmatrix}^{\tp} \convp \begin{pmatrix}
        q^* &
        \beta q^* &
        q^* &
        \beta q^* &
        q^* &
        \beta q^*
\end{pmatrix}^{\tp}, 
    \end{equation*}
    \begin{equation*}
        \frac{n^2}{N_{1}} \inprob a_{\kappa}^{-1}, \quad \frac{n^2}{N_{2}} \inprob b_{\kappa}^{-1}, \quad \frac{n^2}{N_{\mathrm{b}}} \inprob c_{\kappa}^{-1}
    \end{equation*}
    and the continuous mapping theorem, we have
    $$
        \lVert \widehat{\Theta}_{\beta}^{-1} - \Sigma_{\beta} \rVert_2 \inprob 0.
    $$
    Note that for square symmetric matrices, we have
    $$
        \lVert A^{-1} - B^{-1} \rVert_2 \leq \lVert A^{-1} (B - A) B^{-1} \rVert_2 \leq \frac{\lVert A - B \rVert_2}{\lambda_{\min}(A) \lambda_{\min}(B)} \leq \frac{\lVert A - B \rVert_2}{\lambda_{\min}(A) \left\{ \lambda_{\min}(A) - \lVert A - B \rVert_2 \right\} }.
    $$
    Thus, a sufficient condition for consistency of $\widehat{\Theta}_{\beta}$ is $\lVert \widehat{\Theta}_{\beta}^{-1} - \Sigma_{\beta} \rVert_2 = o_P(1)$, and $\lambda_{\min}(\Sigma_{\beta}) > 0$.
    
    First, recall that under our conditions $q \in (0,1)$ and $\beta \in (0,1)$, we have $\Sigma_u$ is positive definite.
    We also have
    $$
    \bm{D}h(\bm{\xi}^*) = \begin{pmatrix}
       0 & 0 & 0 & 0 & - \frac{\beta}{q^*} & \frac{1}{q^*} \\
       - \frac{\beta}{q^*} & \frac{1}{q^*} & 0 & 0 & \frac{\beta}{q^*} & -\frac{1}{q^*} \\
        0 & 0 & - \frac{\beta}{q^*} & \frac{1}{q^*} & \frac{\beta}{q^*} & -\frac{1}{q^*} 
    \end{pmatrix}
$$
has linearly independent rows, thus under $q \in (0,1)$ and $\beta \in (0,1)$, 
$$
    \lambda_{\min}\left\{ \bm{D}h(\bm{\xi}^*) \bm{D}h(\bm{\xi}^*)^{\tp} \right\} > 0
$$
Then
$$
    \lambda_{\min}(\Sigma_{\beta}) > \lambda_{\min}(\Sigma_{u}) \cdot \lambda_{\min}\left\{ \bm{D}h(\bm{\xi}^*) \bm{D}h(\bm{\xi}^*)^{\tp} \right\} \cdot \min\{a_{\kappa}^{-1},b_{\kappa}^{-1},c_{\kappa}^{-1}\} > 0,
$$
which is sufficient to show consistency of $\widehat{\Theta}_{\beta}$ in operator norm. 

Thus, under $\gamma=0$,
\begin{align*}
        Q_{n,\bar{\beta}}(\beta) &=(n(\hat{\beta}_{\mathrm{w}} - \beta), \sqrt{n}\hat{\gamma}_1, \sqrt{n}\hat{\gamma}_2)^{\tp} (\widehat{\Theta}_{\beta}  - \Sigma_{\beta}^{-1} )(n(\hat{\beta}_{\mathrm{w}} - \beta), \sqrt{n}\hat{\gamma}_1, \sqrt{n}\hat{\gamma}_2) + \cdots\\
        &\quad \cdots + (n(\hat{\beta}_{\mathrm{w}} - \beta), \sqrt{n}\hat{\gamma}_1, \sqrt{n}\hat{\gamma}_2)^{\tp} \Sigma^{-1}_{\beta} (n(\hat{\beta}_{\mathrm{w}} - \beta), \sqrt{n}\hat{\gamma}_1, \sqrt{n}\hat{\gamma}_2) \\
        &\convd \chi^2_2,
\end{align*}
by Slutsky's Theorem, since $\lVert  (n(\hat{\beta}_{\mathrm{w}} - \beta), \sqrt{n}\hat{\gamma}_1, \sqrt{n}\hat{\gamma}_2) \rVert_2 = O_{\prob}(1)$.
\end{proof}

\begin{proof}[Proof of Corollary~\ref{cor:mu_test}]
    We propose a similar plug in estimator for $\Sigma_{\mu}^{-1}$,
    $$
        \widehat{\Theta}_{\mu} =\left\{ \bm{D}g(\hat{\bm{\xi}}) \begin{pmatrix}
        \frac{n^2}{N_{1}} \Sigma_u(\hat{u}_{1,1},\hat{u}_{2,1}) & \bm{0}  & \bm{0} \\ 
        \bm{0} & \frac{n^2}{N_{2}} \Sigma_u(\hat{u}_{1,2},\hat{u}_{2,2}) & \bm{0}  \\ \bm{0} & \bm{0} & \frac{n^2}{N_{\mathrm{b}}} \Sigma_u(\hat{u}_{1,\mathrm{b}},\hat{u}_{2,\mathrm{b}})
    \end{pmatrix} \bm{D}g(\hat{\bm{\xi}})^{\tp} \right\}^{-1}
    $$
    Then the proof is identical to the proof of Corollary~\ref{cor:beta_test}, noting that  $\bm{D}g(\bm{\xi}^*)$ has linearly independent rows for $q \in (0,1)$ and $\beta \in (0,1)$, so that
    $$
        \lambda_{\min}\left\{ \bm{D}g(\bm{\xi}^*) \bm{D}g(\bm{\xi}^*)^{\tp} \right\} > 0.
    $$
\end{proof}

These supporting results are sufficient to give a simple proof of Theorem~3 in the manuscript.

\begin{proof}[Proof of Theorem~3 of the manuscript.]
    Write the null parameter space as
    $$
        H_{0,\bar{\beta}} 
        = \{\theta :\gamma_1=\gamma_2=\beta=0\} \cup \{\theta : \gamma_1=\gamma_2=\mu_1=\mu_2=0\} \cup \left( \cup_{b \in (0,\bar{\beta}]} \{\theta: \beta = b, \gamma_1=\gamma_2=0 \} \right).
    $$
    By the intersection-union principle, we construct a test of this null hypothesis by finding the intersection of the individual rejection regions.

    For $\{\theta :\gamma_1=\gamma_2=\beta=0\}$, the test with test function $\mathbb{I}(Y \neq Y^*)$ is a level $\alpha$ test for any $\alpha$. Moreover, in the remainder of the proof we may assume $Y \neq Y^*$, so the pivotal quantities $Q_{n,\mu}$ and $Q_{n,\beta}$ are well defined.

    For $\{\theta : \gamma_1=\gamma_2=\mu_1=\mu_2=0\}$, by Corollary~\ref{cor:mu_test}, the test with test function
    $$
        \mathbb{I}\left( Q_{n,\mu} > c^{(4)}_{\alpha} \right) 
    $$
    has asymptotic level $\alpha$.

    For an individual element $\{\theta: \beta = b, \gamma_1=\gamma_2=0 \}$ for $b \in (0,\bar{\beta}]$, 
    by Corollary~\ref{cor:mu_test}, the test with test function
    $$
        \mathbb{I}\left( Q_{n,\bar{\beta}}(b) > c^{(3)}_{\alpha} \right) 
    $$
    has asymptotic level $\alpha$. 

    Combining all these rejection regions exactly gives the test defined by $\Psi_{\alpha}$ in the statement of Theorem~3 of the manuscript, and establishes that this test controls type I errors asymptotically at level $\alpha$, by the intersection-union principle.
\end{proof}

\subsection{Technical details for directed networks} \label{app:directed}

Suppose the construct network $A$ is undirected and we observe $Y$ with directed type II (recall) errors, following a group structure
$$
    \mathbb{P}(Y_{ij}=1 \vert A_{ij}=1) = 1 - \beta_{c_ic_j}
$$
for 2 groups ($c_i \in \{1,2\}$). There are $4$ error rate parameters in $(0,1)$. 

Consider the unobserved network generated from SBM$(n,\kappa,B)$ with 
$$
B = 
\begin{pmatrix}
    p_{11} & p_{12} \\
    p_{21} & p_{22}
\end{pmatrix}
$$
with $p_{12} = p_{21}$. 
Thefore, in total, we have seven parameters to estimated under the directed noisy network setting. 

Under this directed noisy network setting, we aim to estimate seven unknown parameters. We define the following estimators:
\begin{enumerate}
\item $\hat{u}_{11,d} = \frac{1}{n_1(n_1-1)} \sum_{c_i=c_j=1}  Y_{ij}$ — observed density within group 1.
\item $\hat{u}_{11,a} = \binom{n_1}{2}^{-1} \sum_{c_i=c_j=1, i < j}  \lvert Y_{ij} - Y_{ji} \rvert$ — asymmetry within group 1.
\item $\hat{u}_{22,d} = \frac{1}{n_2(n_2-1)} \sum_{c_i=c_j=2}  Y_{ij}$ — density within group 2.
\item $\hat{u}_{22,a} = \binom{n_2}{2}^{-1} \sum_{c_i=c_j=2, i < j}  \lvert Y_{ij} - Y_{ji} \rvert$ — asymmetry within group 2.
\item $\hat{u}_{12,d} = \frac{1}{n_1n_2} \sum_{c_i=1,c_j=2} Y_{ij}$ — density from group 1 to group 2.
\item $\hat{u}_{21,d} = \frac{1}{n_1n_2} \sum_{c_i=2,c_j=1} Y_{ij}$ — density from group 2 to group 1.
\item $\hat{u}_{12,a} = \frac{1}{n_1n_2} \sum_{c_i=1,c_j=2} \lvert Y_{ij} - Y_{ji} \rvert$ — asymmetry between groups.
\end{enumerate}

These statistics have expectations:
\begin{enumerate}
    \item $\mathbb{E}\hat{u}_{11,d} = (1 - \beta_{11})p_{11}$
    \item $\mathbb{E}\hat{u}_{11,a} =2\beta_{11}(1-\beta_{11})p_{11}$
    \item $\mathbb{E}\hat{u}_{22,d} = (1 - \beta_{22})p_{22}$
    \item $\mathbb{E}\hat{u}_{22,a} = 2\beta_{22}(1-\beta_{22})p_{22}$
    \item $\mathbb{E}\hat{u}_{12,d} = (1 - \beta_{12})p_{12}$
    \item $\mathbb{E}\hat{u}_{21,d} = (1 - \beta_{21})p_{12}$ 
    \item $\mathbb{E}\hat{u}_{12,a} = [\beta_{12}(1-\beta_{21}) + \beta_{21}(1-\beta_{12})]p_{12}$
\end{enumerate}

By method of moments, the estimators for the unknown parameters are:
\begin{align*}
    \hat\beta_{11} &= \frac{\hat{u}_{11,a}}{2\hat{u}_{11,d}}, \quad
    \hat{p}_{11} = \frac{2\hat{u}_{11,d}^2}{2\hat{u}_{11,d} - \hat{u}_{11,a}}, \\ 
    \hat\beta_{22} &= \frac{\hat{u}_{22,a}}{2\hat{u}_{22,d}}, \quad
    \hat{p}_{22} = \frac{2\hat{u}_{22,d}^2}{2\hat{u}_{22,d} - \hat{u}_{22,a}}, \\
    \hat\beta_{12} &= \frac{\hat{u}_{21,d} - \hat{u}_{12,d} + \hat{u}_{12,a}}{2\hat{u}_{21,d}}, \quad  \hat\beta_{21} = \frac{\hat{u}_{12,d} - \hat{u}_{21,d} + \hat{u}_{12,a}}{2\hat{u}_{12,d}}, \\
    \hat{p}_{12} &= \frac{2\hat{u}_{12,d}\hat{u}_{21,d}}{\hat{u}_{12,d} + \hat{u}_{21,d} - \hat{u}_{12,a}}.
\end{align*}

We now consider testing for systematic bias in edge recall rates across groups. The hypotheses are:
$$
H_{1,0}: \beta_{12} = \beta_{21}; \qquad H_{1,1}: \beta_{12} < \beta_{21}.
$$
The test statistic is based on the asymptotic distribution of $\hat{\beta}_{12} - \hat{\beta}_{21}$:
\begin{theorem} \label{thm:directed_beta}
    Under the directed noisy network model, we have 
    $$
        \sqrt{n_1n_2} 
        \left[(\hat{\beta}_{12} - \hat{\beta}_{21}) - (\beta_{12} - \beta_{21})\right] 
        \convd \mathcal{N}\left( 0, \sigma_{\operatorname{d}}^2\right),
    $$
    for a postive variance $\sigma_{\operatorname{d}}^2$.
\end{theorem}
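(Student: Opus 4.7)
The plan is to recast $\hat\beta_{12}-\hat\beta_{21}$ as a smooth transformation of three empirical moments and combine a multivariate CLT with the delta method, in close analogy to the proofs of Theorems~\ref{thm:pivot_beta} and~\ref{thm:pivot} from the undirected case. Conditional on the class labels $\bm{c}$, the directed pairs $(i,j)$ with $c_i=1$, $c_j=2$ give rise to $n_1n_2$ mutually independent bounded random vectors
$$
    V_{ij} := \bigl(Y_{ij},\; Y_{ji},\; \lvert Y_{ij}-Y_{ji}\rvert\bigr),
$$
each with common mean $\mu^* = \bigl((1-\beta_{12})p_{12},\;(1-\beta_{21})p_{12},\;[\beta_{12}(1-\beta_{21})+\beta_{21}(1-\beta_{12})]p_{12}\bigr)$ and covariance $\Sigma_d := \mathrm{Cov}(V_{ij})$. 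Since $\kappa\in(0,1)$ gives $n_1n_2\to\infty$ almost surely, applying the multivariate Lindeberg-Feller CLT for triangular arrays yields
$$
    \sqrt{n_1 n_2}\,\Bigl(\bigl(\hat u_{12,d},\,\hat u_{21,d},\,\hat u_{12,a}\bigr)^{\tp} - \mu^*\Bigr) \convd \mathcal{N}(\mathbf{0},\Sigma_d).
$$

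Next, I would observe that
$$
    \hat{\beta}_{12} - \hat{\beta}_{21} = h(\hat u_{12,d},\,\hat u_{21,d},\,\hat u_{12,a}), \quad h(x,y,z) := \frac{y-x+z}{2y} - \frac{x-y+z}{2x},
$$
and $h(\mu^*)=\beta_{12}-\beta_{21}$. Since $p_{12}>0$ ensures $x^*,y^*>0$, $h$ is continuously differentiable near $\mu^*$, so the delta method gives
$$
    \sqrt{n_1 n_2}\,\bigl[(\hat\beta_{12}-\hat\beta_{21})-(\beta_{12}-\beta_{21})\bigr] \convd \mathcal{N}(0,\sigma_d^2),
$$
with $\sigma_d^2 = \nabla h(\mu^*)^{\tp}\,\Sigma_d\,\nabla h(\mu^*)$.

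To upgrade this to strict positivity $\sigma_d^2>0$ claimed by the theorem, I would verify two facts. First, $\Sigma_d \succ 0$: under $p_{12}\in(0,1)$ and $\beta_{12},\beta_{21}\in(0,1)$, all four outcomes of $(Y_{ij},Y_{ji})$ occur with positive probability, and the map $(a,b)\mapsto(a,b,|a-b|)$ sends these four points to three affinely independent points of $\mathbb{R}^{3}$ (the determinant $\det\begin{pmatrix}0&1&1\\1&0&1\\1&1&0\end{pmatrix}=2$), so $V_{ij}$ has a genuinely $3$-dimensional support. Second, $\nabla h(\mu^*)\neq\mathbf{0}$: a short calculation gives $\partial_z h(\mu^*) = (2y^*)^{-1}-(2x^*)^{-1}$, which is nonzero whenever $\beta_{12}\neq\beta_{21}$.

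The main obstacle is the degenerate boundary case $\beta_{12}=\beta_{21}=:\beta$, where $x^*=y^*$ forces $\partial_z h(\mu^*)=0$ and the delta-method variance could in principle collapse. Plugging $x^*=y^*=(1-\beta)p_{12}$ and $z^*=2p_{12}\beta(1-\beta)$ into $\partial_x h$ yields $\partial_x h(\mu^*)=-1/p_{12}\neq 0$ (and analogously for $\partial_y h$), so $\nabla h(\mu^*)$ remains nonzero and $\sigma_d^2>0$ in that case as well. Finally, because the conditional CLT statement does not depend on the realized $(n_1,n_2)$ beyond $n_1n_2\to\infty$, and $n_1n_2/n^2\convp\kappa(1-\kappa)$, Slutsky's theorem transfers the conclusion to the unconditional distribution.
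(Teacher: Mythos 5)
Your proposal matches the paper's proof essentially step for step: both express $\hat{\beta}_{12}-\hat{\beta}_{21}$ as the same smooth function $h(x,y,z)=\frac{y-x+z}{2y}-\frac{x-y+z}{2x}$ of the three between-group moments $(\hat{u}_{12,d},\hat{u}_{21,d},\hat{u}_{12,a})$, establish their joint asymptotic normality via a CLT for bounded independent summands, and conclude by the delta method with $\sigma_{\operatorname{d}}^2=\nabla h(\mu^*)^{\tp}\,\Sigma_d\,\nabla h(\mu^*)$. Your explicit verification that $\sigma_{\operatorname{d}}^2>0$ (positive definiteness of $\Sigma_d$ from the affinely independent support points of $(Y_{ij},Y_{ji},|Y_{ij}-Y_{ji}|)$, and the nonvanishing gradient $\partial_x h(\mu^*)=-1/p_{12}$ even in the degenerate case $\beta_{12}=\beta_{21}$) is a correct addition that the paper asserts but does not check.
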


\begin{proof}
    In the proof, we establish a stronger version of the joint asymptotic normality for the empirical moments of $\hat{u}_{11,d}, \hat{u}_{11,a}, \hat{u}_{22,d}, \hat{u}_{22,a}, \hat{u}_{12,d}, \hat{u}_{21,d}, \hat{u}_{12,a}$. This result may also be useful for more general hypothesis testing problems beyond the one considered here.

    First, if the unobserved network $A$ is generated from SBM$(n, B, \kappa)$ with 
    $$
    B = \begin{pmatrix}
        p_{11} & p_{12}\\
        p_{21} & p_{22}
    \end{pmatrix},
    $$
    where $p_{12} = p_{21}$. 

    According to our model assumption, we observed a noisy directed network $Y$ with 
    $$
    \mathbb{P}(Y_{ij}=1 \vert A_{ij}=1) = 1 - \beta_{c_ic_j}. 
    $$
    Marginally, we can think about $Y$ generated from directed SBM$(n,B^*,\kappa)$ with 
    $$
    B^* = \begin{pmatrix}
        p_{11}^* & p_{12}^* \\
        p_{21}^* & p_{22}^* \\
    \end{pmatrix}
    $$
    with 
    \begin{align*}
        p_{11}^* &= p_{11}(1-\beta_{11}), \quad
        p_{12}^* = p_{12}(1-\beta_{12}), \\ 
        p_{21}^* &= p_{21}(1-\beta_{21}), \quad
        p_{22}^* = p_{22}(1-\beta_{22}).
    \end{align*}
    Here, we abuse the notation and definition of SBM as $Y_{ij}$ and $Y_{ji}$ are dependent by nature and $Y$ is not exactly from a SBM. 


    Similar to Lemma 2 in the manuscript, we can show the following asymptotic results of MOM estiamtors: 
    $$
    \sqrt{n_1(n_1-1)}
    \begin{pmatrix}
        \hat{u}_{11,d} - p_{11}^* \\
        \hat{u}_{11,a} - 2p_{11}^*\beta_{11} 
    \end{pmatrix} 
    \convd \mathcal{N}\left( \bm{0}, \Sigma_{11} \right),
    $$
    with 
    $$
    \Sigma_{11} = 
    \begin{pmatrix}
       p_{11}^*(1-p_{11}^*) & 2p_{11}^*\beta_{11}(1-2p_{11}^*) \\  
       2p_{11}^*\beta_{11}(1-2p_{11}^*) & 4p_{11}^*\beta_{11}(1-2p_{11}^*\beta_{11})
    \end{pmatrix},
    $$
    $$
    \sqrt{n_2(n_2-1)}
    \begin{pmatrix}
        \hat{u}_{22,d} - p_{22}^* \\
        \hat{u}_{22,a} - 2p_{22}^*\beta_{22} 
    \end{pmatrix} 
    \convd \mathcal{N}\left( \bm{0}, \Sigma_{22} \right),
    $$
    with 
    $$
    \Sigma_{22} = 
    \begin{pmatrix}
       p_{22}^*(1-p_{22}^*) & 2p_{22}^*\beta_{22}(1-2p_{22}^*) \\  
       2p_{22}^*\beta_{22}(1-2p_{22}^*) & 4p_{22}^*\beta_{22}(1-2p_{22}^*\beta_{22})
    \end{pmatrix},
    $$
    and 
    $$
    \sqrt{n_1n_2}
    \begin{pmatrix}
        \hat{u}_{12,d} - p_{12}^* \\
        \hat{u}_{21,d} - p_{21}^* \\
        \hat{u}_{12,a} - p_{12}^*\beta_{21} - p_{21}^*\beta_{12} 
    \end{pmatrix}  
    \convd \mathcal{N}\left( \bm{0}, \Sigma_{33} \right),
    $$
    with 
    $$
    \hspace{-1.2cm}
    \Sigma_{33} = 
    \begin{pmatrix}
       p_{12}^*(1-p_{12}^*) & p_{12}^*p_{21}^*/p_{12} - p_{12}^*p_{21}^* & 
       p_{12}^*\beta_{21} - p_{12}^*\left(p_{12}^*\beta_{21} +  p_{21}^*\beta_{12}\right)
       \\  
       p_{12}^*p_{21}^*/p_{12} - p_{12}^*p_{21}^* & p_{21}^*(1-p_{21}^*) & 
       p_{21}^*\beta_{12} - p_{21}^*\left( p_{12}^*\beta_{21} + p_{21}^*\beta_{12} \right)
       \\
       p_{12}^*\beta_{21} - p_{12}^*\left(p_{12}^*\beta_{21} +  p_{21}^*\beta_{12}\right) & p_{21}^*\beta_{12} - p_{21}^*\left( p_{12}^*\beta_{21} + p_{21}^*\beta_{12} \right) & \left( p_{12}^*\beta_{21} + p_{21}^*\beta_{12}\right)\left( 1- p_{12}^*\beta_{21} - p_{21}^*\beta_{12} \right)
    \end{pmatrix}.
    $$    
    For testing the null $\beta_{12} = \beta_{21}$, we consider the asymptotic distribution of $\hat\beta_{12} - \hat\beta_{21}$
    where $\hat\beta_{12}$ and $\hat\beta_{21}$ defined above as 
    $$
        \hat\beta_{12} = \frac{\hat{u}_{21,d} - \hat{u}_{12,d} + \hat{u}_{12,a}}{2\hat{u}_{21,d}}, \quad  \hat\beta_{21} = \frac{\hat{u}_{12,d} - \hat{u}_{21,d} + \hat{u}_{12,a}}{2\hat{u}_{12,d}}.
    $$
    To apply the delta method, we consider the transformation function 
    $$
        h(x,y,z) = \frac{y-x+z}{2y} - \frac{x-y+z}{2x} 
    $$
    with first derivative 
    $$
        \bm{D}h(x,y,z) = \frac{1}{2}\left(
            -\frac{y}{x^2} + \frac{z}{x^2} - \frac{1}{y}, \quad 
            \frac{1}{x} - \frac{z}{y^2} + \frac{x}{y^2}, \quad 
            \frac{1}{y} - \frac{1}{x} \right).
    $$
    For the expectation, we have 
    $$
    h(p_{12}^*, p_{21}^*, p_{12}^*\beta_{21} + p_{21}^*\beta_{12}) = \beta_{12} - \beta_{21}. 
    $$
    Finally we have 
    $$
        \sqrt{n_1n_2} 
        \left[(\hat{\beta}_{12} - \hat{\beta}_{12}) - (\beta_{12} - \beta_{21})\right] 
        \convd \mathcal{N}\left( 0, \sigma_{\operatorname{d}}^2\right)
    $$
    with 
    $$
    \sigma_{\operatorname{d}}^2 = \bm{D}h(x,y,z)\Sigma_{33}\bm{D}h(x,y,z)^T. 
    $$
    
\end{proof}
Using Theorem~\ref{thm:directed_beta}, we can derive the following result for hypothesis testing.
\begin{corollary} \label{cor:directed_test}
    Under the assumptions of Theorem~\ref{thm:directed_beta}, suppose $\hat{\sigma}_{\operatorname{d}}$ is a consistent estimator of $\sigma_{\operatorname{d}}$. Then under $\beta_{12}=\beta_{21}$, we have
    $$
        Z = \frac{\sqrt{n_1n_2}\left( \hat{\beta}_{12} - \hat{\beta}_{21} \right)}{\hat{\sigma}_{\operatorname{d}}} \convd \cN(0,1). 
    $$
\end{corollary}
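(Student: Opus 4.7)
The plan is to combine Theorem~\ref{thm:directed_beta} with Slutsky's theorem in essentially three moves: substitute the null, build a consistent variance estimator, and take the ratio.

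First, under the null $\beta_{12} = \beta_{21}$, the centering term in Theorem~\ref{thm:directed_beta} vanishes, so that theorem specializes to
$$
\sqrt{n_1 n_2}\,(\hat{\beta}_{12} - \hat{\beta}_{21}) \convd \mathcal{N}\!\left(0,\sigma_{\operatorname{d}}^2\right).
$$
Second, I would construct an explicit consistent plug-in estimator $\hat{\sigma}_{\operatorname{d}}^2$ from the method-of-moments estimators derived in Appendix~\ref{app:directed}: namely, form $\bm{D}h(\hat{u}_{12,d},\hat{u}_{21,d},\hat{u}_{12,a})\,\widehat{\Sigma}_{33}\,\bm{D}h(\hat{u}_{12,d},\hat{u}_{21,d},\hat{u}_{12,a})^{\tp}$, where $\widehat{\Sigma}_{33}$ is obtained from the expression for $\Sigma_{33}$ in the proof of Theorem~\ref{thm:directed_beta} by substituting $\hat{p}_{12}^* = \hat{u}_{12,d}$, $\hat{p}_{21}^* = \hat{u}_{21,d}$, and the plug-in estimators of $\beta_{12}$ and $\beta_{21}$. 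Since all seven empirical moments converge in probability to their population counterparts by the law of large numbers (with asymptotically negligible dependence induced by shared latent group structure, handled as in Appendix~\ref{app:detection}), and since $\bm{D}h$ and the entries of $\Sigma_{33}$ are rational functions with denominators bounded away from zero at the truth (using $p_{ij}\in(0,1)$ and $\beta_{ij}\in(0,1)$), the continuous mapping theorem yields $\hat{\sigma}_{\operatorname{d}}^2 \convp \sigma_{\operatorname{d}}^2$.

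Third, I apply Slutsky's theorem: the numerator $\sqrt{n_1 n_2}(\hat{\beta}_{12}-\hat{\beta}_{21})$ converges weakly to $\mathcal{N}(0,\sigma_{\operatorname{d}}^2)$ and $\hat{\sigma}_{\operatorname{d}} \convp \sigma_{\operatorname{d}}$, so their ratio converges in distribution to $\mathcal{N}(0,1)$, which is exactly the claimed $Z \convd \mathcal{N}(0,1)$.

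The main obstacle is verifying $\sigma_{\operatorname{d}}^2 > 0$, which is needed both to make the ratio well-defined and for the assumed consistency of $\hat{\sigma}_{\operatorname{d}}$ to be informative. Under the null, one can check that $p_{12}^* = p_{21}^* =: p^*$, and a direct evaluation gives $\bm{D}h = (1/p^*)(\beta-1,\,1-\beta,\,0)$ with $\beta := \beta_{12} = \beta_{21} \in (0,1)$, which is nonzero. Combined with positive definiteness of $\Sigma_{33}$ on the first two coordinates (since $\operatorname{Var}(Y_{ij})>0$ and the off-diagonal dependence between $Y_{ij}$ and $Y_{ji}$ does not induce perfect collinearity when $\beta \in (0,1)$), this ensures $\sigma_{\operatorname{d}}^2 > 0$. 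With this positivity in hand, the Slutsky step goes through without issue and the corollary follows.
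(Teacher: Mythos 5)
Your proof is correct and follows essentially the same route the paper intends: the paper gives no explicit proof of this corollary, treating it as an immediate consequence of Theorem~\ref{thm:directed_beta} plus Slutsky's theorem under the null (where the centering $\beta_{12}-\beta_{21}$ vanishes), which is precisely your argument. Your additional verification that $\sigma_{\operatorname{d}}^2>0$ under the null --- via the computation $\bm{D}h = (1/p^*)(\beta-1,\,1-\beta,\,0)$ and non-degeneracy of $\Sigma_{33}$ --- is a detail the paper only asserts implicitly in the statement of Theorem~\ref{thm:directed_beta}, and your construction of the plug-in $\hat{\sigma}_{\operatorname{d}}$ is not strictly required since the corollary assumes its consistency, but both checks are sound.
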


After achieving consistent estimators for the parameters in generating the construct network $A$, we can correct the representation of the minority using the asymptotic limits we derived in Section 4 of the manuscript. 

\section{Plug-in estimation of minority representation} \label{subsubsec:sbm_sims}

Here we provide some simulations to check the quality of our asymptotic approximation, and show how it can be used to produce a parametric plug-in estimator for the expected minority representation profile $\rho_K$. Each plot is based on 200 independent replicates.

Throughout we will refer to $R_K(\bm{c},A)$ as the ``empirical'' $\rho_K$ estimator, defined above as the proportion of minority group nodes among the top $K$ according to degree ranking.
By Theorem~2 of the manuscript, $R_{\lfloor nz \rfloor}(\bm{c},A)$ converges in probability for all $z$, and thus for a scaled sequence of $K$'s, the empirical $\rho_K$ estimator is consistent for $\rho^*(z)$. 
However, for a fixed $K$, we have no such convergence guarantee.

We will also refer to a ``parametric plug-in'' estimate for $\rho_K$, which uses the MLE under SBM:
\begin{align*}
    \hat{\kappa} &= \frac{n_1}{n} \\
    \hat{p}_1 &= \frac{1}{{n_1 \choose 2}} \sum_{i < j ~:~ c_i=c_j=1} A_{ij}  \\
    \hat{p}_2 &=  \frac{1}{{n_2 \choose 2}} \sum_{i < j ~:~ c_i=c_j=2} A_{ij}  \\
    \hat{q} &= \frac{1}{n_1n_2} \sum_{i < j ~:~ c_i=1,c_j=2} A_{ij} \\   
    \hat{\mu}_1 &= \sqrt{n}(\hat{p}_1 - \hat{q}) \\
    \hat{\mu}_2 &= \sqrt{n}(\hat{p}_2 - \hat{q}) \
\end{align*}
where
$$
    n_1 = \lvert \{1 \leq i \leq n : c_i=1\}, \quad n_2 = n - n_1.
$$
We then evaluate the asymptotic approximation
\begin{equation} \label{rho_plugin}
    \rho^*(K/n;\hat{\kappa},\hat{q},\hat{\mu}_1,\hat{\mu}_2)
\end{equation}
after plugging in these estimates.

It is easy to see that \eqref{rho_plugin} is a consistent estimator of the asymptotic minority representation profile for the related SBM.
From Theorem~2 of the manuscript, $\rho^*$ is continuous in both the quantile $z = \lim_{n \rightarrow \infty} K/n$ and the model parameters $(\kappa,q,\mu_1,\mu_2)$.
We also have that $\hat{\kappa}$ and $\hat{q}$ are consistent by the weak law of large numbers.
By the CLT, we have $\hat{p}_g - p_g = O(n^{-1})$ and $\hat{q} - q = O(n^{-1})$, thus $\hat{\mu}_g$ is also a consistent estimator of $\mu_g$ for $g=1,2$.

In Figure~\ref{fig:sbmrho}, we show the performance of these two estimators, and the asymptotic limits under SBM settings with $n=200$ and $n=1000$ nodes.
In Figure~\ref{fig:sbmrmse} we compare the two estimators in terms of RMSE for $\rho_K$.

\begin{figure}[ht]
        \centering
        \begin{subfigure}[b]{0.475\linewidth}
            \centering
            \includegraphics[width=\linewidth]{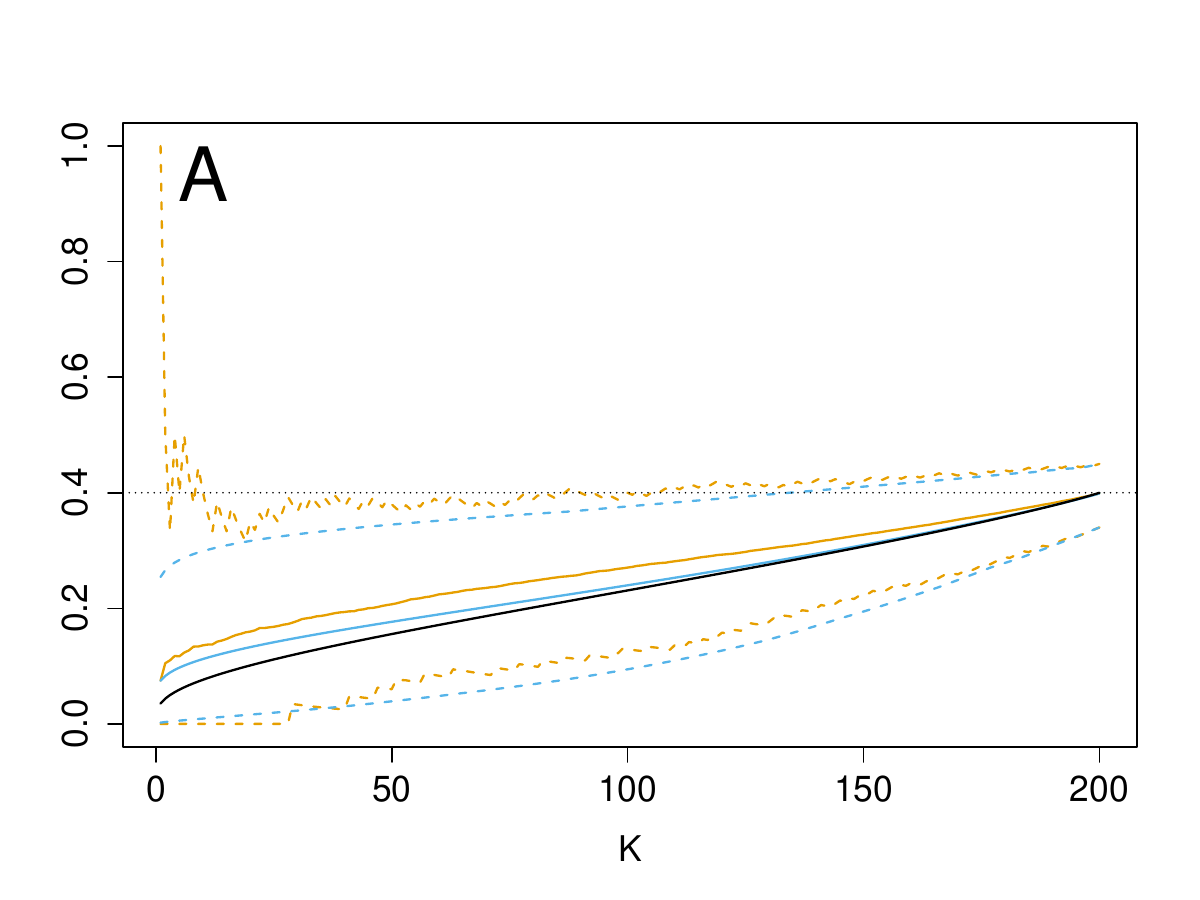}
            \caption{$n=200$, $\mu_1=\mu_2=1$.}
        \end{subfigure}
        \hfill
        \begin{subfigure}[b]{0.475\linewidth}  
            \centering 
            \includegraphics[width=\linewidth]{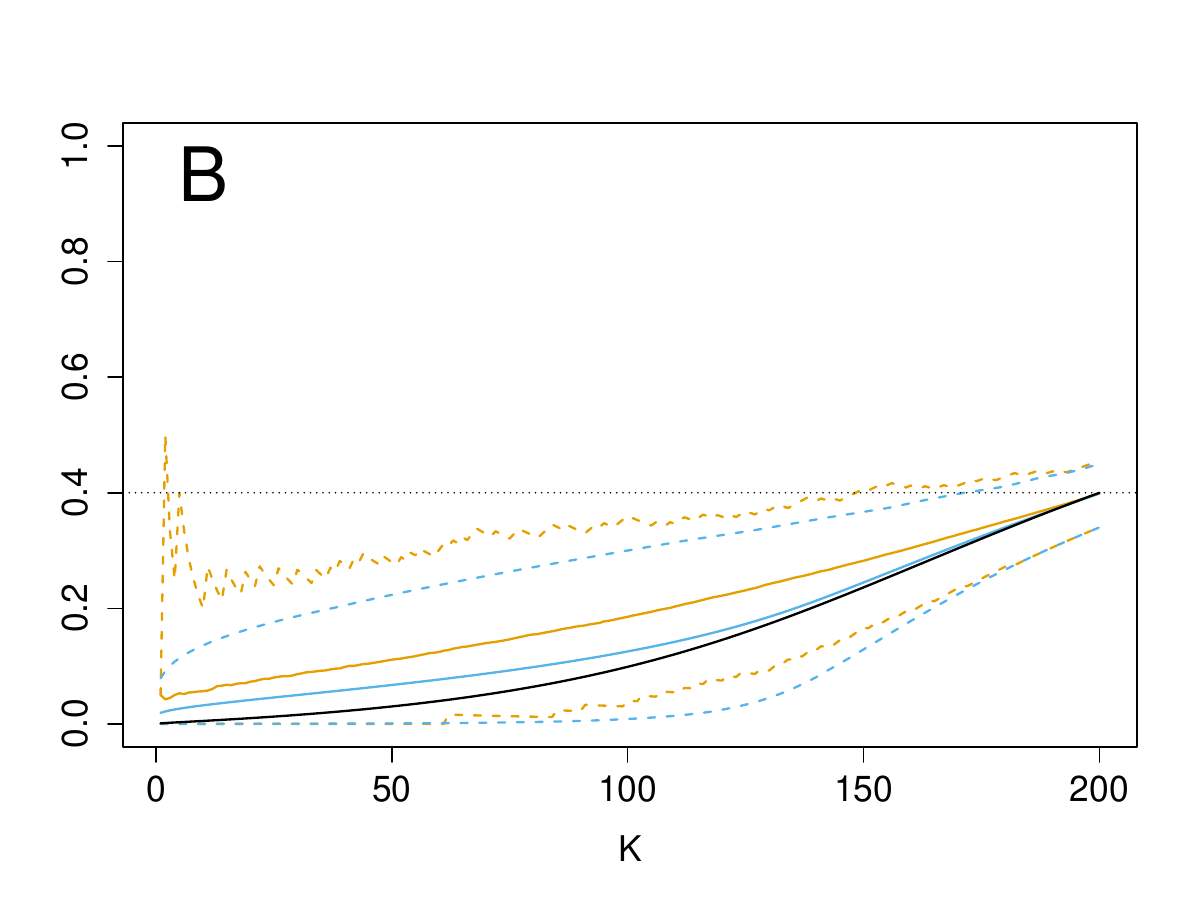}
            \caption{$n=200$, $\mu_1=\mu_2=2$.}  
        \end{subfigure}
        \medskip
        \begin{subfigure}[b]{0.475\linewidth}   
            \centering 
            \includegraphics[width=\linewidth]{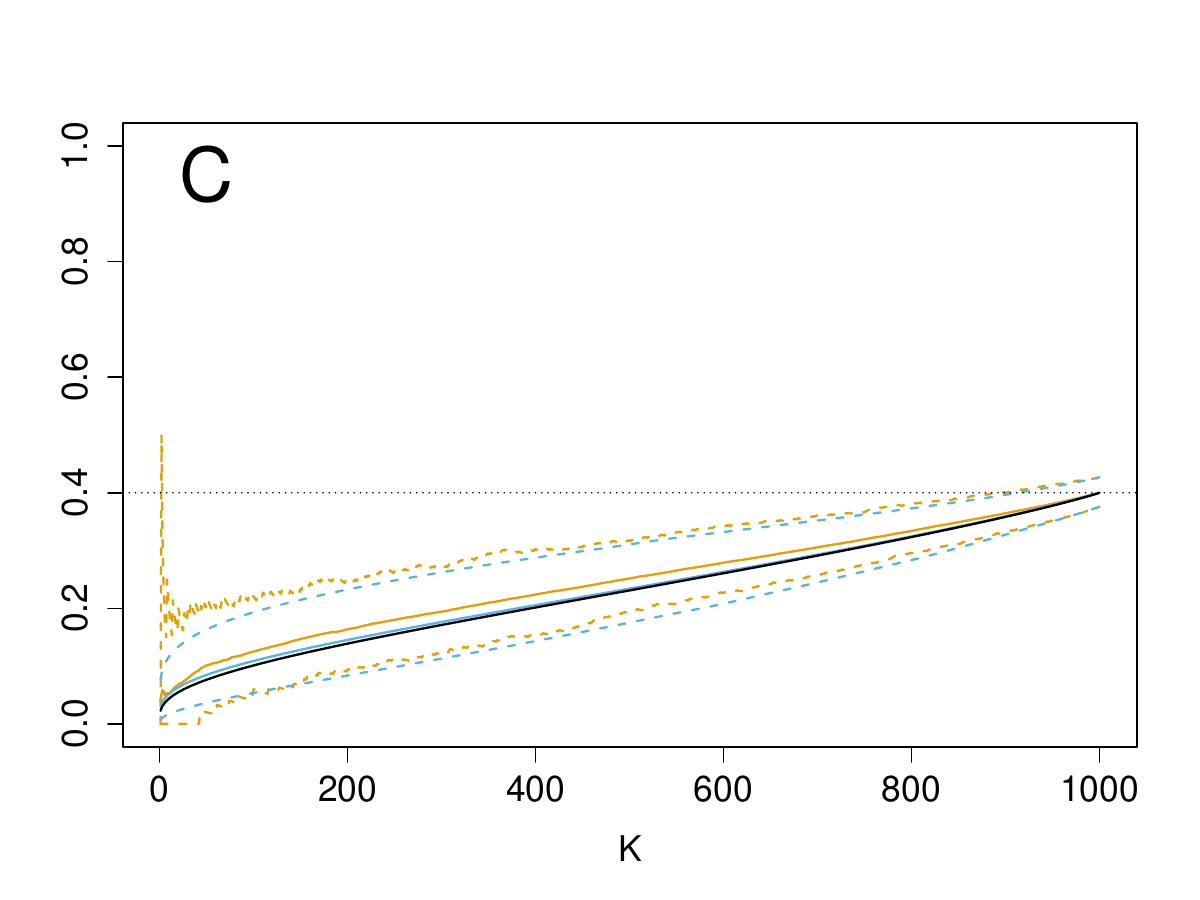}
            \caption{$n=1000$, $\mu_1=\mu_2=1$.}  
        \end{subfigure}
        \hfill
        \begin{subfigure}[b]{0.475\linewidth}   
            \centering 
            \includegraphics[width=\linewidth]{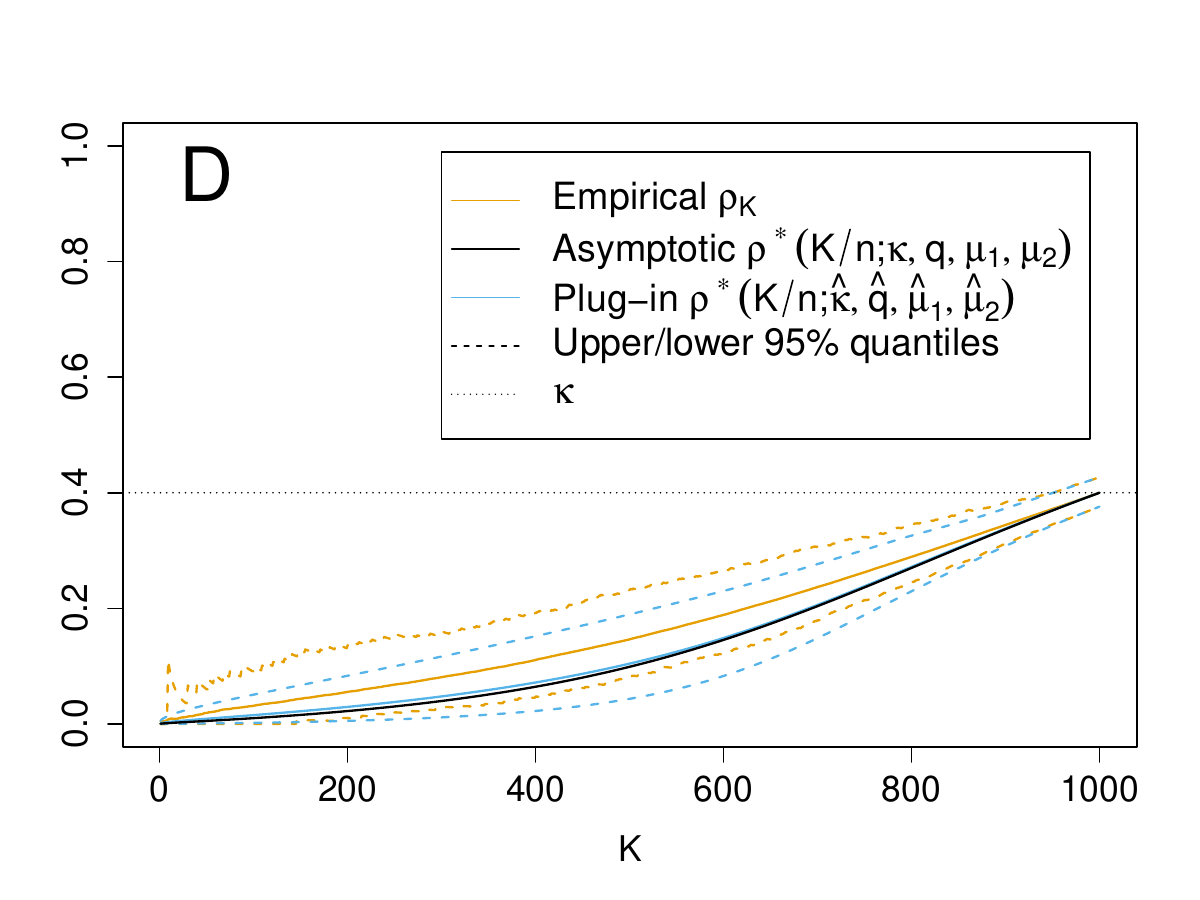}
            \caption{$n=1000$, $\mu_1=\mu_2=2$.} 
        \end{subfigure}
        \caption{Estimation and asymptotic approximation of minority representation profiles under SBM, $\kappa=0.4$, $q=0.05$.} 
        \label{fig:sbmrho}
\end{figure}

\begin{figure}[ht]
        \centering
        \begin{subfigure}[b]{0.475\linewidth}
            \centering
            \includegraphics[width=\linewidth]{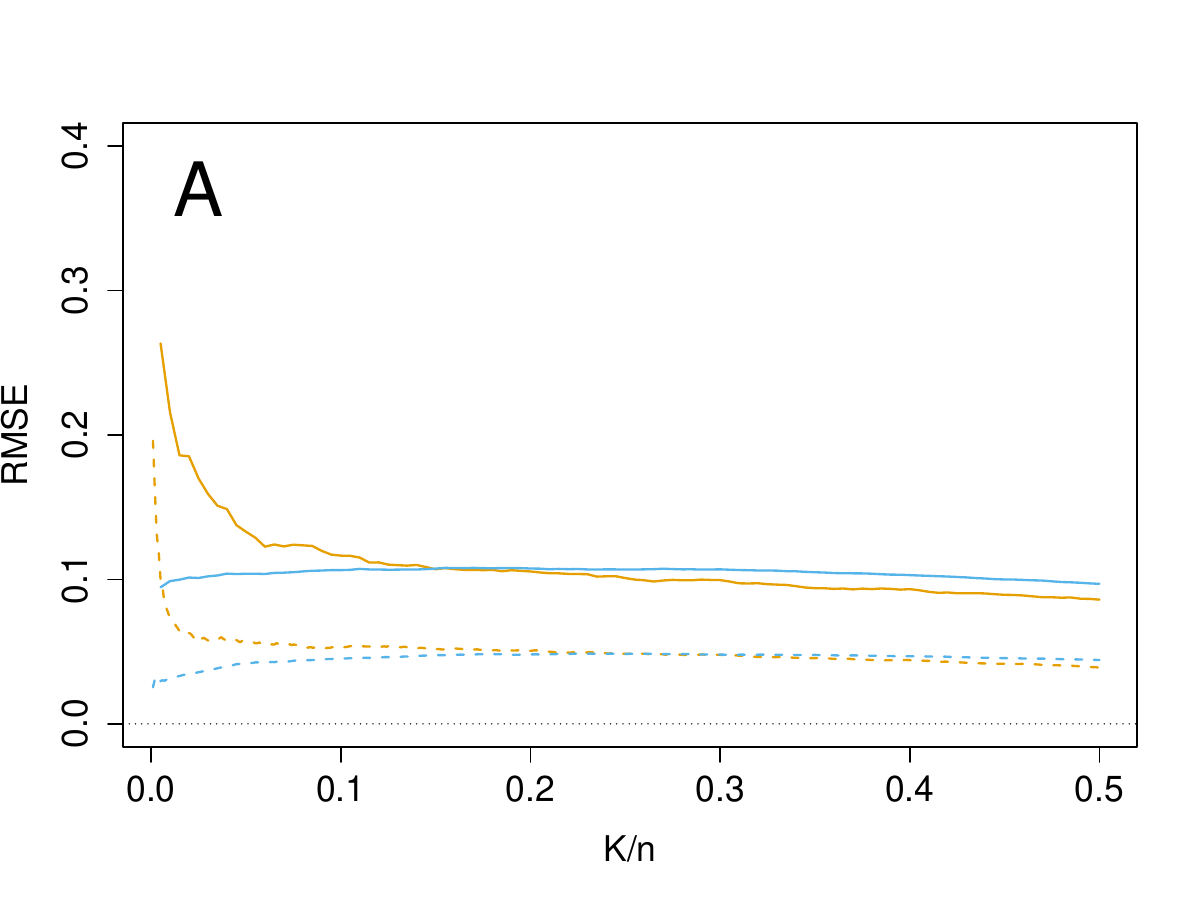}
            \caption{$\mu_1=\mu_2=1$}
        \end{subfigure}
        \hfill
        \begin{subfigure}[b]{0.475\linewidth}  
            \centering 
            \includegraphics[width=\linewidth]{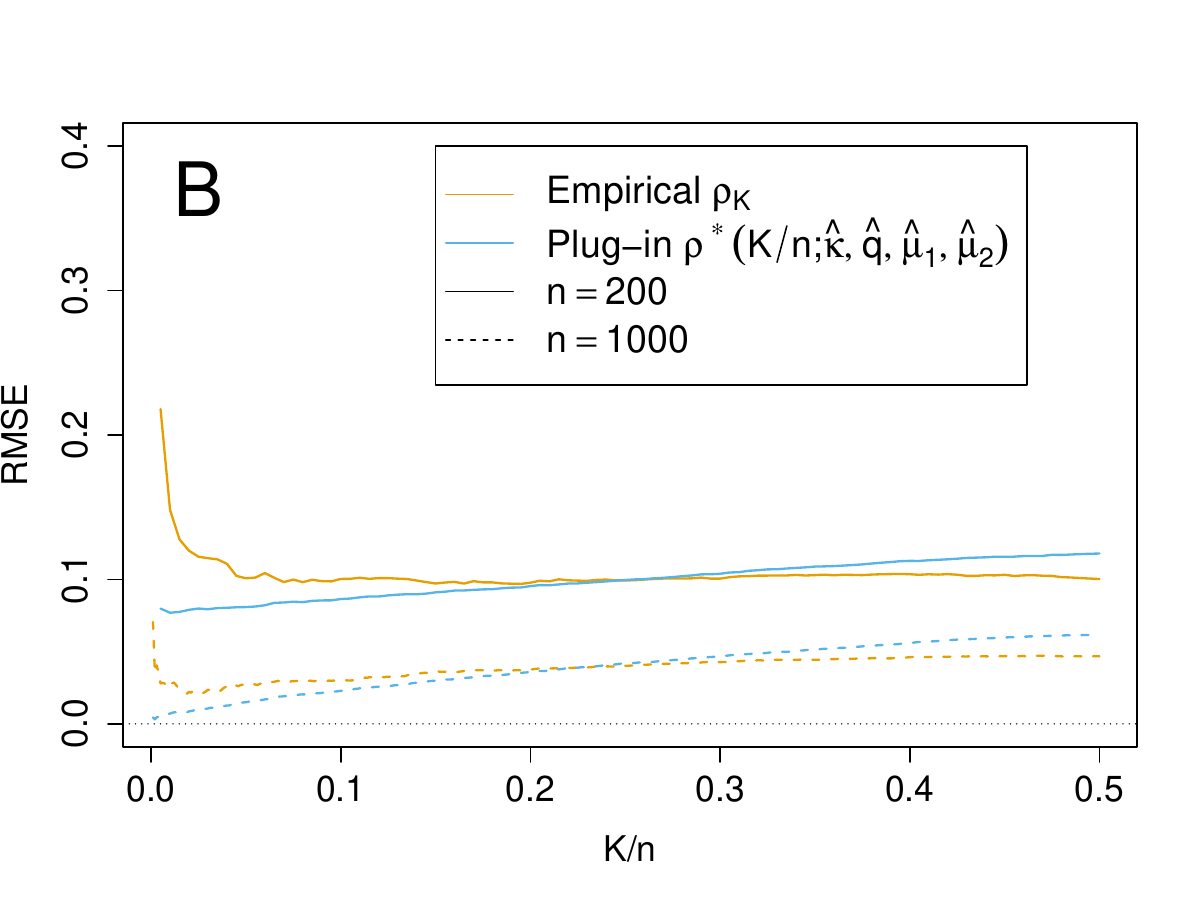}
            \caption{$\mu_1=\mu_2=2$}  
        \end{subfigure}
        \caption{RMSE performance of empirical and plug-in estimators of $\rho_K$ under SBM, $\kappa=0.4$, $q=0.05$, $n \in \{200,1000\}$.} 
        \label{fig:sbmrmse}
\end{figure}


For $\mu=1$, even for relatively small networks with $n=200$, the asymptotic approximation is satisfactory.
For $\mu=2$, the quality of the asymptotic approximation improves as $n$ increases.
Note that in all settings, the parametric plug-in agrees well with the limiting minority representation profile, as expected.
Estimation performance for $\rho_K$ is improved with larger numbers of nodes when $K$ is proportional to $n$, but the performance is comparable for small, fixed $K$.
From Figure~\ref{fig:sbmrmse}, we see that values of $K/n$ close to $0$ show the greatest advantage of estimation with the parametric plug-in compared to the empirical ranking, while the empirical $\rho_K$ profile tends to perform better for larger $K/n$ close to $1/2$.

\end{document}